\documentclass[11pt]{article}
\usepackage[margin=1in]{geometry}
\usepackage[final]{microtype}
\usepackage{epsfig}
\usepackage{graphics}
\usepackage{latexsym}
\usepackage{amsmath}
\usepackage{amsfonts}
\usepackage{amssymb}
\usepackage{mathrsfs}
\usepackage[dvipsnames]{xcolor}
\usepackage{amsthm}
\usepackage{xspace}
\usepackage{epstopdf}
\usepackage{float}
\usepackage{hyperref}
\usepackage{pgfplots}
\pgfplotsset{compat=1.18}
\usepackage{longtable}

\numberwithin{equation}{section}
\usepackage[ruled,vlined]{algorithm2e}
\SetArgSty{textrm}

\usepackage[english]{babel}
\usepackage[nottoc]{tocbibind}

\usepackage{caption}
\usepackage{subcaption}
\usepackage{pifont}
\usepackage{booktabs}
\usepackage{multirow}

\theoremstyle{plain}
\newtheorem{theorem}{Theorem}

\newtheorem{corollary}{Corollary}

\newtheorem{lemma}{Lemma}

\newtheorem{mechanism}{Mechanism}
\theoremstyle{definition}

\theoremstyle{remark}

\makeatletter
\@addtoreset{equation}{section}
\def\section{\@startsection {section}{1}{\z@}{-3.5ex plus -1ex minus
 -.2ex}{2.3ex plus .2ex}{\large\bf}}
\makeatother

\def\bfm#1{\mbox{\boldmath$#1$}}

\def\0{\bfm 0}

\DeclareMathAlphabet{\mathpzc}{OT1}{pzc}{m}{it}

\newcounter{my}

\newcounter{my2}

\newcounter{my3}

\newcounter{my4}

\newcounter{my5}

\newcounter{my6}

\allowdisplaybreaks

\DeclareUnicodeCharacter{2011}{\mbox{-}}
\begin{document}

\title{Strategyproof Mechanisms for Connecting Impassable Regions}
\author{Hau Chan$^{1}$\quad Jianan Lin$^{2}$\quad Chenhao Wang $^{3,4}$\\[0.75em]
$1$ University of Nebraska-Lincoln\\
$2$ Rensselaer Polytechnic Institute\\
$3$ Beijing Normal University-Zhuhai\\
$4$ Beijing Normal-Hong Kong Baptist University
}
\date{}
\maketitle

\begin{abstract}
We study strategyproof mechanisms for building a pathway between two regions of a line segment separated by an obstacle. Each of the $n$ agents has a private location within its region and may use either its original route to a facility or the new pathway, whose traversal cost is a fraction $k\in[0,1)$ of its length. We seek strategyproof (SP) and group-strategyproof (GSP) mechanisms that approximately minimize maximum cost or social cost. After characterizing optimal pathways for both objectives, we establish a tight deterministic maximum-cost approximation ratio of $\frac{2}{1+k}$ and a deterministic social-cost upper bound of $\frac{n}{1+k(n-1)}$, together with complementary lower bounds. Both upper bounds are achieved by GSP mechanisms. We then study randomized mechanisms under strategyproofness in expectation. A power-proportional mechanism achieves a social-cost approximation ratio at most $5$, independent of $n$ and $k$, with a tight guarantee of $3$ for this mechanism when $k=0$. We prove randomized lower bounds of $\frac{3+2k}{2+3k}$ for maximum cost and $\max\big\{1,\frac{285}{263+385k}\big\}$ for social cost, the latter for $n\ge7$. Finally, we improve several bounds for the real-line pathway model of [Chan and Wang, AAMAS 2023]. Our deterministic maximum-cost lower bound of $2$ matches the upper bound obtainable from [Qin, Fang, and Liu, COCOA 2024]. We strengthen the deterministic social-cost lower bound from $\frac32$ to $2$ under SP and to $\max\{2,n-1\}$ under GSP. For randomized social cost, we sharpen the guarantee of Chan and Wang's proportional mechanism from $6$ to $3$ and raise their lower bound from $1.02$ to $\frac{285}{263}\approx1.08365$ for $n\ge7$.
\end{abstract}

\section{Introduction}\label{sec:intro}

Road infrastructure supports everyday travel and access to essential services. 
It encompasses roads, bridges, sidewalks, and multi-use paths for pedestrians and different kinds of vehicles \cite{garber2009traffic}. 
It enables individuals in different regions to travel safely and efficiently between locations for daily activities (e.g., going to work, school, hospital, and shopping) \cite{amekudzi2007transportation,forkenbrock1990economic,narayanaswami2017urban}. 

While modern road infrastructure is designed with connectivity in mind, it can sometimes contain impassable regions and be disconnected due to inherent physical limitations \cite{jenelius2012road}.
For example, in areas prone to natural disasters (e.g., floods and earthquakes) that can damage roads, separate disconnected roadways are often built between such areas instead of continuous roadways \cite{kermanshah2016geographical}. 
In areas that contain protected species or have high wildlife movement, disconnected roadways are built to avoid these areas \cite{forman1998roads}. 
Similarly, large water bodies, mountainous terrain, and military zones can create natural or deliberate separations in road networks \cite{rodrigue2020geography,reggiani2015transport}.
As a result, due to the inability to travel through impassable regions (separated by such areas), individuals often confine their daily activities to their own regions \cite{jenelius2009network}.

When regions are disconnected, individuals face a fundamental trade-off in accessing essential services and amenities (e.g., schools and shopping centers): they must either rely solely on facilities within their own region, potentially at greater distances or lower quality, or forgo access to closer or superior facilities in neighboring regions that remain unreachable due to physical barriers \cite{syed2013traveling,rodrigue2020geography}.
By establishing a pathway that connects previously isolated regions, a social planner can expand individuals' accessible options, enabling them to choose facilities based on proximity and preference rather than regional boundaries, thereby improving overall welfare and resource utilization across the broader network \cite{talen1998assessing}.

%\bluecomment{In reality, travelers typically face multiple potential destinations and make choices based on relative travel cost and accessibility, rather than being forced to go to one predetermined point. Transportation planning and travel demand models explicitly incorporate such behavior through destination choice frameworks, where the utility of a destination depends on travel impedance and accessibility to opportunities beyond a single fixed destination \cite{ben1985discrete,bhat1998disaggregate}. Similarly, measures of spatial accessibility—such as gravity-based accessibility \cite{geurs2004accessibility} and two-step floating catchment area methods \cite{luo2003measures}—evaluate how the set of reachable facilities within travel cost thresholds affects access and welfare outcomes rather than just proximity to one outlet \cite{park2021review}. In our setting, agents are likewise allowed to use the introduced pathway only if it reduces their individual travel cost relative to local alternatives, better reflecting the behavioral insight that infrastructure improvements expand choice sets and influence travel decisions only when they afford lower cost options.}

\begin{figure}[H]
\vspace{-2em}
\centering
\includegraphics[width=10cm]{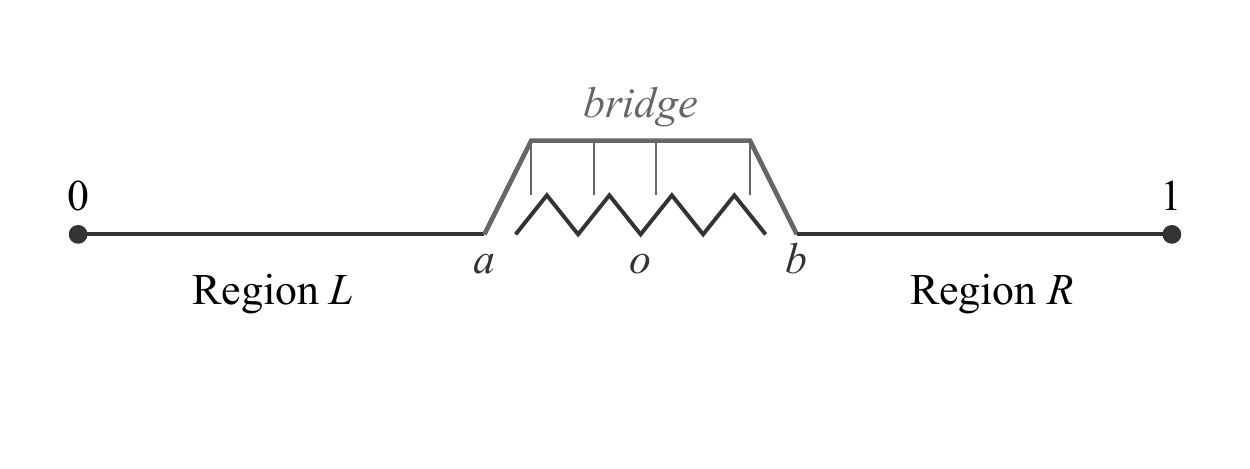}
\vspace{-3em}
\caption{A schematic diagram for building a pathway.}
\label{fig:obs}
\vspace{-1em}
\end{figure}

\paragraph*{Existing Efforts to Connect Impassable Regions} 

To connect impassable regions and potentially improve the reachability of individuals beyond their own regions, existing approaches \cite{chen2007network,cai2024spatial} focus on designing alternative pathways or bridges to connect the two regions. 
In most cases, these pathways serve as detours and become an integral part of the road infrastructure that connects regions over the long term to enhance existing connectivity. 
Figure \ref{fig:obs} illustrates a simplified example of building a pathway to connect two impassable regions on a bounded road segment divided by an obstacle. 
The two impassable regions (Region L and Region R) were originally disconnected due to physical limitations, which created an obstacle $o$, making direct travel between them impossible. 
Therefore, the social planner aims to build a new pathway with endpoints $(a, b)$ to connect the two regions.

Existing algorithmic and mechanism design studies focus on identifying the appropriate endpoints $(a, b)$ for the pathway to minimize specific cost objectives based on points in the regions. From the algorithmic perspective, related studies \cite{leizhen1999computing,Kim1998Linear,kim2001computing,tan2000optimal,tan2002finding} involve developing efficient algorithms for constructing pathways between convex polygons that often minimize the maximum distance between all points in the regions. 
To incorporate agents' preferences into pathway construction, recent mechanism design studies \cite{Chan023,chan2025mechanism,qin2024mechanism} use agents' starting locations to determine the pathway endpoints. 
Because starting locations are private information and agents can misreport them, these studies focus on designing strategyproof mechanisms to elicit agent starting positions truthfully and determine endpoints that approximately optimize cost objectives based on minimizing the distances of each agent's starting position from one region to another region using the pathway.  
In this paper, to account for agents' input, we build upon existing mechanism design studies above for building pathways that focus on improving the connectivity of agents, taking into account their distances from their starting positions to their own regions and to other regions via the pathway.

\subsection{Our Contributions} 
We consider the mechanism design perspective of constructing (approximately) optimal pathways between two regions that are separated by obstacles, with a focus on improving the connectivity of the agents to both regions. 
Following the setting of \cite{chan2025mechanism}, we consider a line segment, represented by the interval $[0,1]$. 
An obstacle $o$ divides the line segment into two disjoint regions with corresponding roadways. 
Naturally, agents are partitioned into sets $N_1$ and $N_2$ based on their (starting) locations relative to the obstacle. 
Agents in $N_1$ are positioned on the left-hand side of the obstacle ($x_i \in [0, o)$), while agents in $N_2$ are on the right-hand side ($x_i \in (o,1]$).

Our goal is to elicit agents' locations truthfully and construct a pathway $(a,b)$ connecting the two regions, with $a \in [0,o)$ and $b \in (o,1]$. 
For an agent at $x_i \in [0,o)$, the cost is defined as the minimum of $|x_i-a| + k(b-a) + 1-b$ and $|x_i-0|$, where $k \ge 0$ is a scaling factor, reflecting the distance to one point in $\{0, 1\}$ either within their own region or across the edge $(a,b)$. Similarly, for an agent at $x_i \in (o,1]$, the cost is the minimum of $|x_i-b| + k(b-a) + a$ and $|1-x_i|$. 
%That is, an agent will use the pathway to reach the other region if it is beneficial to do so in terms of distances. 
Our cost model incorporates well-known behavior from destination choice frameworks \cite{ben1985discrete,bhat1998disaggregate}, where an agent determines regions to access based on relative travel cost and accessibility. Therefore, an agent will use the pathway to reach the other region if it is more cost-effective (e.g., reducing their travel cost relative to the local alternative in their own region).
%\bluecomment{
%In reality, travelers typically face multiple potential destinations and make choices based on relative travel cost and accessibility, rather than being forced to go to one predetermined point. 
%Transportation planning and travel demand models explicitly incorporate such behavior through destination choice frameworks, where the utility of a destination depends on travel impedance and accessibility to opportunities beyond a single fixed destination \cite{ben1985discrete,bhat1998disaggregate}. 
%Similarly, measures of spatial accessibility—such as gravity-based accessibility \cite{geurs2004accessibility} and two-step floating catchment area methods \cite{luo2003measures}—evaluate how the set of reachable facilities within travel cost thresholds affects access and welfare outcomes rather than just proximity to one outlet \cite{park2021review}. 
%In our setting, agents are likewise allowed to use the introduced pathway only if it reduces their individual travel cost relative to local alternatives, better reflecting the behavioral insight that infrastructure improvements expand choice sets and influence travel decisions only when they afford lower cost options.}
We evaluate pathways under two objectives: \emph{maximum cost}, the largest individual cost, and \emph{social cost}, the sum of all agents' costs.

We focus on $0\le k<1$, and consider both deterministic mechanisms and randomized mechanisms that are strategyproof in expectation. The latter are evaluated by expected social cost or expected maximum cost.

\begin{figure}[htbp]
\centering
\includegraphics[width=\textwidth]{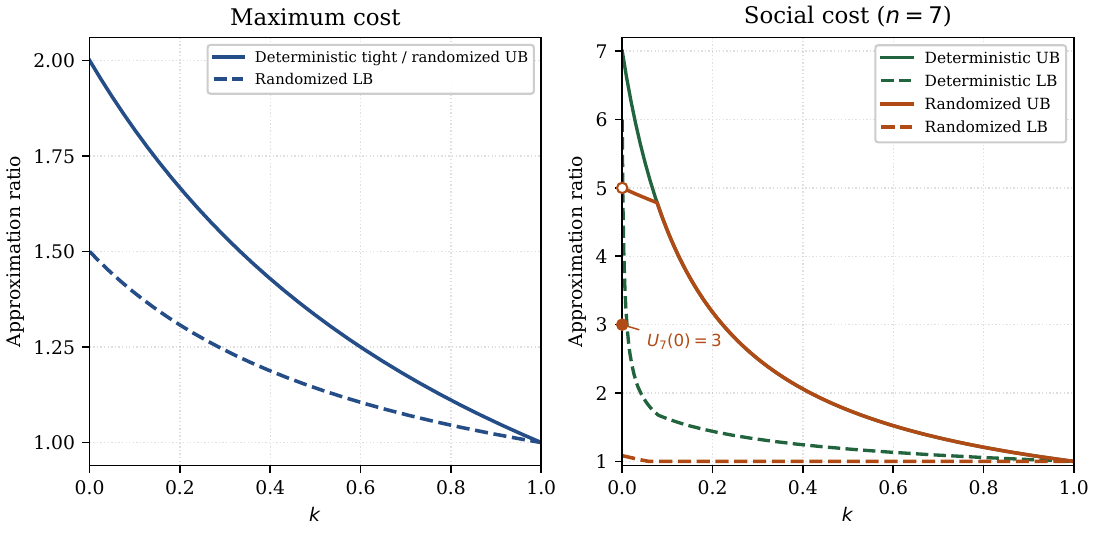}
\caption{Approximation bounds for deterministic and randomized SP mechanisms ($n=7$).}
% \caption{Deterministic and randomized bounds for the obstacle model with $n=7$. Solid and dashed curves denote upper and lower bounds, respectively. The deterministic maximum-cost bound is also the available randomized upper bound. At $k=0$, the randomized social-cost upper bound is $3$ (filled point); the open point at $5$ is the right-hand limit of the bound for $k>0$. This discontinuity concerns the proved guarantee, not the mechanism. The social-cost upper-bound curves coincide where the randomized bound selects \textsc{TwoExtreme}. Values at $k=1$ are limits.}
\label{fig:result}
\end{figure}

Our contributions are as follows.
\begin{itemize}

\item We characterize optimal pathways for both objectives. For every fixed pathway, agents in at most one region can strictly improve over their original route. Consequently, an optimal pathway can be chosen with one endpoint at an original facility. We give optimal mechanisms and show that they are not strategyproof.

\item For deterministic mechanisms, \textsc{CriticalExtreme} is GSP and achieves the maximum-cost ratio $\frac{2}{1+k}$, which matches our lower bound for all deterministic SP mechanisms. For social cost, the GSP \textsc{TwoExtreme} mechanism achieves $\frac{n}{1+k(n-1)}$. We prove complementary lower bounds for deterministic SP mechanisms, including $\max\{2,n-1\}$ at $k=0$ and the bounds $\Lambda_m(k)$ in Theorem~\ref{thm:sc-lb} for admissible active populations $m\le n$.

\item For randomized mechanisms, we obtain a social-cost guarantee independent of the population size. Our \textsc{PowerProportional} mechanism samples agents according to a power of their maximum possible cost reduction. It is SP in expectation and achieves a social-cost ratio at most $1+C(k)\le5$ for $0<k<1$, where $C(k)$ is defined in \eqref{eq:random-sc-C}. At $k=0$, it reduces to proportional sampling, for which we prove a tight guarantee of $3$. Choosing between this mechanism and \textsc{TwoExtreme} using only the public parameters gives the bound $U_n(k)$ in \eqref{eq:random-sc-upper}. We also prove randomized lower bounds of $\frac{3+2k}{2+3k}$ for maximum cost and $\max\big\{1,\frac{285}{263+385k}\big\}$ for social cost, with $n\ge7$ for the latter. Thus, at $k=0$, the randomized maximum-cost bounds are $\frac32$ and $2$, while randomization reduces the dependence of the social-cost guarantee on $n$ from linear to constant.

\item We improve several bounds of Chan and Wang~\cite{Chan023} for the real-line pathway model, allowing arbitrary endpoints and unrestricted reports. For deterministic maximum cost, we raise their lower bound from $\frac32$ to $2$. The upper bound of $2$ in \cite{cocoa2024mechanism} carries over by connecting its selected facility location to the origin, giving a tight deterministic bound. For deterministic social cost, we raise the lower bound from $\frac32$ to $2$ under SP and to $\max\{2,n-1\}$ under GSP. For randomized mechanisms, we retain Chan and Wang's maximum-cost lower bound of $\frac32$, sharpen their proportional mechanism's social-cost guarantee from $6$ to $3$, and raise their randomized social-cost lower bound from $1.02$ to $\frac{285}{263}\approx1.08365$ for $n\ge7$. The factor $3$ is tight for the proportional mechanism; the optimal randomized SP ratio remains open.
\end{itemize}

\subsection{Related Work}

We review existing algorithmic and mechanism design studies on building pathways to minimize specific cost objectives based on points in the regions. 
We note that there is a rich line of work on mechanism design for facility location problems (e.g., see \cite{Moulin1980,procaccia2013approximate,lu10mechanism,alon2010strategyproof,feldman2013strategyproof,lin2020nearly}) aiming to design strategyproof mechanisms to elicit agents' preferred facility locations truthfully and determine facility locations that optimize certain cost objectives. 
In addition to the classical model,
other notable extensions include 
 facility preferences \cite{fong2018facility,PaoloCarmine2016}, and different cost functions \cite{feldman2013strategyproof,fotakis2013strategyproof}. For an overview, see \cite{chan2021mechanismsurvey}.
Our randomized mechanisms use related proportional-sampling ideas, while positive traversal costs and region-constrained reports require separate incentive and approximation analyses. 

\paragraph*{Algorithmic Perspectives of Building Pathways.} 
Optimization literature has addressed the problem of constructing an optimal pathway to connect disconnected regions. 
 \cite{leizhen1999computing} formulated the problem of adding a line segment to connect two disjoint convex polygonal regions, minimizing the longest path from a point in one polygon to a point in the other via the line segment. They proposed an $O(n^2 \log n)$-time algorithm, where $n$ denotes the maximum number of extreme points. 
Subsequently, \cite{bhattacharya2001computing} developed a linear-time algorithm improving upon this approach. Other works include \cite{tan2000optimal,tan2002finding,kim2001computing,Kim1998Linear}.
% \cite{tan2000optimal} independently offered an alternative linear-time algorithm and further generalized it to an $O(n^2)$-time solution for connecting two convex polyhedra in space.  \cite{kim2001computing} provided algorithms for finding optimal pathways between two convex polygons, two simple non-convex polygons, and a convex plus a simple non-convex polygon, with respective complexities of $O(n)$, $O(n^2)$, and $O(n\log n)$. Later, \cite{tan2002finding} presented an $O(n\log^3 n)$-time algorithm for connecting two simple non-convex polygons.  \cite{Kim1998Linear} proposed a linear-time algorithm for computing an optimal pathway between two parallel lines separated by an obstacle. 
Importantly, all these works focus on connecting entire regions. 
In contrast, our study considers only a finite set of points—the agents’ starting locations—and adopts a mechanism design perspective, where locations are private.

\paragraph*{Mechanism Design Perspectives of Building Pathways.} 
The work most closely related to ours is \cite{chan2025mechanism}, which serves as the main source of inspiration for our study. 
The key difference between the two settings is that in our model, an agent cares about the shortest distance to the furthest points of the two regions, whereas in \cite{chan2025mechanism} an agent is only concerned about the distance to the furthest points of the other region.  
Our setting also shares similarities with \cite{qin2024mechanism}, where agents use a zero-cost perpendicular pathway to reach a facility in the other region. 
The main difference lies in the underlying metric: we consider a real line interrupted by an obstacle, while \cite{qin2024mechanism} dealt with two real lines connected by a perpendicular pathway. 
For $k=0$, a change of coordinates preserves costs between our setting and a restricted part of the real-line pathway model of \cite{Chan023}; the feasible outputs and reporting domains differ. Section~\ref{sec:improve} gives the precise correspondence and proves lower bounds for their model directly.
Finally, \cite{chan2024mechanism} explored the related problem of extending facility connectivity by adding an interval rather than a pathway to regions that are connected. 

% \paragraph{Mechanism Design for Facility Location.} The mechanism design problem we study falls under the umbrella of approximate mechanism design without money. Procaccia and Tennenholtz \cite{procaccia2013approximate} introduced this paradigm, emphasizing the inherent trade-offs between strategyproofness and approximation guarantees. For minimizing social cost, the median mechanism is both strategyproof and optimal. They further proved that the median mechanism achieves a tight 2-approximation for maximum cost. %, while a randomized mechanism, LRM, attains a 1.5-approximation, together with matching lower bounds. 
% %In two-dimensional Euclidean spaces, the coordinate-wise median (CM) mechanism provides a $\sqrt{2}$-approximation for social cost \cite{DBLP:conf/sagt/Meir19} and a 2-approximation for maximum cost \cite{DBLP:journals/scw/GoelH23}, both being optimal among deterministic strategyproof mechanisms.
% Other notable extensions include alternative metric spaces \cite{alon2010strategyproof,feldman2013strategyproof,lin2020nearly,lu10mechanism}, %chan2025prediction
%  facility preferences \cite{fong2018facility,PaoloCarmine2016}, and other cost functions \cite{feldman2013strategyproof,fotakis2013strategyproof}. See a survey in \cite{chan2021mechanismsurvey}.

\section{Preliminaries}\label{sec:model}

\paragraph{Models.}
Let $N=\{1,\ldots,n\}$ be the set of agents located on the interval $[0,1]$, and let $\mathbf x=(x_1,\ldots,x_n)$ represent their location profile. There is an obstacle at point $o \in (0,1)$, and we assume that no agent is positioned exactly at $o$. This obstacle partitions the agent set as $N=N_1\cup N_2$, with $N_1\cap N_2=\emptyset$, where $N_1 = \{i \in N \mid x_i < o\}$ contains the agents in the left region, and $N_2 = \{i \in N \mid x_i > o\}$ contains the agents in the right region. 
% \hau{this following sentence is not right for the model; because they can just stay at the same place; see the intro and argument above ... remove texts like that thorughout the paper} 

The obstacle prevents direct travel between the two regions. To enable travel between them, we aim to construct a new pathway/edge $(a,b)$ connecting the two regions, with $a \in [0,o)$ and $b \in (o,1]$. Traversing this pathway incurs a cost of $k(b-a)$, where $k$ is a nonnegative constant.

Region membership is public and remains fixed under misreports. Let $\mathcal X=\prod_{i\in N_1}[0,o)\times\prod_{i\in N_2}(o,1]$ denote the reporting domain, with coordinates indexed by agent identities, and let $\mathcal F_o=[0,o)\times(o,1]$ be the set of feasible pathways. A deterministic mechanism $f:\mathcal X\to\mathcal F_o$ maps the location profile $\mathbf x$ to a pathway $f(\mathbf x)$. Given a pathway $f(\mathbf x)=(a,b)$, the \emph{cost} of an agent $i \in N_1$ in the left region is the distance to the closer point in $\{0, 1\}$:  
$$cost(a,b,x_i) = \min(|x_i-a| + k(b-a) + (1-b), x_i).$$
Similarly, for an agent $i \in N_2$ in the right region, the cost is  
$$cost(a,b,x_i) = \min(|x_i-b| + k(b-a) + a, 1-x_i).$$
%We only focus on deterministic mechanisms in the paper.

% A randomized mechanism maps $\mathbb R^n$ to probability distributions over $\mathbb R^2$. If $f(\mathbf x) = P$ is such a distribution, the cost of agent $i$ is the expected cost  
% $$cost(P,x_i) = \mathbb E_{(a,b) \sim P}[cost(a,b,x_i)].$$

%\subsection{\bluecomment{Truthfulness}}
\paragraph{Strategyproof Mechanisms.}
A mechanism $f$ is \emph{strategyproof} (SP) if no agent can decrease their cost by misreporting their location within their own region. Formally, $f$ is strategyproof if for any $i \in N$, any profile $\mathbf x$, and any $x_i'$ in the same region as $x_i$,  
$$cost(f(x_i, \mathbf x_{-i}), x_i) \le cost(f(x_i', \mathbf x_{-i}), x_i),$$  
where $\mathbf x_{-i}$ denotes the locations of all agents except $i$.  
Moreover, $f$ is \emph{group strategyproof} (GSP) if no coalition of agents can simultaneously misreport so that every member of the group strictly benefits. That is, for any nonempty $S \subseteq N$, any profile $\mathbf x$, and any  $\mathbf x_S'$ with $x_j',x_j$ in the same region  $\forall j\in S$, there exists some $i \in S$ such that  
$$cost(f(\mathbf x), x_i) \le cost(f(\mathbf x_S', \mathbf x_{-S}), x_i).$$

It is clear that GSP implies SP. Another property lying between them is \emph{partial group strategyproofness}. %\cite{lu10mechanism}. 
A mechanism $f$ is partial GSP if no coalition of agents located at the same point can jointly misreport so that every member strictly benefits. Formally, for any profile $\mathbf x$, nonempty coalition $S \subseteq N$ with $x_i=x_0$ for all $i\in S$, and any alternative reports $\mathbf x_S'$ within the same region as $x_0$, it holds %there exists $i \in S$ such that  
$$cost(f(\mathbf x), x_0) \le cost(f(\mathbf x_S', \mathbf x_{-S}), x_0).$$

As in many variants of facility location (see, e.g., \cite{lu10mechanism}),  SP implies partial GSP. 

%Clearly, group strategyproofness implies partial group strategyproofness, which in turn implies strategyproofness. In our setting (and indeed in virtually all facility location problems) strategyproofness already yields partial group strategyproofness. We record this observation in the following lemma, which will be used repeatedly without further comment.

\begin{lemma}\label{lemma:partial-gsp}
    Every strategyproof mechanism is partial group strategyproof.
\end{lemma}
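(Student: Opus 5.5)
The plan is the standard hybrid (one-agent-at-a-time) argument from facility location, in the style of \cite{lu10mechanism}. Fix a strategyproof $f$, a profile $\mathbf x$, and a coalition $S=\{i_1,\dots,i_s\}$ whose members all sit at $x_0$. Fix alternative reports $\mathbf x_S'$ in the same region as $x_0$. We move from the truthful profile to the deviating one by changing reports one member at a time. Set $\mathbf y^{(0)}=\mathbf x$, and for $t=1,\dots,s$ let $\mathbf y^{(t)}$ agree with $\mathbf y^{(t-1)}$ except that agent $i_t$ reports $x'_{i_t}$. Then $\mathbf y^{(s)}=(\mathbf x_S',\mathbf x_{-S})$, and every intermediate profile lies in the reporting domain $\mathcal X$, because each $x'_{i_t}$ stays in the region of $x_0$.

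The key step is a single comparison between consecutive profiles. In the reversed direction, $\mathbf y^{(t-1)}$ is obtained from $\mathbf y^{(t)}$ by letting agent $i_t$ alone change its report from $x'_{i_t}$ to $x_0$. All other coordinates are fixed, and $x_0$ is agent $i_t$'s true location. Strategyproofness applied to agent $i_t$, with true type $x_0$ and the others' reports $\mathbf y^{(t)}_{-i_t}$ (which equal $\mathbf y^{(t-1)}_{-i_t}$), gives
\[
cost(f(\mathbf y^{(t-1)}),x_0)\le cost(f(\mathbf y^{(t)}),x_0).
\]
The subtle point is that the SP inequality is quantified over \emph{all} profiles of the others' reports, including profiles in which some coalition members already misreport. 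This is why it applies even though $\mathbf y^{(t)}_{-i_t}$ is not the truthful profile.

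Chaining these $s$ inequalities gives $cost(f(\mathbf x),x_0)\le cost(f(\mathbf x_S',\mathbf x_{-S}),x_0)$, which is exactly the partial GSP condition. Since all members share location $x_0$, they all have the same cost at both the truthful and the deviating profile. So this inequality already shows that no member strictly benefits.

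There is no real obstacle. The only things to verify are the two points above: the intermediate profiles are admissible, since region membership is fixed and reports stay within the region, and the cost function of agent $i_t$ at true location $x_0$ is the same function for every coalition member. The argument does not depend on the specific form of $cost$, so no case analysis over the two regions is needed beyond noting that it is symmetric.
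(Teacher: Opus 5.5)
Your proposal is correct and follows essentially the same one-agent-at-a-time hybrid argument as the paper: each step changes only one coalition member's report, that member's true location is still $x_0$, so strategyproofness applies at the intermediate profile, and the resulting inequalities chain. Your write-up also makes explicit two points the paper leaves implicit: the intermediate profiles lie in $\mathcal X$, and strategyproofness holds for arbitrary (possibly non-truthful) reports of the other agents.
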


\begin{proof}
Consider a coalition $S$ of agents located at $x_0$, and let $\mathbf x_S'$ be the alternative reports. 
 % Suppose that some coalition $S$ of agents located at $x_0$ jointly misreports to $\mathbf x_S'$ so that every member strictly benefits. 
 Move the agents in $S$ to their respective reports in $\mathbf x_S'$ one by one. At each step, the agent whose report is changed is still located at $x_0$. Strategyproofness therefore ensures that the cost evaluated at $x_0$ cannot decrease after that step. Hence,  the coalition $S$ cannot decrease their cost if they misreport $\mathbf x_S'$ simultaneously. % after all deviations, the cost of an agent at $x_0$ cannot decrease. contradicting the assumed strict improvement.
\end{proof}

% \begin{proof}
% Consider a coalition $S$ of agents located at $x_0$, and let $\mathbf x_S'$ be the alternative reports. 
%  % Suppose that some coalition $S$ of agents located at $x_0$ jointly misreports to $\mathbf x_S'$ so that every member strictly benefits. 
%  Move the agents in $S$ to their respective reports in $\mathbf x_S'$ one by one. Strategyproofness ensures that after each single-agent deviation the cost of an agent at $x_0$ cannot decrease. Hence,  the coalition $S$ cannot decrease their cost if they misreport $\mathbf x_S'$ simultaneously. % after all deviations, the cost of an agent at $x_0$ cannot decrease. contradicting the assumed strict improvement.
% \end{proof}

% A mechanism is \emph{anonymous} if its outcome is invariant under any permutation of agent identities, i.e., for any profile $\mathbf x$ and any permutation $\pi: N \to N$,  
% $$f(x_1,\ldots,x_n) = f(x_{\pi(1)},\ldots,x_{\pi(n)}).$$  
% Since non-anonymous mechanisms depend on agent identities and are less natural in this setting, we focus exclusively on anonymous mechanisms.

Our goal is to design SP/GSP mechanisms with performance guarantees for two objectives: minimizing the social cost and minimizing the maximum cost. For a given edge $(a,b)$ and location profile $\mathbf x$, the social cost is  
$SC(a,b,\mathbf x) = \sum_{i \in N} cost(a,b,x_i)$,
and the maximum cost is  
$MC(a,b,\mathbf x) = \max_{i \in N} cost(a,b,x_i)$.  
A mechanism $f$ is an $\alpha$-approximation ($\alpha \ge 1$) for objective $\Delta \in \{SC, MC\}$ if  
$$\Delta(f(\mathbf x), \mathbf x) \le \alpha \cdot \min_{(a,b) \in \mathcal F_o} \Delta(a,b,\mathbf x) \quad \text{for all } \mathbf x \in \mathcal X.$$

\paragraph{Randomized Mechanisms.}
A randomized mechanism maps each profile in $\mathcal X$ to a probability distribution over $\mathcal F_o$. It is \emph{strategyproof in expectation} if, for every agent $i$ and every same-region report $x_i'$, 
\[
 \mathbb E_{(a,b)\sim f(\mathbf x)}[cost(a,b,x_i)]
 \le
 \mathbb E_{(a,b)\sim f(x_i',\mathbf x_{-i})}[cost(a,b,x_i)].
\]
We use SP for this property when discussing randomized mechanisms. For $\Delta\in\{SC,MC\}$, its approximation ratio is at most $\rho$ if
\[
 \mathbb E_{(a,b)\sim f(\mathbf x)}[\Delta(a,b,\mathbf x)]
 \le\rho\,OPT_\Delta(\mathbf x)
 \qquad\text{for every }\mathbf x\in\mathcal X,
\]
where $OPT_\Delta$ minimizes $\Delta$ over feasible pathways. In particular, the maximum-cost objective is the expectation of the realized maximum cost. The proof of Lemma~\ref{lemma:partial-gsp} also applies to expected costs: moving the reports of colocated agents one at a time cannot decrease their common expected cost. We use this expectation version of partial group strategyproofness below. A randomized objective cannot improve on the unrestricted optimal objective, since it averages values of feasible deterministic pathways.

We study $0\le k<1$ throughout. Approximation guarantees are uniform over obstacle positions and profiles; a lower-bound construction may choose the obstacle as well as the agents' locations. All deviations keep each agent in its original region. The structural statements below concern individual pathways and deterministic mechanisms.

\paragraph{A Basic Lemma.}
The following lemma will be used repeatedly in our analysis.

\begin{lemma}
\label{lem:basic}
Consider any pathway $(a,b)$. If there exist agents whose cost decreases by using this pathway, then all such agents belong to the same region. In other words, at least one of the two regions has no agent who strictly benefits from using the pathway.
\end{lemma}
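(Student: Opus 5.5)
Fix a pathway $(a,b)$ with $a\in[0,o)$ and $b\in(o,1]$. The plan is to argue by contradiction: suppose some left agent $i\in N_1$ and some right agent $j\in N_2$ both strictly benefit, write down the two strict inequalities this gives, and add them to reach an impossible inequality.

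First we record what strict benefit means for each agent. For $i\in N_1$ at $x_i<o$, strict benefit means
\[
|x_i-a|+k(b-a)+(1-b)<x_i .
\]
For $j\in N_2$ at $x_j>o$, strict benefit means
\[
|x_j-b|+k(b-a)+a<1-x_j .
\]

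Next we reduce each condition to a statement about $a$ and $b$ alone, using the triangle inequality.
\begin{itemize}
\item Since $x_i-a\le |x_i-a|$, the first condition gives $x_i-a+k(b-a)+1-b<x_i$. This simplifies to $1-b+k(b-a)<a$.
\item Similarly, $b-x_j\le |x_j-b|$ turns the second condition into $b-x_j+k(b-a)+a<1-x_j$. This simplifies to $a+k(b-a)<1-b$.
\end{itemize}
The point of this step is that the agents' locations cancel: each condition now says only that one endpoint is far from its facility compared with the other.

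Adding the two reduced inequalities gives $1-b+a+2k(b-a)<a+1-b$, that is, $2k(b-a)<0$. This is impossible, because $k\ge 0$ and $b>o>a$ force $b-a>0$. Hence the two strict benefits cannot hold simultaneously, so at most one region contains agents who strictly benefit from $(a,b)$, which is the claim of Lemma~\ref{lem:basic}.

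The only step requiring care is choosing the correct direction of the absolute-value bound, $|x_i-a|\ge x_i-a$ and $|x_j-b|\ge b-x_j$, so that the $x$-terms cancel. Once that choice is made, the rest is a one-line computation. The argument uses only $k\ge0$, so it covers the whole range $0\le k<1$ studied in the paper.
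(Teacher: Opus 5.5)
Your proof is correct and takes essentially the same approach as the paper: both add the two strict-benefit inequalities and use $|x_i-a|\ge x_i-a$, $|x_j-b|\ge b-x_j$ and $k(b-a)\ge 0$ to reach a contradiction. The only difference is cosmetic: you apply the absolute-value bounds to each inequality before adding them, so the locations cancel early, whereas the paper adds first and then drops the $k$-terms.
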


\begin{proof}
Assume for contradiction that there exist two agents $i \in N_1$ and $j \in N_2$ such that both benefit from the pathway $(a,b)$. By definition, this means $cost(a,b,x_i) < |x_i-0|$ and $cost(a,b,x_j)< |1-x_j|$, which indicates
\[
|x_i-a|+k(b-a)+(1-b) < x_i, \text{\quad and \quad}
(a-0)+k(b-a)+|b-x_j| < 1-x_j.
\]
However, adding the two left-hand sides gives
\[
\begin{aligned}
&|x_i-a|+k(b-a)+(1-b) + a+k(b-a)+|b-x_j| \ge |x_i-a|+(1-b) + a+|b-x_j|
\\
\ge ~& x_i-a+a+b-x_j+1-b =  x_i+1-x_j,
\end{aligned}
\]
 leading to a contradiction.
%Hence, it is impossible for agents in both regions to simultaneously benefit from the same pathway. 
Therefore, any pathway can only reduce the cost for agents in one region.
\end{proof}

% \begin{proof}
% Assume for contradiction that there exist two agents $i \in N_1$ and $j \in N_2$ such that both benefit from the pathway $(a,b)$. By definition, this means $cost(a,b,x_i) < |x_i-0|$ and $cost(a,b,x_j)< |1-x_j|$, which indicates
% \[
% |x_i-a|+k(b-a)+(1-b) < x_i, \text{\quad and \quad}
% (a-0)+k(b-a)+|b-x_j| < 1-x_j.
% \]
% However, adding the two left-hand sides gives
% \[
% \begin{aligned}
% &|x_i-a|+k(b-a)+(1-b) + a+k(b-a)+|b-x_j| \ge |x_i-a|+(1-b) + a+|b-x_j|
% \\
% \ge ~& x_i-a+a+b-x_j+1-b =  x_i+1-x_j,
% \end{aligned}
% \]
%  leading to a contradiction.
% %Hence, it is impossible for agents in both regions to simultaneously benefit from the same pathway. 
% Therefore, any pathway can only reduce the cost for agents in one region.
% \end{proof}

An agent who strictly benefits from a deviation must have a new cost strictly below the cost of taking the direct route, since that route was also available before the deviation. Lemma~\ref{lem:basic} therefore yields the following corollary.

\begin{corollary}\label{coro}
Let $f$ be a deterministic mechanism that is not group strategyproof. If a coalition $S\subseteq N$ can jointly misreport so that every member strictly benefits, then all agents in $S$ lie on the same side of the obstacle.
\end{corollary}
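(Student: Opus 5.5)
The plan is to combine the observation stated just before the corollary with Lemma~\ref{lem:basic}. Fix a profile $\mathbf x$ and a coalition $S$ with alternative reports $\mathbf x_S'$ such that every $i\in S$ strictly benefits. Write $(a,b)=f(\mathbf x)$ and $(a',b')=f(\mathbf x_S',\mathbf x_{-S})$. For each $i\in S$ we then have
\[
cost(a',b',x_i) < cost(a,b,x_i).
\]
Here $x_i$ is the true location, since costs are always evaluated at true locations.

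The first step is to show that every member of $S$ strictly benefits from using the pathway $(a',b')$ relative to its direct route. Take $i\in N_1\cap S$. By the definition of cost, $cost(a,b,x_i)\le x_i$, because the direct route to $0$ is available under any pathway. Chaining this with the strict improvement gives
\[
cost(a',b',x_i) < x_i .
\]
Since $cost(a',b',x_i)$ is the minimum of the pathway route and $x_i$, this strict inequality forces
\[
|x_i-a'|+k(b'-a')+(1-b') < x_i .
\]
So agent $i$ strictly benefits from using $(a',b')$. The symmetric argument works for $i\in N_2\cap S$, with $1-x_i$ playing the role of the direct-route cost.

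The second step applies Lemma~\ref{lem:basic} to the single pathway $(a',b')$. All agents who strictly benefit from using a given pathway lie in the same region. Every member of $S$ is such an agent, so $S\subseteq N_1$ or $S\subseteq N_2$.

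There is no real obstacle in this argument. The only care needed is to evaluate costs at true locations, and to note that membership in $N_1$ or $N_2$ is fixed by true positions. The reports $\mathbf x_S'$ stay in the same region by the definition of GSP, so the region labels used in Lemma~\ref{lem:basic} are unaffected.
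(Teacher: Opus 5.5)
Your proposal is correct and follows the same route as the paper. Each member's new cost is strictly below its old cost, which is at most its direct-route cost, so it strictly benefits from the single pathway $(a',b')$; Lemma~\ref{lem:basic} applied to that pathway then places all of $S$ in one region.
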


\section{Optimal Pathways}\label{sec:opt}

In this section, we focus on the optimal pathways, without imposing strategyproofness. We first give an important lemma.

Given a location profile $\mathbf x$, let $x_l = \min\{x_i \mid i\in N_1\}$, $x_r = \max\{x_i \mid i\in N_1\}$ be the two extreme locations in $N_1$; and $y_l = \min\{x_j \mid j\in N_2\}$, $y_r = \max\{x_j \mid j\in N_2\}$ be the two extreme locations in $N_2$. If $|N_1|=0$, set $x_l=x_r=0$; if $|N_2|=0$, set $y_l=y_r=1$.

\begin{lemma}\label{lem:opt-basic}
    For both maximum cost and social cost, there exists an optimal solution $(a^*, b^*)$ which satisfies $a^*=0$ or $b^*=1$.
\end{lemma}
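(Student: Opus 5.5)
Take any optimal pathway $(a,b)$; for the maximum-cost objective, fix a profile $\mathbf x$ and any pathway attaining the optimum. The plan is to replace $(a,b)$ by a pathway with $a=0$ or $b=1$ under which \emph{every} agent's cost is no larger. Such a replacement is automatically optimal for both objectives, since both social cost and maximum cost are monotone in the individual costs.

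First apply Lemma~\ref{lem:basic}: under $(a,b)$, the agents who strictly benefit from the pathway all lie in one region. By the left--right symmetry of the model ($x\mapsto 1-x$, $o\mapsto 1-o$), it suffices to treat the case where no agent of $N_2$ strictly benefits, so every $j\in N_2$ has cost $1-x_j$. In this case I would move the right endpoint to $1$, i.e.\ use $(a,1)$.

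For agents $j\in N_2$ the cost under any pathway is at most $1-x_j$, which is the current cost. Hence moving to $(a,1)$ cannot increase their costs, and no further analysis is needed for them.

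For an agent $i\in N_1$ the pathway term changes from $|x_i-a|+k(b-a)+(1-b)$ to $|x_i-a|+k(1-a)$. Their difference equals $(1-k)(1-b)\ge 0$, using $k<1$ and $b\le 1$. So the pathway option gets no worse, the direct option $x_i$ is unchanged, and the minimum does not increase.

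The case where no $N_1$ agent benefits is symmetric: replace $a$ by $0$, which changes the pathway term of each $N_2$ agent by $(1-k)a\ge 0$ in their favor.

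If no agent benefits at all, either replacement works. Feasibility holds because $(a,1)\in\mathcal F_o$ and $(0,b)\in\mathcal F_o$.

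The only delicate point is the choice of which endpoint to move. It must be dictated by Lemma~\ref{lem:basic}, so that the agents whose pathway term worsens are exactly those already using their direct route. Existence of an optimum is not needed beyond the fact that we start from one; if needed, the construction also shows the infimum is attained over a set with a fixed endpoint.
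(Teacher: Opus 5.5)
\textbf{Verdict: the reduction is correct, but the existence of an optimum is not proved.} Your dominance argument matches the first half of the paper's proof. Lemma~\ref{lem:basic} selects the side that may benefit. Extending $b$ to $1$ lowers every left-side pathway cost by $(1-k)(1-b)$, and moving $a$ to $0$ lowers every right-side one by $(1-k)a$. Agents on the other side keep at most their direct-route cost.

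The gap is that you start from ``take any optimal pathway,'' so what you prove is only the conditional statement: if an optimum exists, one can be chosen with $a^*=0$ or $b^*=1$. The lemma asserts that such an optimum exists, and this is not automatic. The feasible set $\mathcal F_o=[0,o)\times(o,1]$ is not compact, so a continuous objective need not attain its infimum on it. Your closing claim that the construction shows the infimum is attained over a set with a fixed endpoint does not follow. The reduced families $\{(a,1):0\le a<o\}$ and $\{(0,b):o<b\le 1\}$ are still non-compact. Your construction shows only that the infimum over $\mathcal F_o$ equals the infimum over these two families.

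The missing step, which the paper supplies, is a further truncation. On the family $(a,1)$, every right-side agent's pathway term is at least $1-x_j$, so these costs do not depend on $a$. For $a\ge x_r$, every left-side pathway term equals $a-x_i+k(1-a)$, which increases in $a$ with slope $1-k>0$. Hence replacing any $a>x_r$ by $x_r$ weakly decreases every cost, and you may restrict to $a\in[0,x_r]$. Symmetrically, on the family $(0,b)$ you may restrict to $b\in[y_l,1]$. With the conventions $x_r=0$ and $y_l=1$ for empty regions, both are compact intervals of feasible endpoints. Each agent's cost is continuous in the pathway, so both objectives attain their minima on these intervals. The smaller of the two minima is then a global optimum with $a^*=0$ or $b^*=1$, and with this addition your proof is complete.
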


\begin{proof}
By Lemma~\ref{lem:basic}, any agents who strictly benefit from a pathway belong to the same region. If they belong to $N_1$, extending $b$ to $1$ decreases their pathway costs by $(1-k)(1-b)$ and cannot increase any other agent's cost, since the direct route remains available. Symmetrically, if they belong to $N_2$, moving $a$ to $0$ cannot increase any cost. If no agent strictly benefits, either modification cannot increase any cost. Thus it suffices to consider pathways of the form $(a,1)$ or $(0,b)$.

For $(a,1)$, all agents in $N_2$ take the direct route at no greater cost. If $a>x_r$, moving $a$ to $x_r$ weakly decreases every left-side agent's cost, since their pathway costs have slope $1-k$ in $a$. Hence we may restrict $a$ to the compact interval $[0,x_r]$. Symmetrically, we may restrict $b$ to $[y_l,1]$ for pathways $(0,b)$. Both intervals consist of feasible endpoints, including when a region is empty. Continuity of the two objectives now guarantees an optimal solution in one of these two families.
\end{proof}

%------------------------------------------
\subsection{Maximum Cost}

The intuition behind the optimal solution for maximum cost is as follows. Consider the case $x_r+y_l\ge 1$ (as the other case $x_r+y_l< 1$ is symmetric). Without a pathway, the maximum cost is realized at $x_r$, so we need to reduce this cost. When $x_r>\frac{k}{1+k}$, we define $\delta_1$ as a threshold: an agent at $\delta_1$ (if one exists) can either go directly to 0 or use the pathway $(\frac{\delta_1+x_r}{2}, 1)$, so that this agent is indifferent between the two routes and has the same cost as the agent at $x_r$. Then we take the leftmost agent at or to the right of $\delta_1$, denoted $c_1$, and set $a^*$ as the midpoint between $c_1$ and $x_r$. More formally, we have the following optimal solution. 

\begin{mechanism}\label{mec:opt-mc}
Given profile $\mathbf{x}$, if $x_r + y_l \ge 1$, return $(a^*, 1)$ (Method 1);
    otherwise, return $(0, b^*)$ (Method 2).  

\textbf{Method 1.}  
If  $x_r\le \frac{k}{1+k}$, then set $a^* = x_r$.  
 Otherwise, let
\[
a^* = \frac{c_1+x_r}{2}, \quad 
c_1 = \min_{i\in N_1,\, x_i\ge \delta_1} x_i, \quad
\delta_1=\frac{2k+(1-k)x_r}{3+k}.
\]

\textbf{Method 2.}  
 If  $y_l \ge \frac{1}{1+k}$, then set $b^* = y_l$.  
 Otherwise, let
\[
b^* = \frac{c_2+y_l}{2}, \quad
c_2 = \max_{i\in N_2,\, x_i\le \delta_2} x_i, \quad
\delta_2 = \frac{2+(1-k)y_l}{3+k}.
\]
\end{mechanism}

\begin{theorem}
Mechanism \ref{mec:opt-mc} is optimal for maximum cost.
\end{theorem}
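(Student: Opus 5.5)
We restrict attention to the two families given by Lemma~\ref{lem:opt-basic} and optimize within each. By the mirror symmetry $x\mapsto 1-x$, which swaps the two regions and maps Method~1 to Method~2, it suffices to treat the case $x_r+y_l\ge 1$, that is, $x_r\ge 1-y_l$.

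\textbf{Step 1: the family $(a,1)$ suffices.} For a pathway $(0,b)$, every agent in $N_1$ takes the direct route, because $|x_i-0|+k b+(1-b)\ge x_i$. Hence $MC(0,b,\mathbf x)\ge x_r$. The pathway $(x_r,1)$ changes no cost upward, so it achieves maximum cost at most $\max\{x_r,1-y_l\}=x_r$. Thus some optimum lies in the family $(a,1)$. For such pathways every right agent pays $1-x_j$, so the right region contributes the constant $1-y_l\le x_r$. We therefore minimize
\[
F(a)=\max\Big\{1-y_l,\ \max_{i\in N_1}\min\big(|x_i-a|+k(1-a),\,x_i\big)\Big\}.
\]

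\textbf{Step 2: the case $x_r\le\frac{k}{1+k}$.} The function $a\mapsto |x_r-a|+k(1-a)$ has slope $-(1+k)$ for $a<x_r$ and slope $1-k>0$ for $a>x_r$. It is therefore minimized at $a=x_r$, with value $k(1-x_r)\ge x_r$. So the agent at $x_r$ always pays $x_r$, which gives $F\ge x_r$ for every $a$. On the other hand, $a^*=x_r$ gives every agent a cost of at most $x_r$, so it is optimal.

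\textbf{Step 3: the case $x_r>\frac{k}{1+k}$.} We write $\max_i\min(\cdot,\cdot)$ as a minimum over assignments of agents to routes. For a fixed $a$, the set of left agents whose pathway cost is at most their direct cost can be taken to be a suffix $\{x_i\ge c\}$ without loss, for two reasons:
\begin{itemize}
\item the direct cost $x_i$ is increasing in $x_i$;
\item the pathway cost of an agent lying between two pathway users is at most the larger of their pathway costs.
\end{itemize}
For a fixed suffix threshold $c$ (an agent location), the largest pathway cost is
\[
\max\big(x_r-a,\ a-c\big)+k(1-a).
\]
This is minimized at $a=\frac{c+x_r}{2}$, with value
\[
g(c)=\frac{x_r-c}{2}+k\Big(1-\frac{c+x_r}{2}\Big),
\]
which is strictly decreasing in $c$. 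The direct users contribute $p(c)$, the largest agent location below $c$ (or $0$ if there is none). Hence
\[
OPT=\max\Big\{1-y_l,\ \min_c \max\{p(c),g(c)\}\Big\}.
\]
Solving $g(\delta)=\delta$ gives exactly $\delta_1=\frac{2k+(1-k)x_r}{3+k}$. Moreover, $\delta_1< x_r$ if and only if $x_r>\frac{k}{1+k}$, so $c_1$ exists.

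\textbf{Step 4: exchange argument for $c_1$.} Let $p=p(c_1)<\delta_1$. We compare $c_1$ with any other threshold $c'$.
\begin{itemize}
\item If $c'<c_1$, then $c'\le p$, so $g(c')\ge g(p)> g(\delta_1)=\delta_1>p$. Also $g(c')\ge g(c_1)$. Hence $\max\{p(c'),g(c')\}\ge g(c')\ge\max\{p,g(c_1)\}$.
\item If $c'>c_1$, then $p(c')\ge c_1\ge\delta_1=g(\delta_1)\ge g(c_1)$, and also $p(c')\ge c_1>p$. Hence again $\max\{p(c'),g(c')\}\ge \max\{p,g(c_1)\}$.
\end{itemize}
So the threshold $c_1$ together with $a^*=\frac{c_1+x_r}{2}$ attains the minimum in Step~3. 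Finally, the constant $1-y_l$ is common to all choices, so it does not affect the comparison.

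\textbf{Main obstacle.} The delicate point is the reduction in Step~3: replacing each agent's $\min$ by an optimized suffix assignment, and checking that the realized costs under $a^*$ do not exceed the assigned ones. The latter holds because each agent takes the better of its two routes, which can only lower its cost.
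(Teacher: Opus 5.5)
Your proof is correct, and its skeleton matches the paper's. The shared pieces are the reduction to pathways $(a,1)$, the case split at $x_r=\frac{k}{1+k}$, the midpoint value $g(c)$ (the paper's $D(c)$), the fixed point $\delta_1$, and the fact that $p(c_1)$ (the paper's $v_0$) and $g(c_1)$ are both attained at $a^*$.

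Where you differ is in how the lower bound is certified. The paper never enumerates route assignments. It proves directly that for every $a$, the agents at $v$ and $x_r$ have maximum cost at least $\min\{v,D(v)\}$: if either goes direct the maximum is at least $v$, and otherwise the midpoint computation gives at least $D(v)$. It then applies this twice, with $v=c_1$ (where $D(c_1)\le c_1$) and with $v=v_0$ (where $D(v_0)>v_0$). You instead compute the optimum exactly as $\max\{1-y_l,\min_c\max\{p(c),g(c)\}\}$ through the suffix reduction, then minimize over thresholds by an exchange argument.

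The paper's pairwise certificate is shorter and avoids the suffix reduction you flag as the delicate step. Your route yields an explicit formula for the optimum over all thresholds, which makes it transparent why $c_1$ is the right threshold.

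Two small tidy-ups. First, the suffix claim follows at once from the fact that, for fixed $a$, $|x_i-a|+k(1-a)-x_i$ is nonincreasing in $x_i$. So the set of agents who weakly prefer the pathway is literally a suffix, and you do not need the two-bullet justification. Second, the all-direct assignment, with value $x_r$, should be included among the thresholds. It is dominated by $c=x_r$, since $k(1-x_r)<x_r$ in this case, so your formula survives.
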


\begin{proof}
By symmetry, consider Method 1, where $x_r+y_l\ge 1$. The maximum cost without a pathway is $x_r$. Any pathway that does not strictly benefit an agent in $N_1$ leaves this value unchanged, whereas one that does can be extended to have $b=1$ without increasing any cost. Thus an optimal solution can be chosen with $b=1$, and agents in $N_2$ then have maximum cost $1-y_l$, with value $0$ if $N_2$ is empty.

If $x_r\le \frac{k}{1+k}$, every left-side agent satisfies $k(1-x_i)\ge x_i$. Since $|x_i-a|+k(1-a)$ is minimized at $a=x_i$, no such agent can improve on the direct route. Therefore the mechanism is optimal in this case.

Suppose now that $x_r>\frac{k}{1+k}$. For a left-side location $v\le x_r$, define
\[
D(v)=\frac{x_r-v}{2}+k\left(1-\frac{x_r+v}{2}\right).
\]
The maximum of the pathway costs of agents at $v$ and $x_r$ is minimized at $a=\frac{v+x_r}{2}$, with value $D(v)$. Consequently, for every $a$, their maximum actual cost is at least $\min\{v,D(v)\}$: if either agent takes the direct route, the maximum is at least $v$; otherwise, it is at least $D(v)$.

The threshold $\delta_1$ satisfies $D(\delta_1)=\delta_1$, and $D(v)-v$ is strictly decreasing. Moreover, $\delta_1<x_r$, so $c_1$ is well defined and $D(c_1)\le c_1$. Let
\[
v_0=\max\bigl(\{x_i:i\in N_1,\ x_i<\delta_1\}\cup\{0\}\bigr).
\]
The preceding two-agent bound applied to $c_1$ gives a lower bound $D(c_1)$ on the maximum cost of any $(a,1)$. If there is an agent at $v_0<\delta_1$, it also gives the lower bound $v_0$, since $D(v_0)>v_0$; if there is no such agent, this bound is simply $0$. Together with the right-side agents, these bounds yield
\[
MC(a,1,\mathbf x)\ge\max\{1-y_l,\ v_0,\ D(c_1)\}.
\]
At the mechanism's endpoint $a^*=\frac{c_1+x_r}{2}$, every agent with $x_i<\delta_1$ has cost at most $x_i\le v_0$. Every other left-side agent lies in $[c_1,x_r]$, so its pathway cost is at most $D(c_1)$. The mechanism therefore attains the displayed lower bound and is optimal.
\end{proof}

Mechanism~\ref{mec:opt-mc} is not strategyproof. Suppose there are two agents in $N_1$ and none in $N_2$, with $o=\frac{19}{20}$, $x_1=\frac35$, and $x_2=\frac45$. The mechanism returns $(\frac{7}{10},1)$. If agent $2$ reports $\frac9{10}$, the output becomes $(\frac34,1)$. Its true cost decreases from $\frac1{10}+\frac{3k}{10}$ to $\frac1{20}+\frac{k}{4}$, a strict reduction of $\frac{1+k}{20}$. Both reports lie in the left region, and the comparison holds for every $0\le k<1$. %We will design strategyproof mechanisms later. Next, we turn to the optimal mechanism for social cost.

\subsection{Social Cost}

The optimal solution for social cost is less intuitive than for maximum cost. 
It is obtained by enumerating all possible discrete candidate points and 
choosing the one that minimizes the objective. This is justified by the fact 
that the social cost function is piecewise linear, so an optimal endpoint can be chosen
from a finite set of candidate points.

\begin{mechanism}\label{mec:opt-sc}
Given profile $\mathbf{x}$, compare $(a^*,1)$ from Method~1 with $(0,b^*)$ 
from Method~2, and choose the better.

\textbf{Method 1.} 
 If  $x_r \le \frac{k}{1+k}$, then set $a^*=x_r$.
 Otherwise, choose $a^*$ to minimize social cost over the candidate set
    \[
    S_1 = \{x_i\!:\!i\in N_1\} \cup
          \left\{\frac{2x_i-k}{1-k} \!:\! i\in N_1, 0<\frac{2x_i-k}{1-k}<o\right\}.
    \]

\textbf{Method 2.} 
If  $y_l \ge \frac{1}{1+k}$, then set $b^*=y_l$.
Otherwise, choose $b^*$ to minimize social cost over the candidate set
    \[
    S_2 = \{x_i\!:\!i\in N_2\} \cup
          \left\{\frac{2x_i-1}{1-k} \!:\! i\in N_2, o<\frac{2x_i-1}{1-k}<1\right\}.
    \]
\end{mechanism}

\begin{theorem}\label{thm:sc-opt}
    Mechanism \ref{mec:opt-sc} is optimal for social cost.
\end{theorem}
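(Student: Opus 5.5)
By Lemma~\ref{lem:opt-basic}, some optimal pathway has the form $(a,1)$ with $a\in[0,x_r]$ or $(0,b)$ with $b\in[y_l,1]$; its proof also shows the restriction to these intervals is without loss. So it suffices to show that Method~1 minimizes $SC(a,1,\mathbf x)$ over $a\in[0,x_r]$; Method~2 then follows by the reflection $x\mapsto 1-x$, and comparing the two methods gives the global optimum. On the family $(a,1)$, each right-side agent has cost $\min(|x_j-1|+k(1-a)+a,\ 1-x_j)=1-x_j$, a constant. Hence only the left-side term
\[
g(a)=\sum_{i\in N_1}\min\bigl(|x_i-a|+k(1-a),\ x_i\bigr)
\]
matters.

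First I dispose of the case $x_r\le\frac{k}{1+k}$. Exactly as in the maximum-cost proof, $|x_i-a|+k(1-a)\ge k(1-x_i)\ge x_i$ for every left agent. So each agent's cost is $x_i$ no matter what $a$ is, and $a^*=x_r$ is optimal.

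In the remaining case I argue that $g$ is continuous and piecewise linear on $[0,x_r]$, so its minimum is attained at an endpoint of the interval or at a breakpoint. Each summand is the minimum of a constant $x_i$ and the piecewise linear function $h_i(a)=|x_i-a|+k(1-a)$. The function $h_i$ has slope $-(1+k)$ for $a<x_i$ and slope $1-k$ for $a>x_i$. Its breakpoints are therefore $a=x_i$, together with the points where $h_i(a)=x_i$. On the right branch, the equation $a-x_i+k-ka=x_i$ gives $a=\frac{2x_i-k}{1-k}$, which is exactly the second family in $S_1$. On the left branch, $x_i-a+k-ka=x_i$ gives $a=\frac{k}{1+k}$, the same point for every agent.

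I must show that the minimum is attained within $S_1$. The interval endpoint $x_r$ already belongs to $S_1$. The main obstacle is to rule out the other points: the endpoint $a=0$ and the common crossing point $\frac{k}{1+k}$.

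For $\frac{k}{1+k}$, I plan a local slope argument. Just to its left, every agent whose pathway branch is active has slope $-(1+k)<0$. Just to its right, agents switch from the direct route onto the decreasing branch, and that branch still has negative slope. So the slope does not change from negative to positive there. This kind of kink can only be a local maximum or a continuation, never a strict local minimum, and so $g$ is not uniquely minimized at $\frac{k}{1+k}$. For $a=0$, every $h_i(0)=x_i+k\ge x_i$, so every left agent takes the direct route. Moving $a$ right keeps $g$ constant until $\frac{k}{1+k}$. After that, $g$ is concave or decreasing until the first $x_i$ is passed, so the value at $0$ is matched or beaten at the smallest $x_i$ exceeding $\frac{k}{1+k}$, which lies in $S_1$.

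More generally, a minimum of a piecewise linear function occurs where the slope changes from negative to nonnegative. The only kinks of $g$ with a positive slope jump are the $V$-kinks of the $h_i$ at $a=x_i$. Kinks of the form $\min(\cdot,x_i)$ are concave, so they do not create minima but can still be where the slope turns from negative to zero. This is why the points $\frac{2x_i-k}{1-k}$ are included, with the range condition $0<\cdot<o$ ensuring feasibility. A feasible candidate can still lie in $(x_r,o)$, but then $g$ is nondecreasing there and those values are dominated by $g(x_r)$, so such candidates are harmless.

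Assembling these pieces shows that $\min_{S_1}g=\min_{[0,x_r]}g$. Symmetrically, Method~2 is optimal on the family $(0,b)$. Lemma~\ref{lem:opt-basic} then completes the proof.
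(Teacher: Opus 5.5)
Your proposal follows the paper's proof: reduce to pathways $(a,1)$ with $a\in[0,x_r]$, settle the case $x_r\le\frac{k}{1+k}$ directly, enumerate the breakpoints of the continuous piecewise linear $g$, and exclude the two extra points $0$ and $t=\frac{k}{1+k}$. The argument is sound once three steps are tightened.

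First, your separate argument for $t$ only shows that $g$ is not \emph{uniquely} minimized there, and that does not by itself give a minimizer in $S_1$. Also, no agent uses the pathway to the left of $t$: for $a\le t$ one checks $|x_i-a|+k(1-a)\ge x_i$ for every $i$. So the left slope at $t$ is $0$, not $-(1+k)$. The paper handles $0$ and $t$ together. Since $g$ is constant on $[0,t]$, we have $g(t)=g(0)$, and your argument for $a=0$ then covers $t$ as well.

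Second, ``concave or decreasing'' is too weak in that argument, because a concave function on $[t,x_{(1)}]$ can be minimized at $t$. Here $x_{(1)}$ denotes the smallest location exceeding $t$. What you need, and what holds, is that $g$ is linear on $[t,x_{(1)}]$ with slope $-(1+k)\,|\{i\in N_1:x_i>t\}|$. This slope is strictly negative because $x_r>t$, so $g(x_{(1)})<g(0)=g(t)$.

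Third, your claim that the concave kinks at $\frac{2x_i-k}{1-k}$ ``can still be where the slope turns from negative to zero'' is false: the slope drops by $1-k$ there. This remark is not load-bearing. Your own observation already gives the stronger conclusion. Only the kinks at agent locations carry a positive slope jump, and a minimum of a piecewise linear function occurs at an endpoint or where the slope changes from negative to nonnegative. Hence the minimum over $[0,x_r]$ is attained at $0$, at $x_r$, or at some $x_i$. So the switching points in $S_1$ are never needed; including them is harmless because they are feasible.
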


\begin{proof}
By Lemma~\ref{lem:opt-basic}, it suffices to optimize over pathways $(a,1)$ and $(0,b)$ and compare their social costs. We first consider $(a,1)$. As in the proof of that lemma, we may restrict $a$ to $[0,x_r]$, and the total cost of agents in $N_2$ is constant.

Write $t=\frac{k}{1+k}$. If $x_r\le t$, then for every left-side agent the pathway cost is minimized at $a=x_i$ and even this minimum, $k(1-x_i)$, is at least $x_i$. Thus no left-side agent can improve on the direct route, and the mechanism's choice $a^*=x_r$ is optimal for this family, also covering $N_1=\emptyset$.

Suppose $x_r>t$. For each $i\in N_1$, the cost
\[
\min\{x_i,\ |x_i-a|+k(1-a)\}
\]
is continuous and piecewise linear in $a$. Its possible breakpoints occur at $a=x_i$ and where the pathway cost equals the direct-route cost. The latter conditions are
\[
\begin{aligned}
x_i-a+k(1-a)=x_i \quad &(a\le x_i),
    &&\text{giving }a=t,\\
a-x_i+k(1-a)=x_i \quad &(a\ge x_i),
    &&\text{giving }a=\frac{2x_i-k}{1-k}.
\end{aligned}
\]
Only points satisfying the indicated conditions and lying in $[0,x_r]$ can be actual breakpoints.

The common breakpoint $t$ need not be included in $S_1$. Indeed, for $a\in[0,t]$, every left-side agent takes the direct route at no greater cost, so social cost is constant. Immediately to the right of $t$, each agent with $x_i>t$ strictly benefits and has pathway cost decreasing with slope $-(1+k)$, while agents with $x_i\le t$ retain their direct-route costs. Since $x_r>t$, social cost strictly decreases there. Hence no point in $[0,t]$ is optimal.

A continuous piecewise linear function on $[0,x_r]$ has a minimum at an endpoint or a breakpoint. All remaining possible minimizers of this type belong to $S_1$: $x_r$ is an agent location, and all other relevant breakpoints are agent locations or the listed switching points. Therefore $S_1$ contains an optimal endpoint for the family $(a,1)$.

Reflecting the interval by $x\mapsto 1-x$ gives the symmetric argument for $(0,b)$. The common breakpoint is then $b=\frac{1}{1+k}$, and the other switching points are $b=\frac{2x_i-1}{1-k}$, precisely those used in $S_2$. Thus Method 2 also returns an optimal endpoint for its family. Comparing the two families proves the theorem.
\end{proof}

Note also that Mechanism~\ref{mec:opt-sc} is not strategyproof. For example, 
suppose there is one agent in $N_1$ and one in $N_2$, with 
$o=0.5$, $x_1=0.5-\epsilon$, and $x_2=0.5+\epsilon$, where $\epsilon>0$ is 
small and satisfies $k(0.5+\epsilon-\epsilon^2)+\epsilon^2 < 0.5-\epsilon$. 
The optimal solution is either $(x_1,1)$ or $(0,x_2)$. By symmetry, assume 
the mechanism outputs $(x_1,1)$. If agent $2$ misreports as 
$x_2'=x_2-\epsilon^2$, then the mechanism returns $(0,x_2')$, which reduces 
its cost from $0.5-\epsilon$ to $k(0.5+\epsilon-\epsilon^2)+\epsilon^2$. On the reported profile, the candidate $(x_1,1)$ has social cost exceeding that of $(0,x_2')$ by $(1+k)\epsilon^2$, so the change of output does not depend on tie-breaking.

\section{Deterministic Strategyproof Mechanisms}\label{sec:mec}

All mechanisms in this section are deterministic. We analyze SP/GSP mechanisms for maximum cost and social cost.
% in Sections \ref{subsec:max} and \ref{subsec:soc}, respectively. %For the upper bound, we provide group strategyproof mechanisms, and for the lower bound, we discuss strategyproof mechanisms. In this way, we show that these two types of mechanism share the same bounds.
We use the same notation as before:
$x_l, x_r, y_l, y_r, \delta_1, \delta_2$, etc.

\subsection{Maximum Cost}\label{subsec:max}

%In this section we discuss strategyproof and group strategyproof mechanisms for the maximum cost objective.

%The \textsc{TwoExtreme} Mechanism provides a linear upper bound for social cost.
We first consider the \textsc{TwoExtreme} mechanism as follows, which simply connects $x_r$ and $y_l$.

\begin{mechanism}[\textsc{TwoExtreme}]\label{mec:2-extreme}
Given a location profile $\mathbf x$, return $(a,b)=(x_r,y_l)$. 
\end{mechanism}

We prove that the \textsc{TwoExtreme} mechanism is group strategyproof and achieves a $\frac{3-k}{1+k}$-approximation  for maximum cost.

\begin{lemma}
     Mechanism \ref{mec:2-extreme} is group strategyproof.
\end{lemma}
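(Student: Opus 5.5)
By Corollary~\ref{coro}, it suffices to rule out a profitable joint deviation by a coalition $S$ lying entirely on one side of the obstacle. By the reflection $x\mapsto 1-x$, which swaps the two regions and maps $(x_r,y_l)$ to the corresponding pair, we may assume $S\subseteq N_1$. The reports of agents in $N_2$ are unchanged, so after the deviation the pathway is $(a',y_l)$. Here $a'$ is the maximum over the reports in $N_1$, namely the true locations of agents outside $S$ and the misreports of agents in $S$. The plan is to show that some member of $S$ does not strictly gain.

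Next we describe the cost of a left agent as a function of the left endpoint $a$ with $b=y_l$ fixed. That cost is $\min\{x_i,\ |x_i-a|+k(y_l-a)+1-y_l\}$. For $a\le x_i$ the pathway term has slope $-(1+k)$ in $a$, and for $a\ge x_i$ it has slope $1-k>0$. So each left agent's cost is a unimodal function of $a$, minimized at $a=x_i$, and nondecreasing in $a$ on $[x_i,o)$.

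Now pick $i\in S$ with the largest true location among members of $S$. We treat two cases. If $x_i=x_r$, then every true location is at most $x_i$. If $a'\ge x_i$, the cost of agent $i$ is monotone there, so it cannot drop below its value at $a=x_r=x_i$. If $a'<x_i$, the pathway term is larger than at $a=x_i$ because of the decreasing branch. Either way agent $i$ does not strictly gain. If instead $x_r>x_i$, the agent located at $x_r$ is not in $S$, so it still reports $x_r$ and hence $a'\ge x_r>x_i$. On $[x_i,o)$ the cost of agent $i$ is nondecreasing, so $cost(a',y_l,x_i)\ge cost(x_r,y_l,x_i)$, and again agent $i$ does not gain. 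Thus no coalition has every member strictly benefiting, which gives group strategyproofness.

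The main subtlety is handling the $\min$ with the direct route correctly in the monotonicity claims. The minimum of a nondecreasing function and a constant is still nondecreasing, and the minimum of a function minimized at $x_i$ and a constant is still minimized at $x_i$, so both claims survive the $\min$. Beyond that, we only need to check that the reflection argument for $S\subseteq N_2$ is genuinely symmetric, which holds because the cost formulas are mirror images.
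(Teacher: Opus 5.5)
Your proof is correct and is essentially the paper's argument. The paper splits on whether the new left endpoint satisfies $a'\ge x_r$ (then every left agent's cost weakly rises, slope $1-k$) or $a'<x_r$ (then the agents at $x_r$ must be in $S$ and their cost rises, slope $1+k$); you split on whether the coalition's rightmost member lies at $x_r$, which is the same dichotomy reorganized around the same fact that each left agent's cost, with $b=y_l$ fixed, is minimized at $a=x_i$ and nondecreasing to its right.
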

\begin{proof}
Suppose a coalition $S$ can jointly misreport so that every member strictly benefits. By Corollary~\ref{coro}, all members belong to the same region; by symmetry, assume $S\subseteq N_1$. The right endpoint remains $y_l$. Let $a'$ be the new left endpoint.

If $a'\ge x_r$, then every true left-side location satisfies $x_i\le x_r\le a'$. Increasing the left endpoint from $x_r$ to $a'$ increases each such agent's pathway cost by $(1-k)(a'-x_r)$, so no member can strictly benefit. If $a'<x_r$, every agent at $x_r$ must belong to $S$. For any such agent, the pathway cost increases by $(1+k)(x_r-a')$, so its actual cost cannot decrease either. Both cases contradict the assumed strict improvement.
\end{proof}

\begin{theorem}
Mechanism \ref{mec:2-extreme} is group strategyproof and achieves a 
$\frac{3-k}{1+k}$-approximation for the maximum cost.
\end{theorem}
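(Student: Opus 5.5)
By the preceding lemma, Mechanism~\ref{mec:2-extreme} is already group strategyproof, so only the ratio $\frac{3-k}{1+k}$ needs proof. By the reflection $x\mapsto 1-x$, I assume $x_r+y_l\ge 1$. Write $u=1-y_l\le x_r$. With no pathway the maximum cost is $x_r$. The mechanism's output is always at least as good as the direct routes, so $ALG\le x_r$. Hence it suffices to treat the profiles where $OPT<x_r$, so that the optimum strictly helps someone. As in the proof of optimality of Mechanism~\ref{mec:opt-mc}, the optimum can then be taken of the form $(a,1)$, and
\[
OPT\ \ge\ \max\{u,\ \min\{v,D(v)\}\}\quad\text{for every left agent at } v\le x_r,
\]
using the two-agent bound for the pair $v,x_r$ established there.

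For the upper bound, fix the agent realizing $ALG$. A left agent at $x_i$ pays
\[
\min\{x_i,\ x_r-x_i+k(y_l-x_r)+u\}.
\]
A right agent at $x_j$ pays $\min\{1-x_j,\ x_j-y_l+k(y_l-x_r)+x_r\}$. For the right agent I will use $1-x_j\le u\le OPT$, so right agents never violate the bound. The only case is therefore a left agent at $v=x_i$, whose cost is
\[
g(v)=\min\{v,\ x_r-v+k(y_l-x_r)+u\}.
\]
I must show $g(v)\le\frac{3-k}{1+k}\max\{u,\min\{v,D(v)\}\}$, where $D(v)=\frac{x_r-v}{2}+k\bigl(1-\frac{x_r+v}{2}\bigr)$.

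I split on which term attains $\min\{v,D(v)\}$.
\begin{itemize}
\item If $v\le D(v)$, then $OPT\ge v\ge g(v)$ and the ratio is $1$.
\item If $D(v)<v$, I compare $x_r-v+k(y_l-x_r)+u$ against $D(v)$ and $u$. Rewrite $k(y_l-x_r)=k(1-x_r)-ku$. Then the mechanism's pathway cost equals
\[
2D(v)-k(1-v)+k(1-x_r)+(1-k)u\;\le\;2D(v)+(1-k)u.
\]
This gives $g(v)\le \min\{v,\ 2D(v)+(1-k)u\}$ with $OPT\ge\max\{u,D(v)\}$. It bounds the ratio by $3-k$, which is too weak.
\end{itemize}
To sharpen it, I will also use $g(v)\le v$ and the relation $D(v)<v$, together with the constraint $v\le x_r$ and $u\le x_r$. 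I will parametrize by $OPT=\max\{u,D(v)\}$ and maximize $\min\{v,\text{path cost}\}$ as a linear program in $(v,x_r,u)$. The extreme point I expect is $u=D(v)$ with $v$ equal to the path cost. Solving there should yield exactly $\frac{3-k}{1+k}$.

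The main obstacle is this last optimization. I have to balance the two lower bounds on $OPT$ against the two upper bounds on $g(v)$, while keeping track of the terms $-k(1-v)+k(1-x_r)$ that I dropped above. These terms favor $v$ close to $x_r$, and exploiting them is what should bring the ratio from $3-k$ down to $\frac{3-k}{1+k}$. I will first try the candidate extremal configuration: a left agent at $v$ indifferent between the two routes under $(x_r,y_l)$, with $u=D(v)=OPT$. From it I also expect a tightness example, namely agents at $v$ and $x_r$ on the left and one agent at $y_l$ on the right.
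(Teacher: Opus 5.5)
\textbf{There is a genuine gap: the proof stops at the decisive step.} Your setup matches the paper: the reduction to $x_r+y_l\ge 1$, the bound $OPT\ge\max\{u,\min\{v,D(v)\}\}$, the case split, and the observation that the crude estimate gives only $3-k$. But the sharpening to $\frac{3-k}{1+k}$ is deferred to an optimization whose value you only expect. As you set it up, it would not return that value, for two reasons.

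\emph{An algebra slip.} The pathway cost is
\[
P(v)=x_r-v+k(y_l-x_r)+u=2D(v)-k(1-v)+(1-k)u,
\]
with no $+k(1-x_r)$ term. With your version, take $v\to x_r$ and $u=D(x_r)=k(1-x_r)$. The bound then becomes $(3-k)D$, so the optimization would return $3-k$.

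\emph{A missing constraint.} Your list ($v\le x_r$, $u\le x_r$, $D(v)<v$) omits the feasibility condition $x_r<o<y_l$, i.e., $x_r+u<1$. Without it, even the corrected formula fails. Take $x_r\to 1$, $v=x_r-\epsilon$ and $u=D(v)$; then $P(v)/D(v)\to\frac{3-k^2}{1+k}>\frac{3-k}{1+k}$. So the improvement comes precisely from $y_l\ge x_r$. The extremal instances have $y_l=x_r$ and $u=D(v)$; the condition $v=P(v)$ is not what binds.

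\textbf{The missing idea is one line, and no optimization is needed.} Since $y_l\ge x_r$ and $k<1$,
\[
P(v)=1-v-(1-k)(y_l-x_r)\le 1-v,
\]
so $kP(v)\le k(1-v)$. This is exactly the term you dropped. Adding it to the corrected identity gives
\[
(1+k)P(v)\le 2D(v)+(1-k)u\le (3-k)\,OPT
\]
in the case $D(v)<v$, using $D(v)\le OPT$ and $u\le OPT$. This is the paper's identity $(1+k)P(v)=2D(v)+(1-k)T-k(1-k)(y_l-x_r)$ with $T=u$. Combined with $g(v)\le P(v)$ and your case $v\le D(v)$, it completes the proof.
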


\begin{proof}
Group strategyproofness follows from the preceding lemma. By symmetry, assume $x_r+y_l\ge 1$, and write $ALG$ and $OPT$ for the maximum costs of the mechanism and an optimal solution. As shown in the proof of optimality of Mechanism~\ref{mec:opt-mc}, an optimal solution can be chosen with $b=1$. For any left-side agent at $v$, let
\[
T=1-y_l,\qquad
D(v)=\frac{x_r-v}{2}+k\left(1-\frac{x_r+v}{2}\right).
\]
The right-side agents and the two-agent bound established in that proof give
\begin{equation}\label{eq:mc-pair-lower}
OPT\ge T,\qquad OPT\ge\min\{v,D(v)\}.
\end{equation}
The first bound also holds when $N_2$ is empty, since then $T=0$.

Every right-side agent has cost at most $T$. Consider a left-side agent at $v$. If $v\le OPT$, its direct route already gives cost at most $OPT$. Otherwise, \eqref{eq:mc-pair-lower} implies $D(v)\le OPT$. Its pathway cost under TwoExtreme is
\[
P(v)=x_r-v+k(y_l-x_r)+1-y_l.
\]
Since $y_l\ge x_r$ and $0\le k<1$, we have
\begin{align*}
(1+k)P(v)
&=2D(v)+(1-k)T-k(1-k)(y_l-x_r)\\
&\le 2D(v)+(1-k)T
\le (3-k)OPT.
\end{align*}
Thus every agent's actual cost is at most $\frac{3-k}{1+k}OPT$, which proves the claimed bound, including when $OPT=0$.
\end{proof}

% \noindent \textbf{Remark.} The analysis for the approximation ratio of Mechanism \ref{mec:2-extreme} is tight.
% For instance, take $n=3$, $o=0.9$, and
% \[
% x_1 = 0.9-0.2\cdot\frac{1-k}{1+k},\quad x_2\to o^-,\quad x_3\to o^+.
% \]
% Then the optimal solution is $(0.9-0.1\cdot\frac{1-k}{1+k},\,1)$ with maximum cost $0.1\cdot \frac{1-k}{1+k}+k(0.1+0.1\cdot \frac{1-k}{1+k})=0.1$, whereas the mechanism returns $(0.9,0.9)$ with maximum cost $0.2\cdot\frac{1-k}{1+k}+0.1 = 0.1\cdot\frac{3-k}{1+k}$, giving the ratio $\frac{3-k}{1+k}$.

While the \textsc{TwoExtreme} mechanism is group strategyproof and achieves a $\frac{3-k}{1+k}$-approximation for the maximum cost (as proved above), we provide a novel mechanism that improves this ratio.
%Firstly we provide a group strategyproof mechanism and discuss the upper bound.

\begin{mechanism}[\textsc{CriticalExtreme}]\label{mec:critical-extreme}
Given a location profile $\mathbf x$, define
\(L(y)=\frac{2y-1}{1-k}\) and \(R(x)=\frac{1-2x}{1-k}\).
If $x_r+y_l \le 1$, return $(a,b)=\bigl(\max\{0,y_l-R(x_r)\},y_l\bigr)$;
otherwise return $(a,b)=\bigl(x_r,\min\{1,x_r+L(y_l)\}\bigr)$.
\end{mechanism}

% \begin{mechanism}\label{mec:2-extreme-improved}
% Given a location profile $\mathbf x$,  if $x_r+y_l\le 1$, return $(a, b) =(\max(0, 2x_r+y_l-1), y_l)$; else return $(a, b) = (x_r, \min(1, x_r+2y_l-1))$.
% \qquad if $2x_r + y_l\le 1$, return $(0, y_l)$;

% \qquad else return $(2x_r+y_l-1, y_l)$;

% \quad else:

% \qquad if $2y_l + x_r\ge 2$, return $(x_r, 1)$;

% \qquad else return $(x_r, x_r+2y_l-1)$.
% \end{mechanism}

The output is feasible in both branches. If $x_r+y_l\le1$, then $(1+k)x_r+(1-k)y_l\le x_r+y_l\le1$, implying $R(x_r)\ge y_l-x_r$ and hence $a\le x_r<o$. If $x_r+y_l>1$, then $(1-k)x_r+(1+k)y_l\ge x_r+y_l>1$, implying $L(y_l)\ge y_l-x_r$ and hence $b\ge y_l>o$. We next establish group strategyproofness and the approximation guarantee. 

\begin{lemma}
    Mechanism \ref{mec:critical-extreme} is group strategyproof.
\end{lemma}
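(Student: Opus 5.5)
By Corollary~\ref{coro}, any coalition $S$ whose members all strictly benefit lies on one side of the obstacle. The mechanism is symmetric under the reflection $x\mapsto 1-x$: it swaps $x_r\leftrightarrow 1-y_l$, $L\leftrightarrow R$, and the two branches. So I would assume $S\subseteq N_1$ and derive a contradiction. The reported $y_l$ is then fixed, and the output depends on the coalition's reports only through the new left extreme $x_r'$. I would therefore study the output as a function of the single parameter $x_r'$ and compare each coalition member's true cost at $x_r'$ with its cost at $x_r$. There are two kinds of deviation.

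\emph{Upward deviations, $x_r'>x_r$.} I would show that the pathway cost of every true left location $x_i\le x_r$ is weakly increasing in the reported extreme. In the branch $x_r'+y_l>1$, the output is $(x_r',\min\{1,x_r'+L(y_l)\})$. An agent with $x_i\le a=x_r'$ has pathway cost
\[
(1-k)a+(k-1)b+1-x_i=1-x_i-(1-k)(b-a).
\]
Here $b-a=\min\{1-x_r',L(y_l)\}$ is weakly decreasing in $x_r'$, so this cost is weakly increasing. In the branch $x_r'+y_l\le1$, the right endpoint stays at $y_l$ and $a=\max\{0,y_l-R(x_r')\}$ is nondecreasing in $x_r'$. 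Here I would prove a \emph{no-benefit claim}: in this branch no left agent at $x_i\le x_r'$ strictly prefers the pathway to its direct route. Then every member's cost equals $x_i$, which is no better than before the deviation. Finally, a deviation that crosses from the first branch into the second raises $a$ to $x_r'\ge x_r$ with $b\le 1$, so the first computation applies.

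\emph{Downward deviations, $x_r'<x_r$.} Every agent truly located at $x_r$ must then belong to $S$, so it suffices to show that the agent at $x_r$ gains nothing. If the outcome for the truthful profile lies in the first branch, the no-benefit claim gives this agent cost $x_r$ under truth. Lowering the report keeps us in that branch, and there the agent can do no better than $x_r$: I would check this by bounding its pathway cost against the relevant endpoint. If the truthful outcome is in the second branch, the agent sits at $a=x_r$. Moving $a$ left increases its pathway cost by $(1+k)(x_r-a')$, minus at most $k$ times the change of $b$. Since $b$ moves by no more than $a$ does, the net change is still nonnegative. This is the same argument as in the \textsc{TwoExtreme} lemma, adjusted for the moving endpoint $b$, and it also covers a switch between branches.

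The main obstacle is the no-benefit claim for the branch $x_r+y_l\le1$. It requires the inequality
\[
x_i\le |x_i-a|+k(y_l-a)+1-y_l
\quad\text{for all } x_i\le x_r,\ a=\max\{0,y_l-R(x_r)\}.
\]
I would attack it first at the worst location $x_i=x_r$ and for $a=y_l-R(x_r)$, substituting $R(x_r)=\frac{1-2x_r}{1-k}$ and using $x_r+y_l\le1$ together with $(1+k)x_r+(1-k)y_l\le1$. The latter is the feasibility inequality already noted after the mechanism. If the inequality turns out not to hold uniformly, the fallback is to prove directly that the agent at $x_r$ cannot lower its cost by any report in this branch. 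That would use the choice of $a$ as the point where the right extreme's pathway and direct costs balance.
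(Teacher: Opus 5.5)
Your overall structure is the paper's. You reduce to a one-sided coalition via Corollary~\ref{coro} and observe that the output depends only on the reported extreme. You show that no left agent benefits in the branch $x_r'+y_l\le1$, use monotonicity of the pathway cost in the reported extreme within the other branch, and note that every agent at $x_r$ must be in $S$ under a downward deviation. What you call the main obstacle is immediate and needs no fallback. Since $|v-a|\ge v-a$, the pathway cost of \emph{every} left location $v$ is at least $v+1-(1+k)a-(1-k)y_l$. Then $a\le x_r'$ and $(1+k)x_r'+(1-k)y_l\le x_r'+y_l\le1$ (as $x_r'<y_l$) make this at least $v$. Proving it for all $v$, not only $v\le x_r'$, also handles downward deviations that land in this branch, where the true agent at $x_r$ lies to the right of the reported extreme.

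The step that fails as written is the upward deviation that crosses from $x_r+y_l\le1$ into $x_r'+y_l>1$. Monotonicity within the second branch only compares two second-branch outputs, so it says nothing about the first-branch truthful outcome. The bound $b\le1$ that you invoke only gives pathway cost at least $k+(1-k)x_r'-x_i$, which can be far below $x_i$. For example, with $k=0$, $y_l=0.6$, $x_i=x_r=0.39$ and $x_r'=0.45$, this bound equals $0.06$. You need the other cap, $b-a\le L(y_l)$, i.e.\ $(1-k)(b-a)\le 2y_l-1$. It turns $1-x_i-(1-k)(b-a)$ into at least $2(1-y_l)-x_i$, which is at least $x_i$ because every member satisfies $x_i\le x_r\le1-y_l$. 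This is exactly the paper's bound $2T-v\ge v$.

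Also, the reflection $x\mapsto1-x$ is an exact symmetry of the mechanism only away from the tie $x_r+y_l=1$, which the mechanism always assigns to the first branch. You should add, as the paper does, that at the tie either branch formula gives every agent its direct-route cost, so the tie-breaking does not affect the argument.
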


\begin{proof}
By Corollary~\ref{coro}, a coalition whose members all strictly benefit must lie in one region. We give the argument for a coalition $S\subseteq N_1$; the other side is symmetric. The value $y_l$ remains fixed. Write $r=x_r$ for the true rightmost location in $N_1$ and $s$ for its value after the joint misreport, and put $T=1-y_l$.

First consider any report with $s+y_l\le 1$. The output has $b=y_l$ and $a=\max\{0,y_l-R(s)\}\le s$. Since $s<y_l$, we have
\[
(1+k)a+(1-k)y_l\le (1+k)s+(1-k)y_l\le s+y_l\le 1.
\]
For any true left-side location $v$, the pathway cost is therefore at least
\[
v-a+k(y_l-a)+1-y_l
=v+1-(1+k)a-(1-k)y_l\ge v.
\]
Hence such an output gives every left-side agent its direct-route cost.

For a report with $s+y_l>1$, the output is $(s,\min\{1,s+L(y_l)\})$. The cost from $s$ to the facility at $1$ via this pathway is
\[
h(s)=\max\{k(1-s),\ 2T-s\}.
\]
In particular, when $v\le s$, the pathway cost is
\[
s-v+h(s)=\max\{k+(1-k)s-v,\ 2T-v\},
\]
which is nondecreasing in $s$.

If $r+y_l\le 1$, every member initially has its direct-route cost. A new report with $s+y_l\le 1$ cannot improve it. If instead $s+y_l>1$, then $v\le r\le T<s$ for every member, and the displayed pathway cost is at least $2T-v\ge v$, so again no member benefits.

It remains to consider $r+y_l>1$. A new report with $s+y_l\le 1$ cannot improve any member's cost. Otherwise, if $s\ge r$, the displayed monotonicity applies to all true locations $v\le r$, so none can benefit. If $s<r$, every agent at $r$ must belong to $S$. Since $h$ is nonincreasing, its new pathway cost $r-s+h(s)$ is at least $h(r)$, and its actual cost cannot decrease. This rules out every profitable coalition in $N_1$. The symmetric argument applies to $N_2$; at $s+y_l=1$, either branch formula gives all agents their direct-route costs, so the tie-breaking rule does not affect the argument.
\end{proof}

\begin{theorem}\label{thm:333}
    Mechanism \ref{mec:critical-extreme} is group strategyproof and achieves a $\frac{2}{1+k}$-approximation for the maximum cost.
\end{theorem}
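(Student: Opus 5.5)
The group strategyproofness part is exactly the preceding lemma, so only the $\frac{2}{1+k}$ bound needs proof. By the reflection $x\mapsto 1-x$ the two branches of Mechanism~\ref{mec:critical-extreme} are symmetric, so I would treat only the branch $x_r+y_l>1$. There the output is $(x_r,\min\{1,x_r+L(y_l)\})$. The comparison with $OPT$ reuses the lower bounds \eqref{eq:mc-pair-lower} from the proof for \textsc{TwoExtreme}. With $T=1-y_l$ and $D(v)$ as defined there, these give $OPT\ge T$ and $OPT\ge\min\{v,D(v)\}$ for every left-side location $v$.

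\textbf{Right-side agents.} Every right-side agent has direct-route cost $1-x_j\le T\le OPT$. This already suffices, since $\frac{2}{1+k}\ge1$.

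\textbf{Left-side agents.} Fix a left-side agent at $v\le x_r$. If $v\le OPT$, its direct route gives cost at most $OPT$. Otherwise $v>OPT$, and \eqref{eq:mc-pair-lower} forces $D(v)\le OPT$. In the proof of the preceding lemma, the pathway cost for $v\le s$ was computed as $s-v+h(s)$, with $h(s)=\max\{k(1-s),2T-s\}$. Taking $s=x_r$, the agent's pathway cost is
\[
P(v)=\max\{k+(1-k)x_r-v,\ 2T-v\}.
\]
I would bound each of the two terms separately.

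\emph{First term.} Expand $2D(v)=(1-k)x_r-(1+k)v+2k$. A direct computation should give
\[
(1+k)\bigl(k+(1-k)x_r-v\bigr)-2D(v)=k(1-k)(x_r-1)\le 0,
\]
so this term is at most $\frac{2}{1+k}D(v)\le\frac{2}{1+k}OPT$.

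\emph{Second term.} We are in the case $v>OPT\ge T$, so $2T-v<T\le OPT$.

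Hence every agent's actual cost is at most $\frac{2}{1+k}OPT$, including the degenerate case $OPT=0$.

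\textbf{Remaining checks.} The main place needing care is verifying that the expression for $h$ really matches the mechanism's $b=\min\{1,x_r+L(y_l)\}$ in both sub-cases. When $b=1$, the cost from $x_r$ is $k(1-x_r)$. Otherwise $b-x_r=L(y_l)$, and the agent's cost via the pathway to $1$ is $kL(y_l)+1-b$, which should simplify to $2T-x_r$. The other point to confirm is that empty regions are covered by the conventions $x_r=0$ and $y_l=1$.
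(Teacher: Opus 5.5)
Your treatment of the branch $x_r+y_l>1$ coincides with the paper's proof. It uses the same lower bounds \eqref{eq:mc-pair-lower}, the same pathway cost $P(v)=\max\{k+(1-k)x_r-v,\,2T-v\}$, the identity $(1+k)\bigl(k+(1-k)x_r-v\bigr)-2D(v)=-k(1-k)(1-x_r)$, and the bound $2T-v<T\le OPT$ for the second term. Your algebra is correct, and so is your check of $h(s)$ against $b=\min\{1,x_r+L(y_l)\}$.

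The one thing missing is the tie $x_r+y_l=1$. The reflection $x\mapsto 1-x$ sends $(x_r,y_l)$ to $(1-y_l,1-x_r)$. It therefore exchanges the sets $\{x_r+y_l<1\}$ and $\{x_r+y_l>1\}$, but maps the tie to itself, and the mechanism places the tie in the first branch. So ``treat only $x_r+y_l>1$ and reflect'' never covers this case. The paper handles it first, in one line. At the tie, the direct-route maximum cost is $x_r=1-y_l$, and no agent's actual cost ever exceeds its direct-route cost. By Lemma~\ref{lem:basic}, no pathway strictly helps both the agent at $x_r$ and the agent at $y_l$, so $OPT\ge x_r$ and the mechanism is optimal. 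If a region is empty, the tie forces all costs to be zero. Alternatively, your analysis of the formula $(x_r,\min\{1,x_r+L(y_l)\})$ uses only $x_r+y_l\ge 1$. At the tie, the first-branch output is the mirror image of that formula evaluated on the mirrored profile, which is again a tie. With either observation added, your proof is complete.
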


\begin{proof}
Group strategyproofness follows from the preceding lemma. If $x_r+y_l=1$, the maximum cost without a pathway is $x_r=1-y_l$. By Lemma~\ref{lem:basic}, no pathway can strictly reduce both extreme agents' costs, so the mechanism is optimal. This also covers an empty region, in which case all agents have zero direct-route cost.

Otherwise, by symmetry assume $x_r+y_l>1$. Use the same notation as in the preceding approximation analysis:
\[
T=1-y_l,\qquad
D(v)=\frac{x_r-v}{2}+k\left(1-\frac{x_r+v}{2}\right).
\]
The optimal maximum cost again satisfies \eqref{eq:mc-pair-lower}. Every right-side agent has cost at most $T\le OPT$. For a left-side agent at $v$, the mechanism's pathway cost is
\begin{align*}
P(v)
&=x_r-v+k(b-x_r)+1-b\\
&=\max\{x_r-v+k(1-x_r),\ 2T-v\},
\end{align*}
where $b=\min\{1,x_r+L(y_l)\}$.

If $v\le OPT$, the direct route gives the desired bound. Otherwise, $D(v)\le OPT$ by \eqref{eq:mc-pair-lower}, and
\begin{align*}
(1+k)\bigl(x_r-v+k(1-x_r)\bigr)
&=2D(v)-k(1-k)(1-x_r)\\
&\le 2D(v)\le 2OPT.
\end{align*}
The other term in $P(v)$ satisfies $2T-v\le T\le OPT$. Since $\frac{2}{1+k}\ge 1$, both terms are at most $\frac{2}{1+k}OPT$. Hence every agent's actual cost satisfies the claimed bound, including when $OPT=0$.
\end{proof}

Next, we prove a matching lower bound of $\frac{2}{1+k}$.

\begin{theorem}
    For any $0\le k<1$ and $n\ge 2$, no strategyproof mechanism can achieve an approximation ratio better than $\frac{2}{1+k}$ for the maximum cost. 
\end{theorem}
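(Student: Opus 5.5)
The plan is to adapt the classical two-point argument for facility location to the left region. All agents are placed in $N_1$, and the obstacle $o$ is pushed close to $1$, so that the right facility is reachable only through the pathway. By the extension argument in the proof of Lemma~\ref{lem:opt-basic}, an output whose left endpoint stays at $a$ and whose right endpoint moves to $b=1$ weakly lowers every left-side cost. So in all cost comparisons below we will assume the right endpoint is $1$ and track only $a$.

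For $n\ge 2$ we split the agents into two colocated groups, at locations $v<w$. By Lemma~\ref{lemma:partial-gsp}, each group behaves like a single agent under joint misreports. The locations are chosen with $v,w>\frac{k}{1+k}$ and with $D(v)\le v$, where $D$ is the function from the proof of optimality of Mechanism~\ref{mec:opt-mc}. This forces both groups onto the pathway in every near-optimal solution.

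The argument runs in four steps.

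\textbf{Step 1 (base profile).} Take the profile $(v,w)$. By that optimality proof, $OPT=D(v)=\frac{w-v}{2}+k\bigl(1-\frac{w+v}{2}\bigr)$, attained at the midpoint $\frac{v+w}{2}$. Write $a=f(v,w)$ for the mechanism's left endpoint. Either $a\ge\frac{v+w}{2}$, which hurts the group at $v$, or $a\le\frac{v+w}{2}$, which hurts the group at $w$. We treat one case; the other is the mirror image.

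\textbf{Step 2 (deviation).} Suppose $a$ lies on the side that hurts the group at $w$. Consider the profile in which the $v$-group is instead reported at a farther point $v'<v$, chosen so that $w$ alone forces $OPT\approx k(1-w)$ on the new true profile. Partial group strategyproofness of the colocated group then transports the bad position of $a$ step by step into a profile where the optimum is small but the mechanism still charges the $w$-group roughly $(w-a)+k(1-a)$. The leverage here is the monotonicity established in the GSP proofs: the pathway cost of $v$ is linear in $a$ with slope $-(1+k)$ left of $v$ and slope $1-k$ right of $v$.

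\textbf{Step 3 (parameters and ratio).} Tune $v$, $w$, $o\to1^-$ and the gap $w-v=2d$ so that the mechanism's realized cost is $2d$ plus lower-order terms while the optimum is $(1+k)d$. The ratio is then $\frac{2}{1+k}$. The lower-order terms are those involving $1-w$ and $k(1-a)$; they are driven to $0$ by sending $o$, $w$ and $1-a$ towards $1$ or $0$ at the right relative rates.

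\textbf{Main obstacle.} The delicate point is the capped cost $\min\{\cdot,x_i\}$. A deviating group may simply take its direct route, and that would kill the chain of inequalities in Step 2. The first thing to try is to keep every location above the threshold $\delta_1$, so that pathway costs stay below direct costs throughout. If that conflicts with getting the exact constant, the fallback is a three-profile chain $(v,v)\to(v,w)\to(w,w)$. At the end we would check that the constant comes out exactly as $\frac{2}{1+k}$, rather than the naive $2$ that a zero-traversal-cost calculation gives, by accounting carefully for the factor $1+k$ in $D$.
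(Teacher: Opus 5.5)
\textbf{There is a genuine gap.} The reduction in Step~1 is wrong, Step~2 gives no contradiction, and the step that produces $\frac{2}{1+k}$ is missing.

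\textbf{Step 1: the two cases are not mirror images.} For $k>0$ they are not symmetric. The pathway cost $|x-a|+k(b-a)+1-b$ has slopes $-(1+k)$ and $1-k$ in $a$, so the two extreme outputs are not equally bad. Take groups at $w-t$ and $w$ with $w$ close to $1$:
\begin{itemize}
\item $a=w$ charges the left group about $t$ against $OPT\approx\frac{(1+k)t}{2}$, a ratio of $\frac{2}{1+k}$. This is what \textsc{CriticalExtreme} does.
\item $a=w-t$ charges the right group about $(1+k)t$, a ratio of $2$.
\end{itemize}
So the case you set aside as ``the mirror image'' is exactly the one that determines the constant.

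\textbf{Step 2: moving the $v$-group away from $a$ pins nothing.} At $(v',w)$ with $v'$ tiny, the mechanism may place $a$ near $w$. The group at $v'$ takes its direct route, so its SP constraint is vacuous. The group truly at $v$ gains nothing by reporting $v'$: with $a$ at the midpoint its cost is $(1+k)d+k(1-w)$, which is already at most $2d+k(1-w)$. No inequality you invoke is violated, so the ``realized cost $2d$'' of Step~3 is never forced. Neither fallback in your last paragraph supplies the missing pinning step either.

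\textbf{The missing argument.}
\begin{enumerate}
\item \emph{Interior property.} Set $w=1-\eta$ and $o=1-\frac{\eta}{2}$, with $\eta$ small relative to the gaps involved. If $\rho<\frac{2}{1+k}$, the guarantee on a profile with gap $t$ is strictly below $t$. Hence neither extreme is allowed, and the left endpoint satisfies $w-t<a<w$.
\item \emph{Move the left group onto $a$.} Move it toward the endpoint, not away from it. For a location $x$ to the left of the left endpoint, the pathway cost is $H-x$ with $H=1+(1-k)(a-b)$.
\item \emph{Two-sided SP.} SP for the group truly at $w-t$ reporting $a$ gives $H\le H'$. SP for the group truly at $a$ reporting $w-t$ gives $H'\le H$. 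Here the interior property at the new profile supplies $a<a'$, and every agent uses the pathway at truthful profiles. Hence $a'-a=b'-b<1-o$.
\item \emph{Contradiction.} The new profile has gap $w-a$. The $w$-group pays at least $w-a'>w-a-\frac{\eta}{2}$. By the interior bound applied to this profile, the allowed cost is below $w-a-\eta$, a contradiction.
\end{enumerate}

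\textbf{Setting $b=1$ is not harmless.} Your simplification breaks this step. An SP inequality compares an exact truthful cost with an exact deviation cost. Replacing $b$ by $1$ only lower-bounds both sides, which yields nothing. The endpoint $b$ must be carried through the invariant $H$ and then made negligible using $1-o\to 0$.
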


\begin{proof}
Fix $0\le k<1$ and $n\ge 2$. Suppose, for contradiction, that a strategyproof mechanism guarantees an approximation ratio $\rho<\frac{2}{1+k}$, where $\rho\ge 1$. Let
\[
\gamma=1-\frac{\rho(1+k)}{2}>0,\qquad d=\frac{1}{4},
\]
and choose a positive $\eta$ such that
\[
\eta<\min\left\{\frac{1}{4},\frac{\gamma^2d}{2(1+\rho k)}\right\}.
\]
Set $r=1-\eta$ and fix the obstacle at $o=1-\frac{\eta}{2}$.

Consider a profile with agent $1$ at $r-t$ and all other agents at $r$, where $\frac{\gamma d}{2}\le t\le d$. The feasible pathway $\left(r-\frac{t}{2},1\right)$ gives every agent cost at most $\frac{(1+k)t}{2}+k\eta$. Thus the mechanism's maximum cost is at most
\[
B(t)=\rho\left(\frac{(1+k)t}{2}+k\eta\right)
=(1-\gamma)t+\rho k\eta<t-\eta,
\]
where the strict inequality follows from $\gamma t\ge\frac{\gamma^2d}{2}>(1+\rho k)\eta$. Since every direct-route cost is at least $r-d>\frac{1}{2}>t$, all agents strictly prefer the pathway at the mechanism's output. Moreover, its left endpoint must lie strictly between $r-t$ and $r$: otherwise, one of the two occupied locations has pathway cost at least $t$, contradicting the bound $B(t)<t$.

Now take $t=d$, put $x=r-d$, and denote the output by $(a,b)$. We have $x<a<r$. Since the pathway cost of agent $1$ is at least $a-x$, the distance $d'=r-a$ satisfies
\[
d'\ge d-B(d)=\gamma d-\rho k\eta>\frac{\gamma d}{2},
\]
using $\rho k\eta<\frac{\gamma^2d}{2}$ and $0<\gamma<1$. Also $d'<d$.

Move agent $1$ to $a$, leaving every other location and the obstacle unchanged, and let $(a',b')$ be the new output. Applying the preceding bounds with $t=d'$ gives $a<a'<r$ and maximum cost at most $B(d')<d'-\eta$.

To apply strategyproofness in both directions, write
\[
H=1+(1-k)(a-b),\qquad H'=1+(1-k)(a'-b').
\]
Agent $1$ uses the pathway at each truthful profile. Since $x<a<a'$, its costs and the two strategyproofness inequalities give
\[
\begin{aligned}
H-x&\le\min\{x,H'-x\}\le H'-x,\\
H'-a&\le\min\{a,H-a\}\le H-a.
\end{aligned}
\]
Hence $H=H'$. As $k<1$, this implies
\[
a'-a=b'-b<1-o=\frac{\eta}{2}.
\]
At the new profile, an agent at $r$ uses the pathway and therefore has cost at least
\[
r-a'=d'-(a'-a)>d'-\frac{\eta}{2}>B(d').
\]
This contradicts the approximation guarantee. The claimed lower bound follows.
\end{proof}

%Therefore for maximum cost, we have a lower bound of $\frac{2}{1+k}$.

\subsection{Social Cost}\label{subsec:soc}

We next discuss SP/GSP mechanisms for the social cost objective. 
Recall that the \textsc{TwoExtreme} mechanism (Mechanism \ref{mec:2-extreme}), which simply connects $(x_r,y_l)$, is group strategyproof. We prove an approximation ratio of $\frac{n}{1+k(n-1)}$ for the social cost. This ratio is a constant when $k=\Omega(1)$ and is linear when $k=0$.

%Although \cite{chan2025mechanism} established strategyproofness for their cost model, we reprove it because the definition of an agent's cost differs in our model.\redcomment{maybe move GSP proof to appendix?}

\begin{theorem}\label{thm:sccc}
    Mechanism \ref{mec:2-extreme} is group strategyproof and achieves a $\frac{n}{1+k(n-1)}$-approximation for the social cost.
\end{theorem}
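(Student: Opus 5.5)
By symmetry and Lemma~\ref{lem:opt-basic}, I would fix an optimal pathway of the form $(a^*,1)$ with $a^*\le x_r$, where only left-side agents can benefit; the case $(0,b^*)$ follows by the reflection $x\mapsto 1-x$. Group strategyproofness is already given by the lemma for Mechanism~\ref{mec:2-extreme}, so only the ratio remains. Write $ALG_i$ and $OPT_i$ for agent $i$'s cost under $(x_r,y_l)$ and under $(a^*,1)$, and put $Q=x_r-a^*\ge 0$ and $T=1-y_l$. The target ratio can be rewritten as
\[
\frac{n}{1+k(n-1)}=1+\frac{(n-1)(1-k)}{1+k(n-1)}.
\]
So I would aim for two bounds, with $Q$ possibly replaced by a similar quantity:
\[
ALG-OPT\le (n-1)(1-k)Q,\qquad OPT\ge\bigl(1+k(n-1)\bigr)Q.
\]

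\textbf{Per-agent excess.} Right-side agents pay their direct cost in both solutions, since $b^*=1$ does not help them and $(x_r,y_l)$ cannot hurt them. A left agent that goes direct in OPT has $ALG_i\le x_i=OPT_i$. For a left agent that uses $(a^*,1)$ in OPT, I compare its two pathway costs,
\[
x_r-x_i+k(y_l-x_r)+T \quad\text{and}\quad |x_i-a^*|+k(1-a^*).
\]
The difference is at most $(1-k)Q+(1-k)T$, using $x_r-x_i-|x_i-a^*|\le Q$ and $k(y_l-x_r)+T-k(1-a^*)=(1-k)T-kQ$. Hence the excess is $(1-k)(Q+T)$ at most. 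The stray $T$-term must be absorbed. If $T>0$, some right agent contributes $T$ to $OPT$, so I would set $Q'=Q+T$ and try to prove $OPT\ge(1+k(n-1))Q'$ in place of the bound with $Q$. The agent at $x_r$ itself is handled separately. Its ALG cost is at most $k(y_l-x_r)+T$, and its OPT cost is at least $\min\{x_r,\,Q+k(1-a^*)\}$, so its excess should be nonpositive or absorbed. That would give the factor $n-1$ rather than $n$.

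\textbf{Lower bound on OPT.} This step is the main obstacle. The agent at $x_r$ pays at least $Q+k(1-a^*)\ge(1+k)Q$ in OPT if it uses the pathway. Every other left agent using the pathway pays at least $k(1-a^*)\ge kQ$. A right agent pays $T$. A left agent going direct pays $x_i$, and I must show $x_i\ge kQ$. This should follow because $x_i\le |x_i-a^*|+k(1-a^*)$ holds at the optimum, and $k(1-a^*)\ge kQ$, but only after a short case check. The real difficulty is agents whose pathway excess is positive but small, and the interplay between the terms $T$ and $Q$. If the clean per-agent accounting fails, I would first try charging each agent's excess $(1-k)Q'$ against the common lower bound $kQ'$ of its own OPT cost, plus a share of the extreme agent's $(1+k)Q'$. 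I would then verify the aggregate inequality
\[
\sum_i ALG_i\le \frac{n}{1+k(n-1)}\sum_i OPT_i
\]
directly as a linear inequality in $Q$, $T$ and the $OPT_i$, treating the $OPT_i$ as free subject to these lower bounds. The worst case should be one heavy extreme agent and $n-1$ agents each paying about $kQ$.
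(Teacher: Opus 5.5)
Your per-agent excess bound $(1-k)(Q+T)$ is correct and matches the paper. The plan fails at the global inequality $OPT\ge(1+k(n-1))Q$ (or with $Q'=Q+T$), which is false in general.

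Agents who do not strictly benefit from the optimal pathway can have arbitrarily small optimal cost. A left agent at $0$ pays $0$, so your intended step $x_i\ge kQ$ fails. The inequality $x_i\le|x_i-a^*|+k(1-a^*)$ bounds $x_i$ from above, not from below. Likewise, only the agent at $y_l$ pays $T$; right agents at $1$ pay nothing. Concretely, take $N_2=\emptyset$, $o=0.9$, $k=0.1$, three agents at $0.5$, one at $0.7$, and $n-4$ agents at $0$. The optimum is $a^*=0.5$ with $Q=0.2$ and $OPT=0.4$, while $(1+k(n-1))Q>0.4$ once $n\ge 12$. So you cannot charge $kQ'$ to $n-1$ agents.

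Relatedly, when $T>0$ the agent at $x_r$ has excess $(1-k)T-(1+k)Q$, which can be positive. Saving one agent in that case has to come from the existence of a right agent, not from the agent at $x_r$.

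The missing idea, which is how the paper argues, is to count only the agents that can carry excess. Let $U$ be the set of agents strictly benefiting from $(a^*,1)$, and put $m=|U|$.
\begin{itemize}
\item Only members of $U$ can have positive excess, so if $m=0$ the mechanism is optimal.
\item If $m\ge1$, then $a^*>\frac{k}{1+k}$ and the agent at $x_r$ lies in $U$.
\item Each member of $U$ pays at least $k(1-a^*)\ge k(Q+T)$ in the optimum, using $y_l\ge x_r$. The agent at $x_r$ pays an extra $Q$. When $N_2\neq\emptyset$, the agent at $y_l$ pays $T$.
\end{itemize}

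\textbf{Case $N_2\neq\emptyset$.} The bounds above give $OPT\ge(1+mk)(Q+T)$ and $ALG\le OPT+m(1-k)(Q+T)$. Hence the ratio is at most $\frac{m+1}{1+mk}$, with $m\le n-1$.

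\textbf{Case $N_2=\emptyset$.} Here $T=0$ and the agent at $x_r$ has nonpositive excess. This gives the ratio $\frac{m}{1+k(m-1)}$ with $m\le n$.

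Monotonicity of $q\mapsto\frac{q}{1+k(q-1)}$ then yields $\frac{n}{1+k(n-1)}$ in both cases.

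Your fallback, charging each excess against that agent's own $kQ'$ plus a share of the heavy term, points in this direction. As written, though, it still ranges over all $n-1$ agents and relies on the false lower bounds above.
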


\begin{proof}
Group strategyproofness has already been proved. By Lemma~\ref{lem:opt-basic} and symmetry, choose an optimal solution $(a^*,1)$ with $a^*\le x_r$. Let $U\subseteq N_1$ consist of the agents who strictly benefit from this optimal pathway, and put $m=|U|$. Every agent outside $U$ has its direct-route cost under the optimal solution, so its cost cannot increase under TwoExtreme. If $m=0$, the mechanism is therefore optimal.

Suppose $m\ge 1$, and write $ALG$ and $OPT$ for the social costs of the mechanism and the optimal solution. Writing $h=k(1-a^*)$, strict improvement requires $h<a^*$, or equivalently $a^*>\frac{k}{1+k}$. Indeed, $|x_i-a^*|+h<x_i$ implies $h<a^*$ by the triangle inequality. Since $x_r\ge a^*$, its pathway cost $x_r-a^*+h$ is strictly less than $x_r$, so an agent at $x_r$ belongs to $U$. Define
\[
\delta=x_r-a^*,\qquad T=1-y_l.
\]
For any $i\in U$, the mechanism's actual cost is at most its pathway cost, and hence
\begin{align*}
&cost(x_r,y_l,x_i)-cost(a^*,1,x_i)\\
&\quad\le x_r-x_i+k(y_l-x_r)+1-y_l
             -|x_i-a^*|-k(1-a^*)\\
&\quad=(1-k)(\delta+T)-(x_i-a^*)-|x_i-a^*|\\
&\quad\le(1-k)(\delta+T).
\end{align*}

\textbf{Case 1: $N_2\ne\emptyset$.}
Here $m\le n-1$. Under the optimal solution, an agent at $y_l$ has cost $T$, while the total cost of the agents in $U$ is at least $\delta+mk(1-a^*)$: their total distance to $a^*$ is at least the distance $\delta$ of the agent at $x_r$. Since $1-a^*\ge\delta+T$, we obtain
\[
OPT\ge\delta+mk(1-a^*)+T\ge(1+mk)(\delta+T).
\]
Summing the preceding cost comparisons gives
\[
ALG\le OPT+m(1-k)(\delta+T)
\le\frac{m+1}{1+mk}OPT.
\]

\textbf{Case 2: $N_2=\emptyset$.}
Then $y_l=1$ and $T=0$. The agent at $x_r$ has mechanism cost at most $k(1-x_r)$, which is no larger than its optimal-solution cost $\delta+k(1-a^*)$. Thus only the other $m-1$ agents in $U$ can contribute a cost increase, yielding
\[
ALG\le OPT+(m-1)(1-k)\delta.
\]
Moreover,
\[
OPT\ge\delta+mk(1-a^*)
\ge\bigl(1+k(m-1)\bigr)\delta.
\]
Therefore,
\[
ALG\le\frac{m}{1+k(m-1)}OPT.
\]

The function $\frac{q}{1+k(q-1)}$ is nondecreasing for $q\ge1$. Since $m+1\le n$ in Case 1 and $m\le n$ in Case 2, both bounds are at most $\frac{n}{1+k(n-1)}OPT$. The proof also covers $OPT=0$, as no division by $OPT$ was used.
\end{proof}

% \noindent\emph{Proof sketch.}
% \ma{pending}
% \qed

% \begin{proof}[Proof Sketch.]
%      With loss of generality, we just suppose the optimal solution is $(a^*, 1)$. We can divide agents from $N_1$ into two groups. $G_1$ contains agents in $N_1$ that do not use the pathway $(a^*, 1)$; $G_2$ contains agents in $N_1$ that use the pathway $(a^*, 1)$. Also we can divide agents from $N_1$ into another two groups. $G_1'$ contains agents in $N_1$ that do not use the pathway $(x_r, y_l)$; $G_2'$ contains agents in $N_1$ that use the pathway $(x_r, y_l)$. It is easy to see that $G_1\subseteq G_1'$ and $G_2'\subseteq G_2$. We use $x_u$ to denote the smallest coordinate of $G_2$ and $x_v$ to denote the largest coordinate of $G_1$. Obviously we know $x_u\le a^*\le x_r$. We use $x_u'$ to denote the smallest coordinate of $G_2'$ and $x_v'$ to denote the largest coordinate of $G_1'$. Obviously $x_v'\ge x_v$ and $x_u'\ge x_u$.
%      Then we discuss the cases when $y_l=1$ and $y_l=o$ to prove the upper bounds of $\frac{\text{ALG}}{\text{OPT}}$ as $\frac{n-1}{nk+1}$ and $\frac{n}{k(n-1)+1}$ respectively.. 
% \end{proof}

The approximation bound is tight. For $n\ge2$, let $o=\frac{1}{2}$, place $n-1$ agents at $o-\epsilon$ and one at $o+\epsilon$, and choose $0<\epsilon<\frac{1-k}{2(1+k)}$. TwoExtreme returns $(o-\epsilon,o+\epsilon)$, and every agent has cost $\frac{1}{2}-\epsilon$. By Lemma~\ref{lem:opt-basic}, an optimal solution is $(o-\epsilon,1)$: it minimizes each left-side agent's cost, and serving the single right-side agent instead cannot give a smaller social cost. Thus
\[
ALG=n\left(\frac{1}{2}-\epsilon\right),\qquad
OPT=(n-1)k\left(\frac{1}{2}+\epsilon\right)+\frac{1}{2}-\epsilon.
\]
As $\epsilon\to0^+$, the ratio tends to $\frac{n}{1+k(n-1)}$. For $n=1$, the mechanism is optimal and the bound equals $1$.

We next establish lower bounds for social cost.

\begin{theorem}\label{thm:sc-lb}
Let $n\ge2$, and for $0<k<1$ define
\[
\beta(k)=\frac{-1-3k+\sqrt{16k^3+33k^2+14k+1}}{2k(k+1)}.
\]
Every strategyproof mechanism has a social-cost approximation ratio
\begin{itemize}
    \item at least $\beta(k)$ when $0<k<1$;
    \item at least $\Lambda_m(k)$ for every integer $3\le m\le n$ with $0<k<\frac{m-2}{m}$, where
    \[
    A_m=m-1-mk,\qquad
    \Lambda_m(k)=\frac{2A_m(1+mk)}{1+k+\sqrt{(1+k)^2+4k(m-1)A_m^2(1+mk)}};
    \]
    \item at least $\max\{2,n-1\}$ when $k=0$.
\end{itemize}
\end{theorem}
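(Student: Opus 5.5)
All three bounds will come from the same template as the maximum-cost lower bound above: a two-step strategyproofness chain. We fix the obstacle near one end, build a profile where a ratio below the target forces the mechanism to serve one region, then move a single agent within its region. Strategyproofness in both directions, together with the identity $H=H'$ that pins down the pathway-cost function, lets us transfer the forced output to a second profile, where it is bad.

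\emph{The case $k=0$, bound $\max\{2,n-1\}$.} We place the obstacle at $o=\frac12$ and compare two candidate services.
\begin{itemize}
\item Serving $N_1$ means a pathway $(a,1)$ with $a$ near the left agents; serving $N_2$ means $(0,b)$.
\item By Lemma~\ref{lem:basic}, only one region benefits. So on a profile with $n-1$ agents just left of $o$ and one agent just right of $o$, a mechanism serving the singleton pays about $(n-1)\cdot\frac12$ against an optimum of about $\frac12$.
\item Strategyproofness rules out always serving the majority. Take the symmetric profile of one left agent and one right agent, with the remaining agents placed so that they are indifferent (near $0$ or near $1$, at zero cost). The mechanism must favour one side, say the right.
\item We then move the left agents one at a time toward $o$. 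By partial GSP (Lemma~\ref{lemma:partial-gsp}, applied to colocated agents), the output cannot switch to serving them, since each single move would otherwise be a profitable misreport from the earlier position.
\item In this argument we must check that an agent at the old position indeed prefers the pathway near $o$.
\item This yields ratio about $n-1$. A separate two-agent construction gives $2$, with the agents chosen so that the only consistent outputs have social cost twice the optimum.
\end{itemize}

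\emph{The case $0<k<1$.} We parametrize a profile with $m$ active agents: one agent at $r$ close to $o$, and $m-1$ agents at $r-t$ (or split between two points). The quantity $A_m=m-1-mk$ is the net benefit coefficient of serving $m-1$ colocated agents versus the direct route.
\begin{itemize}
\item Write the mechanism's left endpoint as an unknown $a$.
\item Use the two SP inequalities for the moving agent to get a linear constraint linking the outputs at the two profiles, exactly as with $H=H'$.
\item Express the ratio at each profile as a function of $a$ and the free parameter $t$. The mechanism picks $a$ to minimize the worse of the two ratios, and we pick $t$ to maximize it.
\item Equating the two ratios gives a quadratic in the ratio $\rho$. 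Its positive root is the stated $\Lambda_m(k)$, whose discriminant $(1+k)^2+4k(m-1)A_m^2(1+mk)$ signals exactly this.
\item The condition $k<\frac{m-2}{m}$ ensures $A_m>1$, so the construction is nontrivial.
\item The bound $\beta(k)$ comes from the analogous small construction, plausibly with two or three agents and one on each side. Solving its quadratic gives $\beta(k)=\frac{-1-3k+\sqrt{16k^3+33k^2+14k+1}}{2k(k+1)}$.
\end{itemize}

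\emph{Main obstacle.} The hardest part is the optimization step: choosing the configuration and the free parameters so that the two SP-linked ratios balance exactly at the claimed closed forms. We must also verify that every agent's route choice (pathway versus direct) matches the assumed regime at the balancing point; otherwise the piecewise costs change. I would first derive the balancing equations symbolically in $t$ and $a$, then check that the regime constraints hold for $0<k<\frac{m-2}{m}$.
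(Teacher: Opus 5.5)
There is a genuine gap. The two-step chain with $H=H'$ from the maximum-cost proof does not drive any of these bounds, and the steps you put in its place do not follow from strategyproofness. The clearest failure is your $n-1$ argument at $k=0$. It rests on the claim that partial GSP prevents the output from switching to serve the left agents as they are moved toward $o$. But SP constrains only the mover's own cost. An agent at $0$ has cost $0$ under every output, so a jump from $0$ imposes nothing. Along a continuous move, the output may switch to a pathway that benefits the agents already near $o$ while the mover keeps its direct route, so the mover's own cost never jumps. For example, take a threshold $\tau<\frac{o}{2}$ below the active left agent, and return $(\max\{x_r,2\tau\},1)$ if at least two left reports exceed $\tau$, and $(0,y_l)$ otherwise. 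This mechanism is SP: agents at or below $\tau$ never strictly benefit from its left pathway, no left agent is hurt by it, and, as in \textsc{TwoExtreme}, an endpoint can only be moved against the deviator. Yet it switches exactly when your first dummy crosses $\tau$. Two smaller claims also fail. SP does not rule out always serving the majority: region sizes are public, so returning $(x_r,1)$ if $|N_1|>|N_2|$ and $(0,y_l)$ otherwise is SP. And \emph{say the right} is not without loss of generality once all dummies start in the left region. The paper instead obtains $n-1$ from a one-region instance (Lemma~\ref{lem:sc-lb-1}): obstacle at $o=\eta$, $n-1$ colocated agents at distance $s$ from the facility at $1$, and one agent at distance $r$. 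By Lemma~\ref{lem:truthful-cost-continuity}, which follows from partial GSP, the group's truthful cost $g(s)$ is continuous, with $g(\epsilon)=\epsilon$ and $g(r)=0$, so some $s$ gives $g(s)=\alpha$ small. Now move the lone agent onto that output's center $z$. SP for it (reporting $r$ would cost it only $a<\eta$) pins the new output within $\eta$ of the old one. So each group member still pays about $D=|s-z|$, whereas the pathway centered at $s$ has social cost at most $D$.

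For $0<k<1$, the identity $H=H'$ needed the mover to use the pathway at both truthful profiles. The maximum-cost guarantee forced this because every direct-route cost exceeded it; the social-cost guarantee forces nothing of the kind. Letting the mechanism \emph{pick $a$ to minimize the worse of the two ratios} describes one family of outputs, not a constraint SP imposes on every mechanism; at a single profile a mechanism may simply output an optimal pathway. Matching the discriminant of $\Lambda_m(k)$ is not a derivation. The missing ingredient is again continuity plus the intermediate value theorem, which is what produces a profile where the group is only partially served. For $\Lambda_m$ (Lemma~\ref{lem:sc-positive-k}, with extra agents parked at $0$ to pass from $m$ to $n$), the paper applies it to $g(s)-\gamma(1-s)$, which is positive at $s=\epsilon$ and negative at $s=r$. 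This gives a profile where the group pays exactly $\gamma(1-s)<s$. The relocation step then yields a contradiction whenever some $\gamma<\frac{1+mk}{\rho(m-1)}$ satisfies $\frac{k+A_m\gamma}{\gamma+kA_m}>\rho$, which is equivalent to $\rho<\Lambda_m(k)$. For $\beta(k)$ and for $2$ at $k=0$ (Lemma~\ref{lem:sc-lb-2}), your configuration of one agent on each side ($x$ and $1-x$, with $o=\frac12$) is the right one, but the argument is a gap argument. Assume $\rho$ lies below both $\frac{1+2k(1-x)}{2x+k}$ and $\frac{2x}{x+k(1-x)}$, and move the agent that does not benefit toward $\frac12$. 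Its improvement $u(t)$ is continuous with $u(x)=0$, and whenever $u(t)>0$ the guarantee forces $u(t)\ge 2x-\rho(x+k(1-x))>0$; hence $u\equiv0$. Comparing at $z\to\frac12$ and maximizing $\min\{\frac{1+2k(1-x)}{2x+k},\frac{2x}{x+k(1-x)}\}$ over $x$ gives $\beta(k)$, and $x=\frac14$ gives $2$ at $k=0$.
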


The first two bounds are complementary: we may take the maximum over all applicable bounds and all admissible values of $m$. We use the following continuity property, which follows from strategyproofness rather than from any regularity assumption on the mechanism.

\begin{lemma}\label{lem:truthful-cost-continuity}
Fix a nonempty set $S$ of agents and the locations of all other agents. Suppose the agents in $S$ share a common location $t$ within a fixed region. Under any strategyproof mechanism, their common truthful cost $c_S(t)$ satisfies
\[
|c_S(t)-c_S(s)|\le |t-s|
\]
for any two locations $s,t$ in that region.
\end{lemma}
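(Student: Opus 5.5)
The plan is to combine the expectation version of partial group strategyproofness (Lemma~\ref{lemma:partial-gsp}) with the 1-Lipschitz property of each agent's cost in its own location. The argument is the standard ``cost of a truthful agent is Lipschitz'' reasoning, applied to a coalition.

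Fix $s,t$ in the region. Let $\mathbf x^t$ and $\mathbf x^s$ denote the profiles where all agents of $S$ are at $t$ (respectively $s$), with the other agents fixed. Let $(a,b)$ range over the outcome (distribution) at each profile.

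\textbf{Step 1: each agent's cost is Lipschitz in its location.} For any fixed pathway $(a,b)$ and fixed region, the map $v\mapsto cost(a,b,v)$ is 1-Lipschitz. For a left agent it is the minimum of $|v-a|+k(b-a)+1-b$ and $v$, both 1-Lipschitz in $v$, and a minimum of 1-Lipschitz functions is 1-Lipschitz. The right region is symmetric. Taking expectations preserves this, so for any outcome (deterministic or randomized) the expected cost of a location is 1-Lipschitz in that location.

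\textbf{Step 2: apply partial GSP in both directions.} The agents of $S$, truly at $t$, could jointly report $s$. By partial group strategyproofness (Lemma~\ref{lemma:partial-gsp}, or its expectation version), this gives
\[
c_S(t)=cost(f(\mathbf x^t),t)\le cost(f(\mathbf x^s),t)\le cost(f(\mathbf x^s),s)+|t-s|=c_S(s)+|t-s|,
\]
where the middle inequality uses Step 1. Exchanging the roles of $s$ and $t$ gives $c_S(s)\le c_S(t)+|t-s|$, and the two inequalities together give the claim.

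\textbf{Main point to check.} The only delicate point is that the coalition deviation is legitimate. All members are colocated at $t$, and the reports $s$ lie in the same region, so Lemma~\ref{lemma:partial-gsp} applies exactly; its proof moves the agents one at a time, each still truly at $t$. No regularity of $f$ is needed beyond this.
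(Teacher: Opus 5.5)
Your proposal is correct and follows the paper's proof essentially step for step. Both arguments combine the $1$-Lipschitz dependence of a fixed pathway's cost on location with partial group strategyproofness (Lemma~\ref{lemma:partial-gsp}) for the colocated coalition, obtain $c_S(t)\le c_S(s)+|t-s|$, and then interchange $s$ and $t$.
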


\begin{proof}
For a fixed pathway, the cost of an agent is $1$-Lipschitz in its location within either region: both route costs are $1$-Lipschitz, as is their minimum. By Lemma~\ref{lemma:partial-gsp}, the agents in $S$ cannot strictly decrease their common cost by jointly reporting $s$ instead of $t$. Evaluating the output for reports $s$ at the true location $t$ therefore gives
\[
c_S(t)\le c_S(s)+|t-s|.
\]
Interchanging $s$ and $t$ proves the claim.
\end{proof}

\begin{lemma}\label{lem:sc-lb-2}
Fix $0\le k<1$ and $\frac{k}{1+k}<x<\frac{1}{2}$. For any $n\ge2$, every strategyproof mechanism has a social-cost approximation ratio at least
\[
\min\left\{\frac{1+2k(1-x)}{2x+k},\frac{2x}{x+k(1-x)}\right\}.
\]
\end{lemma}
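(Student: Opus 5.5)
The plan is a two-profile argument built on the symmetric profile and one deviation. Fix $o=\frac12$ and place the agents as follows: one agent at $x$ in $N_1$, one agent at $1-x$ in $N_2$, and (if $n>2$) every remaining agent at an extreme point such as $0$. Agents at an extreme point have direct-route cost $0$, so they do not affect any social cost. Call this profile $P$. Since $\frac{k}{1+k}<x$, we have $k(1-x)<x$. So the pathway $(x,1)$ gives the left agent cost $k(1-x)$ and leaves the right agent at $x$, which gives $OPT(P)\le x+k(1-x)$.

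By Lemma~\ref{lem:basic}, the output of any mechanism on $P$ strictly benefits at most one region.
\begin{itemize}
\item If it benefits neither region, its social cost is $2x$, and the ratio is at least $\frac{2x}{x+k(1-x)}$, which is the second term.
\item Otherwise, by the reflection $z\mapsto 1-z$ we may assume it benefits only the left agent. Then the right agent at $1-x$ has truthful cost exactly $x$.
\end{itemize}

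The second step moves the right agent toward the obstacle. Consider the profile $Q$ in which the right agent sits at $y=\frac12+\epsilon$ and everything else is unchanged.
\begin{itemize}
\item In $Q$, the pathway $(0,y)$ gives the right agent cost about $k/2$ and leaves the left agent at $x$. Hence $OPT(Q)\approx x+\frac k2$, i.e.\ $2\,OPT(Q)\approx 2x+k$.
\item The target is to show that the mechanism's social cost on $Q$ is at least about $\frac12+k(1-x)$. Then the ratio is $\frac{1+2k(1-x)}{2x+k}$, which is the first term of the minimum.
\item That value is exactly what results when the right agent (now at distance about $\frac12$ from $1$) takes its direct route and the left agent uses a pathway with cost at least $k(1-x)$. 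Any pathway not benefiting the right agent gives total cost at least $\frac12+\min\{x,k(1-x)\}=\frac12+k(1-x)$.
\end{itemize}
So the whole argument reduces to one claim: strategyproofness, together with the assumed ratio $\rho$ being smaller than both terms, forbids the mechanism on $Q$ from giving the right agent a substantial benefit.

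The main obstacle is proving that claim. A crude transfer does not suffice. The single-agent bound of Lemma~\ref{lem:truthful-cost-continuity} gives $c(1-x)\le c(y)+(\frac12-x)$, and combining it with the approximation bound on $Q$ does not reach the stated constant. What I will try first is to use the exact form of the right agent's pathway cost, $|z-b|+kb$ when $a=0$ (and $|z-b|+k(b-a)+a$ in general), rather than the Lipschitz bound. Suppose the output on $Q$ benefits the right agent. Then the approximation bound forces $b$ to be close to $y$, or the traversal part $k(b-a)+a$ to be small. In either case, the reflected agent at $1-x$ misreporting $y$ would obtain cost strictly below its truthful cost $x$ in $P$, contradicting strategyproofness.

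If this direct comparison is not tight enough, I will interpolate instead. I will move the right agent from $1-x$ to $y$ in small steps and track at which location the mechanism first switches to benefiting the right region. At the switch point, Lemma~\ref{lem:truthful-cost-continuity} (equality of truthful costs on both sides of a switch) pins down the pathway. That value then yields the bound $\frac{1+2k(1-x)}{2x+k}$ at the endpoint profile. Taking the smaller of the two cases gives the stated minimum.
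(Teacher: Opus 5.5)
Your setup is the paper's, up to reflection. You use the symmetric profile with $o=\frac12$, the no-benefit case gives $\frac{2x}{x+k(1-x)}$, and you then move the agent that does \emph{not} benefit toward the obstacle to reach $\frac{1+2k(1-x)}{2x+k}$. However, the claim you single out as the main obstacle is the entire content of the lemma, and neither of your suggested routes proves it.

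The one-step comparison between $P$ and $Q$ is false in general. Take $k=0$ and $x=\frac15$, and suppose the mechanism outputs $(0,y)$ on $Q$. The right agent then has cost $0$ and the social cost is $x=OPT(Q)$, so the approximation bound holds. Yet an agent truly at $1-x$ who reports $y$ gets cost $\min\{x,\frac12-x-\epsilon\}=x$, which is no strict gain. More generally, $(0,y)$ gives the agent at $1-x$ pathway cost $\frac12-x-\epsilon+k(\frac12+\epsilon)$. This is at least $x$ whenever $x<\frac{1+k}{4}$ and $\epsilon$ is small. So for such $x$, strategyproofness applied to the single pair $(P,Q)$ yields no contradiction, even though $b=y$ exactly.

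Your interpolation fallback has the right shape, but the decisive ingredient is missing. Equality of truthful costs at a switch point does not pin down the pathway. What actually forbids a switch is a \emph{gap}:
\begin{itemize}
\item Parametrize the moving agent by its distance $s\in[x,z]$ from the facility at $1$, with $z<\frac12$. Let $u(s)=s-c(s)$ be its improvement over the direct route. By Lemma~\ref{lem:truthful-cost-continuity}, $u$ is continuous and $u(x)=0$.
\item If $u(s)>0$, Lemma~\ref{lem:basic} gives the left agent cost $x$, so $ALG=s+x-u(s)$. The pathway $(0,1-s)$ gives $OPT\le x+k(1-s)$.
\item The approximation guarantee then forces $u(s)\ge s+x-\rho\bigl(x+k(1-s)\bigr)\ge 2x-\rho\bigl(x+k(1-x)\bigr)>0$. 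So $u$ never takes a value strictly between $0$ and this constant, and continuity gives $u\equiv0$ on $[x,z]$.
\item At $s=z$ you get $ALG\ge z+k(1-x)$ against $OPT\le x+k(1-z)$. Letting $z\to\frac12$ gives the first term.
\end{itemize}
This step uses $\rho<\frac{2x}{x+k(1-x)}$ a second time, inside your second case. So the two terms of the minimum do not come from two independent cases. Without this gap argument, your key claim remains unproved.
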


\begin{proof}
Suppose a strategyproof mechanism guarantees a ratio $\rho$ strictly below both expressions. Set $o=\frac{1}{2}$, place two agents at $x$ and $1-x$, and place any remaining agents at $0$, where their costs are always zero. The optimal social cost is $x+k(1-x)$, whereas an output from which neither active agent strictly benefits has social cost $2x$. Hence at least one active agent must strictly benefit. By Lemma~\ref{lem:basic}, they cannot both do so. By symmetry between the two active agents, suppose the agent at $x$ has cost $x$.

Choose a finite $z\in(x,\frac{1}{2})$ sufficiently close to $\frac{1}{2}$ that
\[
\frac{z+k(1-x)}{x+k(1-z)}>\rho.
\]
Keep all other locations fixed, and let the left active agent's true location vary over $t\in[x,z]$. Write $c(t)$ for its truthful cost and $u(t)=t-c(t)$ for its improvement over the direct route. Lemma~\ref{lem:truthful-cost-continuity} implies that $u$ is continuous, and $u(x)=0$.

Whenever $u(t)>0$, the right active agent cannot strictly benefit from the output, so the social cost is $t+x-u(t)$. The feasible pathway $(t,1)$ has social cost $k(1-t)+x$. Thus the approximation guarantee requires
\[
u(t)\ge t+x-\rho\bigl(k(1-t)+x\bigr)
\ge 2x-\rho\bigl(k(1-x)+x\bigr)>0.
\]
The final quantity is a fixed positive constant. A continuous function starting at zero cannot become positive while avoiding every value between zero and that constant. Therefore $u(t)=0$ throughout $[x,z]$.

At $t=z$, the left active agent consequently has cost $z$. The right active agent has cost at least $k(1-x)$, its minimum possible cost under any pathway. The mechanism's social cost is therefore at least $z+k(1-x)$, while the optimum is at most $x+k(1-z)$. This contradicts the choice of $z$.
\end{proof}

\begin{lemma}\label{lem:sc-lb-1}
When $k=0$, every strategyproof mechanism has a social-cost approximation ratio at least $n-1$.
\end{lemma}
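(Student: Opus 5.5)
The plan is to adapt the continuity argument of Lemma~\ref{lem:sc-lb-2}, with the $n-1$ agents co-located as a block, so that the mechanism serves the single agent while the block is left on its direct route. Take $k=0$, $o=\frac12$ and $x$ slightly below $\frac12$. Place a block $S$ of $n-1$ agents on the left at a common location $t\in[0,x]$, and one agent on the right at $1-x$. Suppose a strategyproof mechanism guarantees a ratio $\rho<n-1$. At $t=x$, the pathway $(x,1)$ gives the block cost $0$, so $OPT\le x$. If no block member strictly benefits at $t=x$, the block alone pays $(n-1)x$, which gives ratio $n-1$. So it suffices to show that the block receives no improvement at $t=x$.

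Define $u(t)=t-c_S(t)$, the block's common improvement over its direct route. By Lemma~\ref{lem:truthful-cost-continuity}, $u$ is continuous, and $u(0)=0$. Whenever $u(t)>0$, Lemma~\ref{lem:basic} forces the right agent onto its direct route, so it pays $x$. The mechanism's social cost is then $x+(n-1)(t-u(t))$. Serving the right agent with $(0,1-x)$ shows $OPT\le (n-1)t$. For $t<\frac{x}{\rho(n-1)}$ this already excludes $u(t)>0$, so $u=0$ near the start of the interval.

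The aim is then to find a threshold value $\theta>0$ such that the approximation guarantee forbids $0<u(t)<\theta$ on the whole interval $[0,x]$. Continuity would then propagate $u\equiv 0$ up to $t=x$, exactly as in Lemma~\ref{lem:sc-lb-2}.

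This step is the main obstacle. In the present configuration, a small positive $u(t)$ is forbidden only for small $t$. Near $t\approx\frac{x}{n-1}$ the optimum is about $x$ in both families, and the mechanism's cost $\approx 2x$ is not ruled out. So the plan must be modified to close the gap. The first attempt will exploit the structure of $k=0$ pathways: a strictly benefiting block needs an endpoint $a$ with $|t-a|+1-b<t$, and strategyproofness of the single right agent constrains how the output can change. The second attempt, if that fails, is an iterated construction. The right agent is moved toward $1$ and the block toward $o$ in stages, with the obstacle placed just below the block, so that at each stage the optimum is dominated by one group and a small positive improvement for the block is excluded by a ratio below $n-1$. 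Chaining the stages would carry $u=0$ from the trivial endpoint to the final profile.

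Once $u(x)=0$ is established, the final computation is immediate. The mechanism's social cost is at least $(n-1)x$ against $OPT\le x$, contradicting $\rho<n-1$. The same construction works for every $n\ge2$, although the bound is only meaningful for $n\ge3$.
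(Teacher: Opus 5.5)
Your reduction to showing $u(x)=0$ is sound, but the step you flag as the main obstacle is a genuine gap. It cannot be closed within your one-parameter family using only continuity of $c_S$ and the per-profile guarantee. Write $p=n-1$; in your configuration $OPT=\min\{pt,x\}$. Fix any $2\le\rho<p$ and put $t_1=\frac{x}{p(\rho-1)}\le\frac{x}{p}$. Consider the outputs $(0,1-x)$ for $t<t_1$ and $(t,1-t_1)$ for $t\ge t_1$.
\begin{itemize}
\item The block's cost is $c_S(t)=\min\{t,t_1\}$, which is $1$-Lipschitz.
\item For $t<t_1$ the social cost is $pt=OPT$.
\item For $t\ge t_1$ the right agent pays $x$, and the social cost is $x+pt_1=\rho\,pt_1\le\rho\,OPT$.
\end{itemize}
Thus $u(t)=\max\{0,t-t_1\}$ leaves $0$ continuously without ever violating ratio $\rho$. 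So no threshold $\theta$ exists, and this line of argument cannot exceed the bound $2$ that Lemma~\ref{lem:sc-lb-2} already gives at $k=0$.

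What is needed is a second strategyproofness constraint from an agent outside the block, and neither fallback supplies one. The first is only a stated intention. Moreover, in your configuration the lone agent lies in the other region, so it can never be moved onto the block's pathway endpoint. If the staged plan repositions the obstacle, strategyproofness also gives no link between stages, since it only relates profiles with the same $o$.

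The missing idea, which is how the paper argues, is to put all $n$ agents in one region with the obstacle at a tiny $o=\eta$. Measure distances $t=1-x$ from the facility at $1$; a pathway then acts as a second facility at $z=1-b$ with fee $a<\eta$. Place $p$ agents at $s$ and one agent at $r$, and let $g(s)$ be the group's common truthful cost. Then $g(r)=0$. Also $g(\epsilon)=\epsilon$ for $\epsilon=\frac{r}{4\rho p}$, because the guarantee at $s=\epsilon$ forces $z\ge\frac{3r}{4}$.

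The intermediate value theorem is then used not to propagate $u\equiv0$, but to find $s$ with $g(s)=\alpha=\frac{\epsilon}{2}$. There the group uses the pathway at distance $D=|s-z|>\frac{\alpha}{2}$. Now move the lone agent from $r$ to $z$. By reporting $r$ it would pay only $a<\eta$, so strategyproofness forces the new output to satisfy $|z'-z|<\eta$. Each group member still pays at least $D-\eta$, whereas the fee-$0$ pathway centred at $s$ has social cost at most $D$. Since $D>\frac{\alpha}{2}$ and $\eta$ is chosen small relative to $\alpha$, the ratio $p(D-\eta)/D$ exceeds $\rho$.
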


\begin{proof}
The claim is immediate for $n\le2$. For $n\ge3$, put $p=n-1$ and suppose a strategyproof mechanism guarantees a ratio $1\le\rho<p$. Choose
\[
r=\frac{1}{2},\qquad
\epsilon=\frac{r}{4\rho p},\qquad
\alpha=\frac{\epsilon}{2},\qquad
0<\eta<\min\left\{\frac{1}{4},\frac{\alpha(p-\rho)}{2p}\right\}.
\]
Fix the obstacle at $o=\eta$ and put all agents in $N_2$. For convenience, measure their locations by distance $t=1-x$ from the facility at $1$. If the mechanism returns $(a,b)$, write $z=1-b$. In these coordinates, $0\le a<\eta$, $0\le z<1-\eta$, and an agent at $t$ has cost
\[
\min\{t,|t-z|+a\}.
\]

Place $p$ agents at a common distance $s\in[\epsilon,r]$ and the remaining agent at $r$. Let $g(s)$ be the common truthful cost of the group. This function is continuous by Lemma~\ref{lem:truthful-cost-continuity}. At $s=r$, a zero-cost solution exists, so the approximation guarantee forces $g(r)=0$.

At $s=\epsilon$, choosing $a=0$ and $z=r$ gives social cost $p\epsilon$. Thus the mechanism's social cost is at most $\rho p\epsilon=\frac{r}{4}$. The agent at $r$ must use the pathway, and its cost bound implies $z\ge\frac{3r}{4}$. Since $\epsilon\le\frac{r}{4}$, the group prefers the direct route, giving $g(\epsilon)=\epsilon$.

By continuity, there is some $s\in(\epsilon,r)$ with $g(s)=\alpha$. Denote the output at this profile by $(a,z)$ in the new coordinates and put $D=|s-z|$. Since $\alpha<s$, the group uses the pathway, so
\[
D=\alpha-a>\alpha-\eta>\frac{\alpha}{2},\qquad
z\ge s-D>\epsilon-\alpha=\alpha>\eta.
\]
Now move only the remaining agent from $r$ to $z$, and write $(a',z')$ for the new output. If this agent, when truly at $z$, reports $r$, its cost is $a$. Strategyproofness therefore implies that its new cost is at most $a<\eta$. Because $z>\eta$, it must use the new pathway, and consequently $|z-z'|\le a<\eta$.

Each group member's new cost is at least $D-\eta$, since $D<\alpha<s$ and $|s-z'|\ge D-|z-z'|$. On the other hand, choosing fee $0$ and center $s$ gives a feasible pathway with social cost at most $D$. The mechanism's new social cost is thus at least
\[
p(D-\eta)>\rho D,
\]
where the strict inequality follows from $D>\frac{\alpha}{2}$ and $p\eta<\frac{\alpha(p-\rho)}{2}$. This contradicts the approximation guarantee.
\end{proof}

\begin{lemma}\label{lem:sc-positive-k}
Let $n\ge3$ and $0<k<\frac{n-2}{n}$. Every strategyproof mechanism has a social-cost approximation ratio at least $\Lambda_n(k)$.
\end{lemma}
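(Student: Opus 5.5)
The plan is to extend the construction of Lemma~\ref{lem:sc-lb-1}, because at $k=0$ we have $A_n=n-1$ and $\Lambda_n(0)=n-1$. The positive-$k$ bound should come from the same two-step argument, but with parameters that are left free and optimized at the end.

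\textbf{Setup.} Put all agents in $N_2$, with the obstacle at $o=\eta$ for a tiny $\eta>0$. Use the coordinates $t=1-x$ and $z=1-b$. Since $a<\eta$, an agent at distance $t$ then has cost
\[
\min\{t,\ |t-z|+k(1-z)+O(\eta)\}.
\]
A pathway with center $z$ therefore only helps agents at distance $t$ with $|t-z|+k(1-z)<t$. In particular, using it requires $z>\frac{k}{1+k}$ roughly, and this is where the condition $k<\frac{n-2}{n}$ should enter, as the requirement that the construction has room. Suppose a strategyproof mechanism has ratio $\rho<\Lambda_n(k)$. Place $p=n-1$ agents at a common distance $s$ and one lone agent at a fixed distance $r$. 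Let $g(s)$ be the common truthful cost of the group; it is continuous by Lemma~\ref{lem:truthful-cost-continuity}.

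\textbf{Step 1: pin down an intermediate profile.} At $s=r$ the optimum is $nk(1-r)$. We will choose $r$ so that the ratio forces the group to use the pathway, giving $g(r)<r$, and in fact near $k(1-r)$ up to a factor governed by $\rho$. At a small $s=\epsilon$ the optimum serves the group directly, so the lone agent must be served, which forces $z$ near $r$. The group's cost there is then $\epsilon$, since it prefers the direct route. By continuity there is an intermediate $s$ at which $g(s)$ equals a chosen value $\alpha$. At that profile the group uses the pathway with $D=|s-z|$ satisfying $D+k(1-z)\approx\alpha$.

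\textbf{Step 2: move the lone agent.} Move the lone agent from $r$ to $z$. Strategyproofness, applied to its possible misreport $r$, caps its new cost at its old cost $\approx k(1-z)$. This forces the new center $z'$ to satisfy $|z-z'|\le(1-k)^{-1}\cdot(\text{slack})$, a small perturbation. Consequently each group member still pays about $D-|z-z'|+k(1-z')$. The new optimum can center the pathway at $s$ instead, giving roughly $n\,k(1-s)$ plus the lone agent's extra distance $D$. Comparing the mechanism's social cost, about $p(D+k(1-z))+k(1-z)$, with this optimum yields a ratio expressed in $D$, $s$ and $z$.

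\textbf{Step 3: optimize.} Unlike the case $k=0$, the lone agent's new cost slack is no longer close to zero; it is $\Theta(k)$. So Step 2's control on $z'$ is only linear in $k$, and this loss must be traded against the gain in Step 1. I would write the resulting ratio as a function of a single scale parameter, say $\lambda=D/(1-z)$. Requiring the Step~1 constraint (the mechanism's bound at the first profile) and the Step~2 lower bound to be simultaneously consistent with $\rho$ should produce a quadratic inequality in $\rho$. Its positive root is
\[
\frac{2A_n(1+nk)}{1+k+\sqrt{(1+k)^2+4k(n-1)A_n^2(1+nk)}}.
\]
The positivity condition $A_n>0$ on the admissible range is exactly what keeps the chosen parameters feasible.

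\textbf{Main obstacle.} The hard part is bookkeeping Step~2: bounding how far $z'$ can move given only the lone agent's cost cap. The cost depends on $z$ both through $|t-z|$ and through $k(1-z)$, so moving $z'$ away from the lone agent changes these terms in opposite directions. I would first try the case analysis $z'\ge z$ versus $z'<z$. In each case I would derive a bound on $|z'-z|$ in terms of $a$ and $k$, and then choose the parameter values of Step~1 so that the worse case gives exactly $\Lambda_n(k)$.
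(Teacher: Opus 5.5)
Your outline reproduces the paper's skeleton: all agents in $N_2$ with $o=\eta$, $p=n-1$ agents at $s$ plus a lone agent at $r$, continuity of $g$, then moving the lone agent to $z$ and using strategyproofness to control the new output. However, the formula for $\Lambda_n(k)$ is asserted rather than derived, and you locate the $k$-dependence in the wrong place.

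\textbf{Step~2 loses nothing.} Strategyproofness gives $|z-z'|+h'\le h$ with $h=k(1-z)+(1-k)a$ and $h'=k(1-z')+(1-k)a'$. Since $|z-z'|-k(z'-z)\ge(1-k)|z-z'|$, this yields $|z-z'|\le a-a'<\eta$ for every $k$. The cap $h$ is $\Theta(k)$, but so is the moved agent's minimum possible cost $k(1-z)$, so the slack is only $(1-k)a$.

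\textbf{The actual trade-off.} Suppose the intermediate profile has $g(s)=\gamma w$ with $w=1-s$. Then $h\ge k(1-z)\ge k(w-D)$ forces $D\le\frac{(\gamma-k)w}{1-k}$. After the move, $ALG\ge p\gamma w+k(w-D)-O(\eta)$ while $OPT\le nkw+D$, so $ALG>\rho\,OPT$ as soon as $R(\gamma)=\frac{k+A\gamma}{\gamma+kA}>\rho$. Since $A>1$, $R$ is increasing. You must therefore make $\gamma$ as large as possible, subject to $\gamma<\frac{\epsilon}{1-\epsilon}$ (so that $g(\epsilon)=\epsilon$ lies above the target) and $\gamma>\rho k$ (so that $g(r)\le\rho k(1-r)$ lies below it). The quantity you control is this target, not $\lambda=D/(1-z)$, which the mechanism chooses. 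The paper uses the target $\gamma(1-s)$; a constant $\alpha$ just below $\epsilon$ would also work, by monotonicity of $R$.

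\textbf{The decisive gap is Step~1.} Everything hinges on how large $\epsilon$ can be while still guaranteeing $g(\epsilon)=\epsilon$, and your argument is too weak for this. Arguing that the lone agent must be served, which pushes $z$ toward $r$, only rules out a benefit to the group when roughly $r-2\epsilon>\rho\bigl(p\epsilon+k(1-r)\bigr)$, i.e.\ $\epsilon\lesssim\frac{1}{\rho p+2}$. Feeding $\gamma\approx\frac{1}{\rho p+1}$ into $R(\gamma)>\rho$ gives a root strictly below $\Lambda_n(k)$ for every $k>0$ (about $2.89$ instead of $2.99$ for $n=6$, $k=0.01$).

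The missing argument is the paper's. If the group strictly benefited at $s=\epsilon$, then $z<2\epsilon<r$ and $h<z$, so the lone agent would use the pathway as well. The whole social cost $p|\epsilon-z|+(r-z)+nh$ is then at least $r-\epsilon+nk(1-\epsilon)$. In the subcase $z\ge\epsilon$ this needs the coefficient $p-1-nk=A-1>0$. That is where $k<\frac{n-2}{n}$ is used, not in $A>0$ or a condition $z>\frac{k}{1+k}$. This allows $\frac{\epsilon}{1-\epsilon}$ up to $\gamma_0=\frac{1+nk}{\rho p}$, and $R(\gamma_0)=\rho$ is exactly $kpA\rho^2+(1+k)\rho-A(1+nk)=0$, whose positive root is $\Lambda_n(k)$.
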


\begin{proof}
Put $p=n-1$ and $A=p-nk>1$. The number $\Lambda=\Lambda_n(k)$ is the positive root of
\[
 kpA\Lambda^2+(1+k)\Lambda-A(1+nk)=0.
\]
The polynomial takes values $(1+k)(1-A)<0$ at $1$ and $kpA(A^2-1)>0$ at $A$, so $1<\Lambda<A$. Suppose a strategyproof mechanism has approximation ratio $1\le\rho<\Lambda$. Define
\[
 R(\gamma)=\frac{k+A\gamma}{\gamma+kA},\qquad
 \gamma_0=\frac{1+nk}{\rho p}<1.
\]
Here $\gamma_0<1$ follows from $1+nk<p$ and $\rho\ge1$. The polynomial inequality at $\rho$ gives $R(\gamma_0)>\rho$. Moreover, $\gamma_0>\rho k$: otherwise, since $A>\rho$,
\[
 (A-\rho)\gamma_0+k(1-\rho A)
 \le k(1-\rho^2)\le0,
\]
contrary to $R(\gamma_0)>\rho$. Choose $\gamma<\gamma_0$ sufficiently close to $\gamma_0$ that $\gamma>\rho k$ and $R(\gamma)>\rho$, and then choose $\epsilon$ with
\[
 \gamma<\frac{\epsilon}{1-\epsilon}<\gamma_0.
\]
In particular, $\epsilon<\frac12$. Choose $r<1$ sufficiently close to $1$ that $r>2\epsilon$ and
\begin{equation}\label{eq:sc-positive-start}
 r-\epsilon+nk(1-\epsilon)
 >\rho\bigl(p\epsilon+k(1-r)\bigr).
\end{equation}
Such an $r$ exists because the limiting inequality at $r=1$ is equivalent to $\frac{\epsilon}{1-\epsilon}<\gamma_0$. Finally, put
\[
 M=\frac{(A-\rho)\gamma+k(1-\rho A)}{1-k}>0,
 \qquad
 0<\eta<\min\left\{1-r,\frac{M(1-r)}{4p}\right\}.
\]
Fix the obstacle at $o=\eta$, put all agents in $N_2$, and use distances $t=1-x$ from the facility at $1$. Write the output as $(a,z)$ in these coordinates, where $z=1-b$. The cost at $t$ is
\[
 \min\{t,|t-z|+h\},\qquad
 h=k(1-z)+(1-k)a,\qquad 0\le a<\eta.
\]

Place $p$ agents at $s\in[\epsilon,r]$ and one at $r$, and denote the group's common truthful cost by $g(s)$. At $s=\epsilon$, the feasible output $a=0,z=r$ gives $OPT\le p\epsilon+k(1-r)$. If the group strictly benefited, then $|\epsilon-z|+h<\epsilon$, which implies $z<2\epsilon<r$ and $h<z$. The agent at $r$ would also use the pathway. For $z\le\epsilon$, the social cost would then satisfy
\[
 ALG=p(\epsilon-z)+r-z+nh
 \ge r+p\epsilon+nk-n(1+k)z
 \ge r-\epsilon+nk(1-\epsilon).
\]
For $z\ge\epsilon$, the same lower bound follows from
\[
 ALG=p(z-\epsilon)+r-z+nh
 \ge r-p\epsilon+nk+(p-1-nk)z,
\]
since $p-1-nk=A-1>0$. Both cases contradict \eqref{eq:sc-positive-start}. Hence $g(\epsilon)=\epsilon$.

At $s=r$, placing the pathway at $a=0,z=r$ gives $g(r)\le\rho k(1-r)$. By Lemma~\ref{lem:truthful-cost-continuity}, the function $g(s)-\gamma(1-s)$ is continuous, positive at $\epsilon$, and negative at $r$. Thus for some $s\in(\epsilon,r)$,
\[
 g(s)=\gamma(1-s)<s.
\]
The strict inequality follows from $\gamma<\frac{\epsilon}{1-\epsilon}<\frac{s}{1-s}$. Fix its output $(a,z)$ and write $D=|s-z|$ and $w=1-s$. The group uses the pathway, so $g(s)=D+h$, and
\begin{equation}\label{eq:sc-positive-distance}
 D\le\frac{(\gamma-k)w}{1-k},
\end{equation}
since $h\ge k(1-z)\ge k(w-D)$. Also $h<z$, as follows from $D+h<s$.

Now move only the remaining agent from $r$ to $z$, and write the new output as $(a',z')$, with $h'=k(1-z')+(1-k)a'$. By reporting $r$, this agent could obtain cost $h<z$. Strategyproofness therefore implies
\[
 |z-z'|+h'\le h,
 \qquad
 |z-z'|-k(z'-z)\le(1-k)(a-a').
\]
The left-hand side is at least $(1-k)|z-z'|$, so $a'\le a$ and $|z-z'|\le a-a'<\eta$. Moreover, $|h'-h|\le k|z'-z|+(1-k)|a'-a|\le a-a'<\eta$. The triangle inequality therefore bounds the change in each group member's pathway cost by $2\eta$. Taking the minimum with its unchanged direct-route cost preserves this bound, so its new actual cost is at least $g(s)-2\eta$. The moved agent's cost is at least $k(1-z)$: this is its minimum pathway cost, and $k(1-z)\le h<z$.

The new social cost is therefore at least $p\gamma w+k(w-D)-2p\eta$. The feasible output $a=0,z=s$ gives a new optimal cost of at most $nkw+D$. Using \eqref{eq:sc-positive-distance} and the identity $p\gamma+k-\rho nk-\frac{(\rho+k)(\gamma-k)}{1-k}=M$, we obtain
\begin{align*}
 ALG-\rho OPT
 &\ge (p\gamma+k-\rho nk)w-(\rho+k)D-2p\eta\\
 &\ge Mw-2p\eta
 \ge M(1-r)-2p\eta>0.
\end{align*}
This contradicts the approximation guarantee.
\end{proof}

\begin{proof}[Proof of Theorem~\ref{thm:sc-lb}]
For $0<k<1$, the first expression in Lemma~\ref{lem:sc-lb-2} is strictly decreasing in $x$, while the second is strictly increasing. At $x=\frac{k}{1+k}$, the second equals $1$ and the first is larger; at $x=\frac{1}{2}$, the first equals $1$ and the second is larger. Their unique intersection therefore lies strictly inside the lemma's admissible interval and maximizes their minimum.

Let $E$ be their common value. Eliminating $x$ from the two expressions gives
\[
k(k+1)E^2+(1+3k)E-2(2k+1)=0.
\]
Its positive root is $E=\beta(k)>1$, proving the first bound. Lemma~\ref{lem:sc-positive-k} gives the second bound when $m=n$. For any $3\le m<n$, fix the remaining $n-m$ agents at the facility $0$. Their costs are always zero, and the induced mechanism on the $m$ active agents remains strategyproof with the same approximation guarantee. Applying the lemma to these agents gives $\Lambda_m(k)$ as well.

Finally, when $k=0$, Lemma~\ref{lem:sc-lb-2} with $x=\frac{1}{4}$ gives a lower bound of $2$, and Lemma~\ref{lem:sc-lb-1} gives $n-1$. Combining them completes the proof.
\end{proof}

The new bound satisfies $\lim_{k\to0^+}\Lambda_m(k)=m-1$, so it strengthens $\beta(k)$ for sufficiently small positive $k$ whenever $m\ge4$. Taking the maximum over $m\le n$ also retains bounds obtained from smaller active populations. For example, when $n=6$ and $k=0.01$, $\Lambda_6(k)\approx2.993$, whereas $\beta(k)\approx1.944$. At $k=0$, the lower bound $\max\{2,n-1\}$ matches the upper bound for $n=2$ and is asymptotically tight as $n$ grows.

\section{Randomized Strategyproof Mechanisms}\label{sec:randomized}

We now allow randomized pathways and require strategyproofness in expectation. Randomization gives a social-cost guarantee independent of the number of agents, even when $k=0$. For maximum cost, we establish a randomized lower bound and retain the deterministic upper bound as a benchmark.

\subsection{Maximum Cost}

\begin{lemma}\label{lem:random-mc-pointwise}
Let $c(v)=\min\{v,|v-a|+h\}$, where $h\ge0$. For $0<2d<t$,
\[
 \max\{c(t-d),c(t+d)\}
 \ge d+\left(1-\frac{2d}{t}\right)c(t).
\]
\end{lemma}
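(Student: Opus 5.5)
The plan is a short direct comparison. Write $P(v)=|v-a|+h$ for the pathway cost, so that $c(v)=\min\{v,P(v)\}$, and abbreviate $c=c(t)$. The factor $1-\frac{2d}{t}$ comes from interpolating between two simpler lower bounds. I would first note the two facts that drive everything.

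\emph{Fact 1.} Since $v\mapsto|v-a|$ is convex with slopes $\pm1$,
\[
\max\{|t-d-a|,|t+d-a|\}=|t-a|+d .
\]
Hence $\max\{P(t-d),P(t+d)\}=P(t)+d\ge c+d$.

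\emph{Fact 2.} Write $Q=d+\bigl(1-\frac{2d}{t}\bigr)c$ for the right-hand side of the claim. Then $Q$ is dominated both by $c+d$ and by $t-d$. The first holds because $Q=c+d-\frac{2dc}{t}$ and $c\ge0$. The second holds because
\[
t-d=d+\Bigl(1-\frac{2d}{t}\Bigr)t ,
\]
together with $c\le t$ (the direct route is always available) and $1-\frac{2d}{t}>0$, which is where the hypothesis $2d<t$ enters.

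With these in hand, I would pick the endpoint $v_\pm\in\{t-d,t+d\}$ at which $P$ attains its larger value, so that $P(v_\pm)\ge c+d$ by Fact 1. Then
\[
c(v_\pm)=\min\{v_\pm,P(v_\pm)\}\ge\min\{v_\pm,\,c+d\}.
\]
If $v_\pm=t+d$, this minimum is at least $\min\{t-d,\,c+d\}$, since $t+d>t-d$. If $v_\pm=t-d$, it equals $\min\{t-d,\,c+d\}$ directly. In either case Fact 2 gives $c(v_\pm)\ge Q$, so $\max\{c(t-d),c(t+d)\}\ge c(v_\pm)\ge Q$, which is the claim.

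The only step that needs care is the mixed situation in which the endpoint with the larger pathway cost actually takes its direct route. There the bound degrades from $c+d$ to $t-d$, and the weight $\bigl(1-\frac{2d}{t}\bigr)$ is exactly what makes $t-d$ still dominate the target. So the main thing to verify carefully is the identity for $t-d$ in Fact 2 and its use of $c\le t$. Nothing beyond the definition of $c$ is needed; no earlier lemma has to be invoked, and $h\ge0$ is used only implicitly through $c(v)\le v$.
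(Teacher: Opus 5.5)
Your proof is correct and is essentially the paper's argument. Both rest on two facts: $d+\bigl(1-\frac{2d}{t}\bigr)c(t)\le t-d$ whenever an endpoint takes its direct route, and the endpoint on the far side of $a$ has pathway cost $|t-a|+h+d\ge c(t)+d$. The paper splits first on whether $c(t-d)=t-d$, while you first select the endpoint with the larger pathway cost; this is only a reordering of the same cases. One small correction to your closing remark: $c(v)\le v$ holds by definition, whereas $h\ge0$ (with $t>0$) is what gives $c(t)\ge0$, which the first half of your Fact~2 needs.
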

\begin{proof}
If $c(t-d)=t-d$, the right-hand side is at most $t-d$, since $c(t)\le t$. Otherwise, if $a\ge t$, the pathway cost at $t-d$ is $d+|t-a|+h\ge d+c(t)$. If $a<t$, then $c(t+d)=d+c(t)$. These cases prove the claim.
\end{proof}

\begin{theorem}\label{thm:random-mc-lower}
For $n\ge2$ and $0\le k<1$, every randomized strategyproof mechanism has a maximum-cost approximation ratio at least \(\frac{3+2k}{2+3k}\).
Together with Theorem~\ref{thm:333}, this gives lower and upper bounds of $\frac{3+2k}{2+3k}$ and $\frac{2}{1+k}$, respectively. At $k=0$, these bounds are $\frac32$ and $2$.
\end{theorem}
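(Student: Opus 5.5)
We would follow the classical two-agent argument for randomized maximum-cost lower bounds and use Lemma~\ref{lem:random-mc-pointwise} as the key pointwise inequality. First we reduce to a one-sided instance, as in the proofs of Lemmas~\ref{lem:sc-lb-1} and~\ref{lem:sc-positive-k}. We fix the obstacle at $o=\eta$ with $\eta$ tiny, put the two active agents in $N_2$, and place any further agents at the facility $1$, where their cost is always zero. We then measure each location by its distance $t=1-x$ to the facility at $1$. An output $(a,b)$ becomes $(a,z)$ with $z=1-b$, and an agent at distance $v$ has cost $c(v)=\min\{v,|v-z|+h\}$, where $h=k(1-z)+(1-k)a$. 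Since $a<\eta$, we have $h=k(1-z)+O(\eta)$. This is exactly the form covered by Lemma~\ref{lem:random-mc-pointwise}, applied for each realized pathway. The $O(\eta)$ terms are carried as errors and sent to $0$ at the end.

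The key steps are as follows. Take the profile $P_1$ with the two agents at distances $t-d$ and $t+d$, where $0<2d<t$. Write $\Phi=\mathbb E_{P_1}[c(t)]$ for the expected cost that the output at $P_1$ would give a fictitious agent at $t$. Taking expectations in Lemma~\ref{lem:random-mc-pointwise} gives
\[
\mathbb E_{P_1}[MC]\ge d+\Bigl(1-\frac{2d}{t}\Bigr)\Phi .
\]
We also have $OPT(P_1)\le d+k(1-t)$, attained by centring the pathway at $z=t$. Next we need a \emph{lower} bound on $\Phi$, and we get it from strategyproofness together with the approximation guarantee on an auxiliary profile. Let $P_2$ be $P_1$ with the agent at $t+d$ moved to $t$. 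This agent can misreport $t+d$, so its truthful expected cost at $P_2$ is at most $\Phi$. On the other hand, at $P_2$ the agent at $t-d$ is forced toward a pathway centred near $t-d/2$, whose optimal value is about $d/2+k(1-t+d/2)$. We would then argue that if both $\Phi$ and the expected cost of the agent at $t-d$ were small, the expected maximum at $P_2$ would still be forced above $\rho\,OPT(P_2)$. To make this precise we would apply Lemma~\ref{lem:random-mc-pointwise} a second time, now to the pair $t-d$, $t$ with centre $t-d/2$. This gives a linear inequality linking $\Phi$ to $\rho$.

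Combining the two linear inequalities with $\mathbb E_{P_1}[MC]\le\rho\,(d+k(1-t))$ yields an explicit lower bound on $\rho$ as a function of $d/t$ and $t$. We would then optimize over these parameters. At $k=0$ this should reproduce the familiar $\frac32$. For $k>0$ the additive $k(1-t)$ terms in both optima shift the optimum, and we expect the maximizing choice to give exactly $\frac{3+2k}{2+3k}$. We would verify this by solving the resulting one-variable optimization, sending $\eta\to0$, and checking that the limiting ratio is approached from the correct side.

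The main obstacle is the second step: extracting a lower bound on $\Phi$. The lemma and strategyproofness naturally give \emph{upper} bounds on deviation costs, so the auxiliary profile must be chosen so that the approximation requirement there forces the mechanism to serve the point $t$ well in $P_1$ only at a price. If the single auxiliary profile $P_2$ is not enough, we would instead chain several profiles. In that chain the centre moves by $d/2$ each time, and the geometric sum of the losses from Lemma~\ref{lem:random-mc-pointwise} would give the bound.
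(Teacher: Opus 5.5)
Your first step is correct. With $P_1=(t-d,t+d)$, $P_2=(t-d,t)$ and $\tau=1-t$, strategyproofness gives $\mathbb E_{P_2}[c(t)]\le\Phi$. Lemma~\ref{lem:random-mc-pointwise} with $OPT(P_1)\le d+k\tau$ then caps it: $\mathbb E_{P_2}[c(t)]\le\bigl(\rho(d+k\tau)-d\bigr)/(1-\frac{2d}{t})$.

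The step you flag as the main obstacle, however, is a genuine gap, and the route you sketch cannot close it. Strategyproofness only transfers bounds between profiles. The approximation guarantee combined with Lemma~\ref{lem:random-mc-pointwise} only produces \emph{upper} bounds on the expected cost at the centre of a pair. Applying the lemma to $t-d,t$ with centre $t-d/2$ just caps the expected cost at the fictitious point $t-d/2$ under $P_2$, and no strategyproofness inequality links that to $\Phi$. Nor can small costs at $P_2$ push the expected maximum there above $\rho\,OPT(P_2)$. The only source of an absolute lower bound is feasibility. For every realized pathway, $c(t-d)+c(t)\ge d+2k\tau$, because the direct routes cost about $1$ and $|t-d-z|+|t-z|+2k(1-z)$ is minimized at $z=t$. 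So the argument must use $P_2$ as the base profile and cap \emph{each} of the two terms in this sum by its own outward deviation.

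The missing half is the symmetric deviation $P_3=(t-2d,t)$. Strategyproofness gives $\mathbb E_{P_3}[c(t-d)]\ge\mathbb E_{P_2}[c(t-d)]$. The lemma centred at $t-d$, with $OPT(P_3)\le d+k(\tau+d)$, then gives $\mathbb E_{P_2}[c(t-d)]\le\bigl(\rho(d+k(\tau+d))-d\bigr)/(1-\frac{2d}{t-d})$. Add the two caps and compare with the feasibility bound. Take $\tau$ just above $d+\eta$, so the agent moved to $t+d$ stays in its region, with $\eta=o(d)$, and let $d\to0$. This yields $1+2k\le\rho(2+3k)-2$, which is exactly the claim; it is the paper's proof mirrored to the other region.

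A single auxiliary profile does not suffice. If you instead cap $\mathbb E_{P_2}[c(t-d)]$ directly by $\rho\,OPT(P_2)=\rho\bigl(\frac{(1+k)d}{2}+k\tau\bigr)$, the best you obtain (at $\tau\approx d$) is $\rho\ge\frac{4+4k}{3+5k}$. That is $\frac43$ at $k=0$ and strictly below $\frac{3+2k}{2+3k}$ for every $k<1$. Note also that the $k$-dependence survives only because $1-t=\Theta(d)$. If $1-t\gg d$, every optimum is dominated by $k(1-t)$ and the bound tends to $1$, so your optimization over $t$ must push the configuration against the obstacle.
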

\begin{proof}
Suppose a mechanism guarantees a finite ratio $\rho$. Choose $0<d<\frac18$, put $\eta=d^2$ and $o=1-\frac{\eta}{2}$, and place two agents at
\[
 L=1-2d-\eta,\qquad R=1-d-\eta.
\]
Place all remaining agents at $0$. The locations $L-d,L,R,R+d$ all lie in the left region. At any output $(a,b)$, a left-side agent at $v$ has cost
\[
 c(v)=\min\{v,|v-a|+h\},\qquad
 h=k(b-a)+1-b\ge k(1-a).
\]
On the initial profile, the sum of the two active agents' costs is at least $d+2k(d+\eta)$. Indeed, if either agent takes the direct route, its cost alone is at least $L=1-2d-d^2>d+2k(d+d^2)$ for $0<d<\frac18$. Otherwise, their total cost is at least
\[
 |L-a|+|R-a|+2k(1-a).
\]
This expression is nonincreasing up to $a=R$ and increasing thereafter: its slopes on the intervals separated by $L$ and $R$ are $-2-2k$, $-2k$, and $2-2k$. Thus its minimum over $a\in[0,1]$ is attained at $a=R$ and equals $d+2k(d+\eta)$. Writing $e_L,e_R$ for their initial expected costs, we obtain
\begin{equation}\label{eq:random-mc-base}
 e_L+e_R\ge d+2k(d+\eta).
\end{equation}

Next, change only the agent at $L$ to $L-d$. Strategyproofness implies that the expected cost evaluated at the old location $L$, under the new output distribution, is at least $e_L$. Applying Lemma~\ref{lem:random-mc-pointwise} to each realized output gives an expected maximum cost of at least
\[
 d+\left(1-\frac{2d}{L}\right)e_L.
\]
The feasible pathway $(L,1)$ has maximum cost at most $(1+2k)d+k\eta$ on this new profile. Similarly, moving only the agent at $R$ to $R+d$ and comparing with the pathway $(R,1)$ yields
\begin{align*}
 d+\left(1-\frac{2d}{L}\right)e_L
 &\le\rho\bigl((1+2k)d+k\eta\bigr),\\
 d+\left(1-\frac{2d}{R}\right)e_R
 &\le\rho\bigl((1+k)d+k\eta\bigr).
\end{align*}
Rearranging these inequalities and using \eqref{eq:random-mc-base}, we get
\[
 d+2k(d+\eta)
 \le
 \frac{\rho((1+2k)d+k\eta)-d}{1-\frac{2d}{L}}
 +\frac{\rho((1+k)d+k\eta)-d}{1-\frac{2d}{R}}.
\]
Divide by $d$ and let $d\to0$. Since $\eta=d^2$ and $L,R\to1$, it follows that
$1+2k\le\rho(2+3k)-2$, proving the theorem. Each comparison uses a fixed positive $d$ and the same obstacle; the limit only selects increasingly difficult feasible instances.
\end{proof}

\subsection{Social Cost: A Power-Proportional Mechanism}

Let $t_i=x_i$ for $i\in N_1$ and $t_i=1-x_i$ for $i\in N_2$. The smallest cost agent $i$ can attain under any feasible pathway is
\[
 m_i=\min\{t_i,k(1-t_i)\},
\]
attained by $(x_i,1)$ in the left region and by $(0,x_i)$ in the right region. Define the maximum possible improvement
\[
 q_i=t_i-m_i=\max\{(1+k)t_i-k,0\}.
\]
These expressions follow by minimizing $|t_i-s|+k(1-s)$ over the endpoint distance $s$; since $k<1$, its minimum is attained at $s=t_i$.

\begin{mechanism}[\textsc{PowerProportional}]\label{mec:power-proportional}
Set $\theta=\frac{1-k}{1+k}$ and $w_i=q_i^\theta$. If $W=\sum_iw_i>0$, select agent $i$ with probability $\frac{w_i}{W}$ and output $(x_i,1)$ if $i\in N_1$, or $(0,x_i)$ if $i\in N_2$. If $W=0$, output $(0,1)$.
\end{mechanism}

At $k=0$, this selects an agent with probability proportional to $t_i$. It is the counterpart of the proportional mechanism of \cite{Chan023} under the change of coordinates discussed in Section~\ref{sec:improve}. The exponent in Mechanism~\ref{mec:power-proportional} accounts for the positive pathway cost.

\begin{theorem}\label{thm:power-sp}
Mechanism~\ref{mec:power-proportional} is strategyproof in expectation for every $0\le k<1$.
\end{theorem}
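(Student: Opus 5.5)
The plan is to fix an agent $i$ and all other reports, and write agent $i$'s expected cost as an explicit function of its report. Comparing that function with its value at the truthful report reduces strategyproofness to two one-variable inequalities, one for underreporting and one for overreporting. By the reflection $x\mapsto 1-x$ it suffices to treat $i\in N_1$, with true value $t=t_i$ and report $t'\in[0,o)$.

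\textbf{Setup.} Let $W_{-i}=\sum_{j\ne i}w_j$. I classify the other agents' pathways by their effect on agent $i$.
\begin{itemize}
\item A right-side agent's pathway $(0,x_j)$ gives agent $i$ pathway cost at least $x_i$, so agent $i$ keeps its direct cost $t$.
\item A left-side agent's pathway $(x_j,1)$ gives agent $i$ some improvement $t-cost\ge 0$. This improvement is at most $q(t)=q_i$, because $q_i$ is by definition the maximum improvement over all feasible pathways.
\end{itemize}
Let $B\ge0$ be the $w_j$-weighted sum of these improvements. It does not depend on $t'$, and it satisfies $B\le W_{-i}\,q(t)$.

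Write $w'=q(t')^\theta$ for agent $i$'s reported weight. Let $g(t')=t-cost(t',1,t)\ge0$ be agent $i$'s true improvement from its own pathway $(t',1)$. Agent $i$'s expected improvement over the direct route is then
\[
F(t')=\frac{w'g(t')+B}{w'+W_{-i}}.
\]
It suffices to show $F(t')\le F(t)$, where $F(t)=\frac{q^{1+\theta}+B}{q^\theta+W_{-i}}$ with $q=q(t)$. The degenerate cases $W=0$ or $q=0$ reduce to the same comparison: the output $(0,1)$ gives every agent its direct cost. Write $\beta=B/W_{-i}\le q$. Since $F$ is a weighted average of $g(t')$ and $\beta$, we get $F(t)\ge\beta$, which handles every report with $g(t')\le\beta$.

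\textbf{Underreporting ($t'<t$).} Here the pathway cost is $t-t'+k(1-t')$, so $g(t')=\max\{(1+k)t'-k,0\}=q(t')$. Thus $F$ depends only on $x=q(t')\in[0,q]$ through $\Phi(x)=\frac{x^{1+\theta}+B}{x^\theta+W_{-i}}$. The sign of $\Phi'(x)$, after multiplying by $x^{1-\theta}>0$, is the sign of $x^{1+\theta}+(1+\theta)W_{-i}x-\theta B$. This expression is increasing in $x$, so $\Phi$ first decreases and then increases. Hence its maximum on $[0,q]$ is attained at an endpoint. Since $\Phi(0)=\beta\le\Phi(q)$, the maximum is at $x=q$, which is the truthful report.

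\textbf{Overreporting ($t'=t+u$, $u>0$).} This is the main step, and it is where $\theta$ is chosen. Here $w'=(q+(1+k)u)^\theta\ge w:=q^\theta$, and $g=q-(1-k)u$ whenever this is positive. Assume $g>\beta$, since otherwise we are done. Writing $F=\beta+\frac{w'(g-\beta)}{w'+W_{-i}}$ and using $\frac{w'}{w'+W_{-i}}\le\frac{w'}{w}\cdot\frac{w}{w+W_{-i}}$, it suffices to show $w'(g-\beta)\le w(q-\beta)$. Since $w'\ge w$, this inequality is hardest at $\beta=0$, where it reads
\[
(q+(1+k)u)^\theta\,(q-(1-k)u)\le q^{1+\theta}.
\]
The function $\varphi(u)=\theta\ln(q+(1+k)u)+\ln(q-(1-k)u)$ is concave. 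Its derivative at $u=0$ is $\bigl(\theta(1+k)-(1-k)\bigr)/q=0$, exactly because $\theta=\frac{1-k}{1+k}$. So $\varphi$ is maximized at $u=0$, which proves the inequality.

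Combining the two cases shows that no same-region report improves agent $i$'s expected cost. The right region follows by symmetry, which proves strategyproofness in expectation.
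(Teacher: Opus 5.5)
Your proof is correct and takes essentially the paper's route. Like the paper, you freeze the other agents' weights and selected pathways, and for overreports you prove the same key inequality $w'g\le wq$; the paper writes it as $w'e\ge(w'-w)q$ with $e=q-g$, derives it from Bernoulli's inequality $(1+u)^{-\theta}\ge1-\theta u$ instead of the concavity of $\varphi$, and absorbs the other agents via $A/(S+w)\le q$ rather than your reduction to $\beta=0$. For underreports your quasiconvexity analysis of $\Phi$ is correct but unnecessary, and the paper skips it: since $w'\le w$ and $g(t')\le q$ with $q\ge\beta$, your own weighted-average observation already gives $F(t')\le F(t)$.
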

\begin{proof}
Fix an agent's true endpoint distance $t$ and the other reports. If $q=\max\{(1+k)t-k,0\}=0$, every feasible pathway gives this agent cost $t$, so no deviation helps. Suppose $q>0$, and put $m=k(1-t)=t-q$ and $w=q^\theta$. For each other agent $j$, let $c_j$ be the true cost incurred when $j$ is selected, and set
\[
 S=\sum_{j\ne i}w_j,\qquad
 A=\sum_{j\ne i}w_j(c_j-m).
\]
The selected agent's pathway depends only on its own report. Thus the other agents' weights and their selected pathways remain fixed under the deviation. Their selection probabilities can change, but only through the common normalizing denominator. This lets us collect their contributions in the constants $S$ and $A$. Since $m\le c_j\le t$, we have $0\le A\le Sq$, and the truthful expected cost is
\[
 m+\frac{A}{S+w}.
\]
A same-region report at endpoint distance $r$ changes this agent's weight to $w'$ and its cost when selected to $m+e$, where $e\ge0$. If all weights after the deviation vanish, the fallback gives cost $t$, which cannot improve on truthfulness. Otherwise, the expected cost after the deviation is
\[
 m+\frac{A+w'e}{S+w'}.
\]
Let $\Delta_i$ denote the deviating expected cost minus the truthful expected cost. Subtracting the two expressions gives
\[
 \Delta_i=\frac{w'e(S+w)-(w'-w)A}{(S+w')(S+w)}.
\]
For $r\le t$, we have $w'\le w$, so both contributions to the numerator are nonnegative. For $r>t$, put $q'=(1+k)r-k>q$. When the deviating agent is selected, its pathway cost before taking the minimum with the direct route is
\[
 |t-r|+k(1-r)=m+(1-k)(r-t).
\]
Since the direct-route cost is $t=m+q$, its additional cost when selected is
\[
 e=\min\{q,(1-k)(r-t)\}
   =\min\{q,\theta(q'-q)\}.
\]
The elementary inequality $(1+u)^{-\theta}\ge1-\theta u$ for $u\ge0$ implies
\[
 1-\left(\frac{q}{q'}\right)^\theta
 \le\theta\frac{q'-q}{q}.
\]
The quantity $q\left(1-\frac{w}{w'}\right)$ is at most both $q$ and $\theta(q'-q)$. It is therefore at most their minimum $e$, giving
\[
 w'e\ge(w'-w)q\ge\frac{(w'-w)A}{S+w}.
\]
This makes the numerator of $\Delta_i$ nonnegative. Thus the extra cost when the agent is selected compensates for any benefit from increasing its selection probability.
\end{proof}

\begin{theorem}\label{thm:power-sc}
For $0<k<1$, Mechanism~\ref{mec:power-proportional} has social-cost approximation ratio at most $1+C(k)$, where
\begin{equation}\label{eq:random-sc-C}
 C(k)=(1+k)\left(\frac{4}{1-k}\right)^{\frac{1-k}{1+k}}\le4.
\end{equation}
In particular, it is a $5$-approximation independent of $n$ and $k$.
\end{theorem}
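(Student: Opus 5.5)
By Lemma~\ref{lem:opt-basic} and the reflection symmetry $x\mapsto1-x$, I will assume an optimal pathway of the form $(a^*,1)$. Let $U\subseteq N_1$ be the set of agents who strictly benefit from it, and put $s^*=a^*$, $m=|U|$ and $OPT_i=cost(a^*,1,x_i)$. If $m=0$, then $OPT=\sum_i t_i$ and every output is optimal. So the interesting case is $m\ge1$, and then $s^*>\frac{k}{1+k}$.

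The first step is a pointwise bound on the social cost when agent $j$ is selected. Write $SC_j$ for this social cost.
\begin{itemize}
\item If $j\in N_2$, Lemma~\ref{lem:basic} gives that no left agent benefits. Hence $SC_j\le OPT+G$, where $G=\sum_{i\in U}(t_i-OPT_i)$ is the total optimal improvement.
\item If $j\in N_1$, every agent outside $U$ still has cost at most $t_i=OPT_i$. For $i\in U$, the triangle inequality gives
\[
|t_i-t_j|+k(1-t_j)\le OPT_i+|t_j-s^*|+k(s^*-t_j)\le OPT_i+(1+k)|t_j-s^*|.
\]
So $SC_j\le OPT+\min\{G,\ m(1+k)|t_j-s^*|\}$. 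The cap $G$ holds because no agent's cost exceeds its direct route.
\end{itemize}
Since the selection probabilities are $w_j/W$, it suffices to prove
\[
\sum_j w_j\,\Phi_j\le C(k)\,OPT\cdot W,\qquad \Phi_j\in\Bigl\{G,\ \min\bigl\{G,\,m(1+k)|t_j-s^*|\bigr\}\Bigr\},
\]
where $\Phi_j$ is the cap $G$ for $j\in N_2$ and the minimum for $j\in N_1$.

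The second step supplies lower bounds on $OPT$ that pair with the weights. For $i\in U$, $OPT_i\ge |t_i-s^*|+k(1-s^*)$. The agents of $U$ together pay at least $mk(1-s^*)$ plus their total distance to $s^*$. Every agent outside $U$ pays $t_i$. Also $q_i\le t_i$ for every agent, so an agent outside $U$ with large weight contributes $OPT_i=t_i\ge q_i$ directly to $OPT$. The weights of agents in $U$ satisfy $W\ge\sum_{i\in U}q_i^\theta$, with $q_i\ge (1+k)s^*-k-(1+k)|t_i-s^*|$. I would first handle the $j$ outside $U$. For them I charge $w_j\Phi_j\le w_jG$ against $OPT_j\ge q_j$. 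I expect this to reduce to the inequality $G\,q_j^{\theta}\lesssim q_j\cdot\sum_{i\in U}q_i^{\theta}$ after using $G\le\sum_U q_i$, up to the constant. This is where the exponent $\theta<1$ helps rather than hurts.

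The main obstacle is the selected agents $j\in U$ far from $s^*$, where the charge is $m(1+k)D_j$ with $D_j=|t_j-s^*|$. Here I would normalize by $Q=(1+k)s^*-k$, the improvement available at $s^*$, and $\delta_j=D_j/Q$. Agents with $\delta_j$ small have weight roughly $Q^\theta$ and contribute a charge $O(m\delta_jQ)$, which sums against the total distance term in $OPT$. Agents with large $\delta_j$ have weight at most $\bigl((1+k)D_j\bigr)^\theta$ but pay $OPT_j\ge D_j$ themselves. I expect the worst case to be a one-variable maximization of $\frac{\min\{G,\,m(1+k)D\}\,w(D)}{OPT\cdot W}$ over a single outlier distance $D$ against $m$ clustered agents. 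Optimizing that expression should produce exactly $(1+k)\bigl(\frac{4}{1-k}\bigr)^{\theta}$, with the factor $4/(1-k)$ arising from balancing $D$ against $Q/(1-k)$.

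Finally, I would verify $C(k)\le4$ by showing $\log C(k)$ is maximized as $k\to0^+$, where $C\to4$. Concretely, I would differentiate $\log(1+k)+\theta\log\frac{4}{1-k}$ and check that the derivative is nonpositive on $(0,1)$. This gives the uniform bound $5$.
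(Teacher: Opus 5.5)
\textbf{There is a genuine gap.} Your reduction $\mathbb E[SC]\le OPT+\frac1W\sum_j w_j\Phi_j$ is fine, as are the triangle-inequality estimate and the monotonicity check for $C(k)$. But the inequality that actually produces $C(k)$ is never proved, and the one concrete reduction you state fails.

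For $j\notin U$ you charge $w_jG$ against $OPT_j\ge q_j$, which needs $G\,q_j^\theta\le c\,q_j\sum_{i\in U}q_i^\theta$. Take a single $i\in U$ with $q_i$ of constant size, and an agent $j\notin U$ with $t_j$ slightly above $\frac{k}{1+k}$, so that $q_j\to0^+$. The left side is of order $q_j^\theta$ and the right side of order $q_j$. Since $\theta<1$, the ratio blows up, so the exponent hurts here rather than helps. What saves this case is that $OPT_j=t_j=\frac{q_j+k}{1+k}$ carries the additive term $\frac{k}{1+k}$. The paper uses weighted AM--GM together with $q_i\le1$ to get $q_i^{1-\theta}q_j^\theta\le(1-\theta)q_i+\theta q_j\le 2t_j$, i.e.\ $w_jq_i\le 2w_it_j$.

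The case $j\in U$ is the heart of the theorem, and you only conjecture it. You do not show that one outlier against $m$ clustered agents is the extremal configuration. Your aggregated charge $\min\{G,m(1+k)D_j\}$ against $OPT\cdot W$ couples all agents' weights and distances, so there is no evident reduction to a one-variable problem.

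The missing idea is to decouple into the pairwise inequality $w_j(c_{ij}-c_i^*)\le C(k)\,w_ic_j^*$ for all $i,j$. Summing it over both indices gives $\mathbb E[SC]\le(1+C(k))OPT$ directly. When both agents strictly use the optimal center $s$:
\begin{itemize}
\item The excess satisfies $c_{ij}-c_i^*\le\min\{q_i,(1+k)c_j^*\}$.
\item The weight ratio is controlled by $t_j\le t_i+c_i^*+c_j^*$, $s<2t_i$, and $c_j^*\ge k(1-s)$. The last bound absorbs the additive $k$ in $q_j=(1+k)t_j-k$. Together these give $q_j\le\frac{2}{1-k}q_i+\frac{2(1+k)}{1-k}c_j^*$.
\item With $v=c_j^*/q_i$, the ratio is at most $\bigl(\frac{2}{1-k}+\frac{2(1+k)}{1-k}v\bigr)^\theta\frac{\min\{1,(1+k)v\}}{v}$. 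This is maximized at $v=\frac{1}{1+k}$, with value $(1+k)\bigl(\frac{4}{1-k}\bigr)^\theta$.
\end{itemize}
So the extremal case is a two-agent one with $(1+k)c_j^*=q_i$. The factor $\frac{4}{1-k}$ comes from $\frac{2}{1-k}+\frac{2}{1-k}$, not from balancing an outlier against a cluster.
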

\begin{proof}
Write $B=1+k$ and $\theta=\frac{1-k}{1+k}$. Fix an optimal pathway of the form $(a,1)$ or $(0,b)$, which exists by Lemma~\ref{lem:opt-basic}. Let $c_i^*$ be agent $i$'s optimal cost and $c_{ij}$ its cost when the mechanism selects $j$. We prove
\begin{equation}\label{eq:power-pair-bound}
 w_j(c_{ij}-c_i^*)\le C(k)w_i c_j^*
 \qquad\text{for all }i,j.
\end{equation}
The left-hand side measures the weighted excess cost of agent $i$ when $j$ is selected. The right-hand side bounds it using $j$'s optimal cost. Summing over both indices will give the total approximation guarantee.

If $c_i^*=t_i$, the left-hand side is nonpositive. If $w_j=0$, it is zero. Otherwise, $c_i^*<t_i$ implies $q_i>0$, since $q_i$ is the maximum possible improvement, and $w_j>0$ implies $q_j>0$. All divisions by these quantities below are therefore valid.

First suppose $c_j^*=t_j$. Since $c_{ij}-c_i^*\le q_i$, weighted arithmetic--geometric mean gives
\begin{align*}
 \frac{w_jq_i}{w_i}
 &=q_i^{1-\theta}q_j^\theta
 \le(1-\theta)q_i+\theta q_j\\
 &\le\frac{2k+(1-k)q_j}{1+k}
 \le 2\frac{q_j+k}{1+k}=2t_j.
\end{align*}
Here $q_i\le1$. Thus coefficient $2$ suffices in this case.

Now suppose both agents strictly use the optimal pathway. They lie in the same region. In endpoint-distance coordinates, let its center be $s$, so
\[
 c_i^*=|t_i-s|+k(1-s),\qquad
 c_j^*=|t_j-s|+k(1-s).
\]
Selecting $j$ gives $c_{ij}=\min\{t_i,|t_i-t_j|+k(1-t_j)\}$. The triangle inequality and the identity $k(1-t_j)=k(1-s)+k(s-t_j)$ yield
\begin{align*}
 c_{ij}&\le c_i^*+|s-t_j|+k(s-t_j)
          \le c_i^*+B c_j^*,\\
 c_{ij}-c_i^*&\le\min\{q_i,Bc_j^*\}.
\end{align*}
Also, $t_j\le t_i+c_i^*+c_j^*$ and $s\le t_i+c_i^*<2t_i$. Recalling $Bt_i=q_i+k$ and $Bt_j=q_j+k$, the first inequality gives
\[
 q_j=Bt_j-k\le2Bt_i+B c_j^*-k=2q_i+k+B c_j^*.
\]
The second, together with $c_j^*\ge k(1-s)$, gives
\[
 Bc_j^*\ge Bk(1-2t_i)=k(1-k)-2kq_i,
 \qquad
 k\le\frac{Bc_j^*+2kq_i}{1-k}.
\]
Substituting this upper bound for the additive $k$ in the preceding bound on $q_j$ yields
\begin{align*}
 q_j
 &\le2q_i+B c_j^*+\frac{Bc_j^*+2kq_i}{1-k}\\
 &=\frac{2}{1-k}q_i+\frac{B(2-k)}{1-k}c_j^*
 \le\frac{2}{1-k}q_i+\frac{2B}{1-k}c_j^*.
\end{align*}
For $k>0$, the center satisfies $s<1$, so $c_j^*>0$. Set $v=\frac{c_j^*}{q_i}>0$. It follows that
\[
 \frac{w_j(c_{ij}-c_i^*)}{w_i c_j^*}
 \le
 \left(\frac{2}{1-k}+\frac{2Bv}{1-k}\right)^\theta
 \frac{\min\{1,Bv\}}{v}.
\]
To maximize this expression, put $a_0=\frac{2}{1-k}$ and $b_0=\frac{2B}{1-k}$. For $v\le\frac1B$, it equals $B(a_0+b_0v)^\theta$, which is increasing. For $v\ge\frac1B$, it equals $\frac{(a_0+b_0v)^\theta}{v}$, whose logarithmic derivative is
\[
 \frac{\theta b_0}{a_0+b_0v}-\frac1v
 =\frac{(\theta-1)b_0v-a_0}{v(a_0+b_0v)}<0.
\]
Thus the maximum occurs at $v=\frac1B$ and equals $B\left(\frac{4}{1-k}\right)^\theta=C(k)$.

To complete \eqref{eq:power-pair-bound}, note that
\[
 \frac{d}{dk}\log C(k)
 =\frac{2\left(1+k-\log\frac{4}{1-k}\right)}{(1+k)^2}<0.
\]
Indeed, $\log\frac{4}{1-k}>1+k$ for $0\le k<1$. The limits of $C(k)$ at $0$ and $1$ are $4$ and $2$, respectively, so $2<C(k)\le4$.

If $W>0$, summing \eqref{eq:power-pair-bound} over $i,j$ and dividing by $W$ yields
\[
 \mathbb E[SC]
 =\frac1W\sum_{i,j}w_jc_{ij}
 \le\frac1W\left(W\sum_i c_i^*+C(k)W\sum_jc_j^*\right)
 =(1+C(k))OPT.
\]
If $W=0$, every agent's minimum feasible cost is its direct-route cost, and the fallback pathway is optimal.
\end{proof}

\subsection{A Sharper Guarantee for Zero Pathway Cost}

The proportional mechanism at $k=0$ admits a sharper analysis. Rather than partitioning agents by their positions relative to an optimal center, we compare the cost of each ordered pair of agents with their optimal costs. The inequality retains the direct-route cap and can then be summed over all pairs. We state it on the real line so that it can also be applied in Section~\ref{sec:improve}.

\begin{lemma}\label{lem:proportional-pair}
Let $z_1,\ldots,z_n,z^*\in\mathbb R$, and set
\[
 t_i=|z_i|,\qquad c_i^*=\min\{t_i,|z_i-z^*|\},\qquad
 c_{ij}=\min\{t_i,|z_i-z_j|\}.
\]
Then $t_jc_{ij}\le t_jc_i^*+2t_i c_j^*$ for every $i,j$.
\end{lemma}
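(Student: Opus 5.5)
The plan is a case analysis on whether each of agents $i$ and $j$ strictly benefits from the optimal center $z^*$, that is, whether $c_i^*<t_i$ and $c_j^*<t_j$. Throughout I write $D=c_{ij}-c_i^*$. The target inequality is then $t_jD\le 2t_ic_j^*$, and it is trivial whenever $D\le0$.

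\textbf{Case $c_i^*=t_i$.} Since $c_{ij}\le t_i$ by definition, we get $D\le0$ and the claim holds.

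\textbf{Case $c_j^*=t_j$.} We always have $D\le c_{ij}\le t_i$, so $t_jD\le t_jt_i\le 2t_it_j=2t_ic_j^*$. This is the same coefficient-$2$ situation as the corresponding step in the proof of Theorem~\ref{thm:power-sc}.

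\textbf{Main case: $c_i^*=|z_i-z^*|<t_i$ and $c_j^*=|z_j-z^*|<t_j$.} The naive approach bounds $D\le\min\{t_i,c_j^*\}$ and $t_j\le c_j^*+2t_i$, but that only gives the constant $3$. The key is to split $t_j$ and pair each piece with a different cap on $D$. I will use two upper bounds on $D$.
\begin{itemize}
\item First, $c_{ij}\le|z_i-z_j|\le c_i^*+c_j^*$, so $D\le c_j^*$.
\item Second, $c_{ij}\le t_i$, so $D\le t_i-c_i^*$.
\end{itemize}
I will also bound $t_j$ by the triangle inequality through $z^*$ and $z_i$:
\[
t_j=|z_j|\le |z_j-z^*|+|z^*-z_i|+|z_i|=c_j^*+c_i^*+t_i.
\]
Assuming $D>0$, multiply and distribute:
\[
t_jD\le c_j^*D+(c_i^*+t_i)D\le c_j^*(t_i-c_i^*)+(c_i^*+t_i)c_j^*=2t_ic_j^*.
\]
Here the first term uses the cap $D\le t_i-c_i^*$ and the second uses the cap $D\le c_j^*$. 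This gives exactly the claimed inequality.

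I expect this main case to be the only delicate step. The point is to avoid wasting the $c_i^*$ term: the same $c_i^*$ that appears in the triangle bound for $t_j$ must cancel against the $-c_i^*$ in the direct-route cap $D\le t_i-c_i^*$. That cancellation is what brings the coefficient down from $3$ to $2$. No results beyond elementary inequalities are needed, and the lemma holds on the whole real line, as stated.
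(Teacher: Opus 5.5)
Your proof is correct and follows essentially the paper's argument. It uses the same three cases, the same caps $c_{ij}-c_i^*\le\min\{t_i-c_i^*,c_j^*\}$, and the same triangle bound $t_j\le t_i+c_i^*+c_j^*$; the only difference is in the last step, where you split $t_j$ and apply a different cap to each summand to reach $2t_ic_j^*$ in one line, while the paper bounds $(t_i+c_i^*+c_j^*)\min\{t_i-c_i^*,c_j^*\}$ separately according to which term attains the minimum.
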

\begin{proof}
If $c_i^*=t_i$, the first term on the right already bounds $t_jc_{ij}$. If $c_j^*=t_j$, the second term suffices because $c_{ij}\le t_i$. These cases include agents that retain their direct route at the comparison solution.

Otherwise, both agents strictly use the center $z^*$. Put $A=c_i^*=|z_i-z^*|$ and $D=c_j^*=|z_j-z^*|$. The triangle inequality gives
\[
 t_j=|z_j|\le|z_i|+|z_i-z^*|+|z_j-z^*|=t_i+A+D.
\]
Also, $|z_i-z_j|\le A+D$, so $c_{ij}\le\min\{t_i,A+D\}$. Subtracting $A$ retains both bounds:
\[
 c_{ij}-A\le\min\{t_i-A,D\}.
\]
If $c_{ij}\le A$, the claim is immediate. In the remaining case,
\[
 t_j(c_{ij}-A)
 \le(t_i+A+D)\min\{t_i-A,D\}\le2t_iD.
\]
For the last inequality, when $D\le t_i-A$ use $A+D\le t_i$; when $D\ge t_i-A$, expanding the difference between the right- and left-hand sides gives $(t_i+A)(D-(t_i-A))\ge0$.
\end{proof}

\begin{theorem}\label{thm:proportional-three}
For $k=0$, Mechanism~\ref{mec:power-proportional} is a randomized strategyproof $3$-approximation for social cost. The factor $3$ is tight for this mechanism as $n$ grows.
\end{theorem}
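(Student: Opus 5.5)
Strategyproofness in expectation at $k=0$ is the special case $\theta=1$ of Theorem~\ref{thm:power-sp}, so only the approximation factor and its tightness need proof.

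\textbf{Upper bound.} Map the instance to the real-line form of Lemma~\ref{lem:proportional-pair}. At $k=0$, an agent's cost under $(x_j,1)$ or $(0,x_j)$ depends only on the distance between endpoint distances. Put $z_i=t_i$ for $i\in N_1$ and $z_i=-t_i$ for $i\in N_2$; the sign records the region.

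\begin{itemize}
\item \emph{Same region.} If $i,j$ are in the same region, the cost of $i$ when $j$ is selected is $\min\{t_i,|t_i-t_j|\}=\min\{t_i,|z_i-z_j|\}$.
\item \emph{Opposite regions.} The pathway does not help $i$, so its cost is $t_i$. Since $|z_i-z_j|=t_i+t_j\ge t_i$, this also equals $\min\{t_i,|z_i-z_j|\}$.
\item \emph{Optimum.} By Lemma~\ref{lem:opt-basic}, take an optimal pathway of the form $(a,1)$ or $(0,b)$. It corresponds to a center $z^*=\pm s$, with costs $c_i^*=\min\{t_i,|z_i-z^*|\}$. This needs the same sign check: agents in the other region have cost $t_i\le t_i+s$.
\end{itemize}

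Apply Lemma~\ref{lem:proportional-pair} to every ordered pair, sum over $i,j$, and divide by $W=\sum_j t_j$. This gives
\[
\mathbb E[SC]=\frac{1}{W}\sum_{i,j}t_jc_{ij}\le \sum_i c_i^*+2\sum_j c_j^*=3\,OPT.
\]
If $W=0$, all agents are at the facilities and the fallback has cost $0$.

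\textbf{Tightness.} The aim is a family where the mechanism's expected cost approaches $3\,OPT$. Put all agents in $N_1$ with a tiny obstacle on the right. Place one agent at distance $1$ (location near $1$), and $n-1$ agents at a common small distance $\epsilon$ clustered together.

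The difficulty is that agents at tiny $t$ barely matter, so this particular family may only give ratio $2$. The intended mechanism is that the bound $t_j(c_{ij}-A)\le 2t_iD$ is tight when $t_j\approx t_i+A+D$ with $D=t_i-A$. This suggests a construction with a large mass of agents at one location $u$ and many agents at $2u$.

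I would try a two-cluster family: $p$ agents at $u$ and $p\lambda$ agents at $2u$ (or the far agent at a larger distance), optimal center $2u$, and optimize the ratio as $p\to\infty$. For example, take $m$ agents at $1/2$ and $M$ agents near $1$. The optimum centers at $1$ and costs about $m/2$. The mechanism picks a near-$1$ agent with probability $\frac{M}{M+m/2}$, costing $m/2$. With probability $\frac{m/2}{M+m/2}$ it picks a $1/2$ agent, and then each far agent pays $1/2$, costing $M/2$. This ratio tends to $1$ or $2$, not $3$. So the tight instance needs a continuum of agents spread out, e.g.\ a density on $[0,1]$ with the optimum at $1$, evaluated via an integral $\iint t_j\min\{t_i,|t_i-t_j|\}$ against the total weight.

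This optimization over distributions is the main obstacle. I would first try uniform and power-law densities $\propto t^{\alpha}$, pushing toward concentration near $0$ with heavy weight, and compute the limiting ratio to look for $3$. Failing that, I would reverse-engineer the equality conditions of the lemma at scale.
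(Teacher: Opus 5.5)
\textbf{Upper bound.} Your strategyproofness and upper-bound arguments are correct and essentially the paper's. You use the same signed embedding ($z_i=x_i$ on the left, $z_i=x_i-1$ on the right) and the same check that cross-region and optimal-solution costs have the form $\min\{t_i,|z_i-z^*|\}$. You then sum Lemma~\ref{lem:proportional-pair} over ordered pairs and divide by $W$, as the paper does.

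\textbf{Tightness.} The genuine gap is the second half of the theorem: you never exhibit an instance on which the factor $3$ is approached. The density search you propose is unnecessary. The error is in your own two-cluster computation, where you assumed the optimum is centered at the far cluster. With $m$ agents at $\frac12$ and $M$ agents near $1$, the optimum is $\min\{m,M\}/2$, not $m/2$. Your expected cost $\frac{3mM/4}{M+m/2}$ is correct. In the regime $M=1$, $m\to\infty$, the optimum is centered at the heavy cluster and equals about $\frac12$, while the expected cost tends to $\frac32$. So the ratio tends to $3$. Your equality heuristic leads to the same instance when read correctly. The conditions $t_j=t_i+A+D$ and $A+D=t_i$ hold with $A=c_i^*=0$ and $D=t_i$, $t_j=2t_i$. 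So the center sits at the heavy location $u$ and a light agent sits at $2u$. Placing the center at $2u$, as you suggested, makes $D=0$ for the far agents, which removes the factor-$2$ term entirely.

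\textbf{The paper's instance.} Place $m$ agents at $a$ and one agent at $2a$, with $0<2a<o$. The pathway $(a,1)$ gives $OPT=a$. The mechanism selects a clustered agent with probability $\frac{m}{m+2}$, giving social cost $a$, paid only by the far agent. It selects the far agent with probability $\frac{2}{m+2}$, giving social cost $ma$. Hence $\mathbb E[SC]=\frac{3ma}{m+2}$, and the ratio $\frac{3m}{m+2}$ tends to $3$ as $n=m+1$ grows. Adding this instance completes your proof.
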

\begin{proof}
Strategyproofness follows from Theorem~\ref{thm:power-sp}. Map left-side locations to $z_i=x_i$ and right-side locations to $z_i=x_i-1$. Selecting $j$ gives exactly the cost $\min\{|z_i|,|z_i-z_j|\}$. By Lemma~\ref{lem:opt-basic}, an optimal pathway has one endpoint at an original facility; its costs therefore have the form $c_i^*$ in Lemma~\ref{lem:proportional-pair} for a feasible signed center $z^*$. If $T=\sum_i t_i>0$, summing that lemma gives
\[
 \mathbb E[SC]=\frac{1}{T}\sum_{i,j}t_jc_{ij}
 \le\frac{1}{T}\sum_{i,j}(t_jc_i^*+2t_ic_j^*)=3OPT.
\]
If $T=0$, all agents have zero cost.

For tightness, place $m$ agents at $a$ and one at $2a$, with $0<2a<o$. The optimal social cost is $a$. Selecting a report at $a$ gives social cost $a$, whereas selecting $2a$ gives $ma$. Their probabilities are $\frac{m}{m+2}$ and $\frac{2}{m+2}$, respectively. The ratio is $\frac{3m}{m+2}$, which tends to $3$.
\end{proof}

\begin{corollary}\label{cor:random-sc-upper}
There is a randomized strategyproof mechanism with social-cost approximation ratio at most
\begin{equation}\label{eq:random-sc-upper}
 U_n(k)=
 \begin{cases}
  \min\{3,n\},&k=0,\\
  \min\left\{1+C(k),\frac{n}{1+k(n-1)}\right\},&0<k<1.
 \end{cases}
\end{equation}
\end{corollary}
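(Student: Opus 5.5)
The plan is to select, using only the public parameters $n$ and $k$, whichever of two already-analyzed mechanisms has the better proven guarantee. Since this choice does not depend on any report, incentive properties and approximation bounds carry over directly. Concretely, define the mechanism to run \textsc{PowerProportional} (Mechanism~\ref{mec:power-proportional}) when its guarantee is smaller, and \textsc{TwoExtreme} (Mechanism~\ref{mec:2-extreme}) otherwise. At $k=0$ the relevant guarantees are $3$ and $n$; for $0<k<1$ they are $1+C(k)$ and $\frac{n}{1+k(n-1)}$.

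First I would verify strategyproofness in expectation. \textsc{PowerProportional} is strategyproof in expectation by Theorem~\ref{thm:power-sp}. \textsc{TwoExtreme} is a deterministic group strategyproof mechanism by its lemma, and a deterministic SP mechanism, viewed as a point-mass distribution, is SP in expectation. Because the selection rule depends only on $n$ and $k$, an agent's misreport cannot change which mechanism is run. Strategyproofness of the combined mechanism therefore follows on each branch separately.

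Second, I would verify the approximation bound branch by branch. For $0<k<1$, Theorem~\ref{thm:power-sc} gives ratio $1+C(k)$, and Theorem~\ref{thm:sccc} gives $\frac{n}{1+k(n-1)}$ for \textsc{TwoExtreme}; the deterministic bound is trivially an expected-cost bound. For $k=0$, Theorem~\ref{thm:proportional-three} gives $3$, and Theorem~\ref{thm:sccc} at $k=0$ gives $n$. Taking the branch with the smaller constant yields $U_n(k)$.

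There is no real obstacle here. The only point needing care is that the selection must not depend on reports, so that agents cannot manipulate which mechanism is used; the choice depends only on $n$ and $k$, and $o$ plays no role in either bound.
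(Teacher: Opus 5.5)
Your proposal is correct and follows the same argument as the paper: choose between \textsc{PowerProportional} and \textsc{TwoExtreme} using only the public parameters $n$ and $k$. Because this choice does not depend on the reports, strategyproofness in expectation and the proved bounds ($3$ or $1+C(k)$, versus $\frac{n}{1+k(n-1)}$) carry over to the combined mechanism.
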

\begin{proof}
Choose between \textsc{PowerProportional} and \textsc{TwoExtreme} according to their proved bounds, using only the public parameters $n,k$. The choice is independent of the reported locations and hence preserves strategyproofness.
\end{proof}

\subsection{A Randomized Social-Cost Lower Bound}

Our lower bound uses two profiles with a common group of four agents, following the construction in \cite{cocoa2024mechanism}. We prove a pointwise certificate that allows an arbitrary nonnegative pathway fee. This lets us establish the lower bounds directly in the present model and in the real-line pathway model, without assuming that a mechanism must choose a pathway incident to a facility.

\begin{lemma}\label{lem:random-sc-certificate}
For $z\in\mathbb R$, $h\ge0$, and $t\ge0$, define $c_t=\min\{t,|t-z|+h\}$. Put
\[
 S_P=4c_{\frac{7}{10}}+3c_2,\qquad S_Q=4c_1+3c_2.
\]
Then
\begin{equation}\label{eq:random-sc-certificate}
 7S_P+11c_{\frac{7}{10}}\ge\frac{273}{10},
 \qquad
 11S_Q-11c_{\frac{7}{10}}\ge\frac{297}{10}.
\end{equation}
\end{lemma}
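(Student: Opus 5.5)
Both inequalities in \eqref{eq:random-sc-certificate} are claims about the piecewise-linear function $(z,h)\mapsto c_t$. My plan is to treat them separately, because the first has only nonnegative coefficients while the second has a negative coefficient on $c_{\frac{7}{10}}$.

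\emph{First inequality.} Expanding gives $7S_P+11c_{\frac{7}{10}}=39c_{\frac{7}{10}}+21c_2$. Each $c_t$ is nondecreasing in $h$, so this quantity is minimized at $h=0$, and it suffices to show
\[
\Phi(z)=39\min\{\tfrac{7}{10},|\tfrac{7}{10}-z|\}+21\min\{2,|2-z|\}\ge\tfrac{273}{10}
\qquad\text{for all } z.
\]
The function $\Phi$ is continuous and piecewise linear. Its breakpoints are $z\in\{0,\frac{7}{10},\frac75,2,4\}$, and it is constant for $z\le0$ and for $z\ge4$. Its minimum is therefore attained at one of these breakpoints, and I will evaluate $\Phi$ at each:
\begin{itemize}
\item at $z=\frac{7}{10}$, $\Phi=21\cdot\frac{13}{10}=\frac{273}{10}$, which is equality;
\item at $z=2$, $\Phi=39\cdot\frac{7}{10}=\frac{273}{10}$, again equality;
\item at $z=0$, $z=\frac75$ and $z=4$ the value is strictly larger.
\end{itemize}
This is a routine finite check.

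\emph{Second inequality.} Here the expression is $44c_1+33c_2-11c_{\frac{7}{10}}$. Minimizing over $h$ no longer works directly, because $c_{\frac{7}{10}}$ enters with a negative sign. Instead I will derive two comparison bounds that hold for every $(z,h)$.
\begin{itemize}
\item \emph{Upper bound on $c_{\frac{7}{10}}$.} I claim $c_{\frac{7}{10}}\le\min\{\frac{7}{10},\,c_1+\frac{3}{10}\}$. If $c_1=|1-z|+h$, then $|\frac{7}{10}-z|+h\le c_1+\frac{3}{10}$ by the triangle inequality. If $c_1=1$, then $c_{\frac{7}{10}}\le\frac{7}{10}<c_1+\frac{3}{10}$ trivially.
\item \emph{Lower bound on $c_2$.} I claim $c_2\ge1-c_1$. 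If $c_2=2$, this is immediate. Otherwise $c_2=|2-z|+h\ge1-|1-z|+h$, which is at least $1-c_1$ when $c_1=|1-z|+h$ (using $h\ge0$). When $c_1=1$, the bound $c_2\ge0$ suffices.
\end{itemize}

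Substituting both bounds gives
\[
44c_1+33c_2-11c_{\frac{7}{10}}\ge 44c_1+33(1-c_1)-11\bigl(c_1+\tfrac{3}{10}\bigr)=33-\tfrac{33}{10}=\tfrac{297}{10},
\]
which is the second claim.

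The main obstacle is the second inequality: one has to find comparison bounds whose coefficients cancel $c_1$ exactly. If the two bounds above turn out to be too lossy in some regime, the fallback is a direct vertex enumeration. The expression is piecewise linear in $(z,h)$ on the region $h\ge0$, so its minimum is attained at a vertex of the linearity regions or along a recession direction. The candidate vertices are $h=0$ together with $z\in\{0,\frac{7}{10},1,\frac75,2,4\}$, and the points where $h$ equals a direct-route threshold. Checking those finitely many points would then complete the proof.
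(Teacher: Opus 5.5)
Your proof is correct. The first inequality is handled as in the paper: monotonicity in $h$ reduces it to $h=0$, and a check at the breakpoints $z\in\{0,\frac{7}{10},\frac75,2,4\}$ finishes it, giving the values $\frac{693}{10},\frac{273}{10},\frac{399}{10},\frac{273}{10},\frac{693}{10}$.

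For the second inequality you take a genuinely different route.

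\textbf{The paper's route.} It tabulates $11S_Q-11c_{\frac{7}{10}}$ at the breakpoints for $h=0$. It then argues that the expression is nondecreasing in $h$. Whenever $c_{\frac{7}{10}}$ is uncapped, one has $z>0$, $h<z$ and $z+h<\frac75$. So the costs at $1$ and $2$ are also uncapped, and the $h$-derivative is $7-1=6$.

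\textbf{Your route.} You give a pointwise certificate instead: $c_{\frac{7}{10}}\le c_1+\frac{3}{10}$ and $c_1+c_2\ge 1$. Both hold for every real $z$ and every $h\ge0$, and your case checks of them are right. The signs of the coefficients justify substituting an upper bound for $c_{\frac{7}{10}}$ and a lower bound for $c_2$. The weights satisfy $44-33-11=0$, so $c_1$ cancels and exactly $\frac{297}{10}$ remains.

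\textbf{Comparison.} Your argument needs neither the enumeration nor the derivative-in-$h$ step, and it treats all $h\ge0$ at once. It also shows exactly where equality holds: at $h=0$, $z\in[1,\frac75]$, matching the two $\frac{297}{10}$ entries in the paper's table. The paper's approach is more mechanical, but it can be rerun unchanged for any other choice of certificate weights, whereas yours is tailored to these coefficients. The fallback vertex enumeration you mention is not needed.
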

\begin{proof}
First take $h=0$. Both expressions are continuous and piecewise linear in $z$, with breakpoints among $0,\frac{7}{10},1,\frac75,2,4$, and are constant outside $[0,4]$. Their values at the breakpoints are
\[
\begin{array}{c|rrrrrr}
 z&0&\frac{7}{10}&1&\frac75&2&4\\ \hline
 7S_P+11c_{\frac{7}{10}}
 &\frac{693}{10}&\frac{273}{10}&\frac{327}{10}&\frac{399}{10}&\frac{273}{10}&\frac{693}{10}\\
 11S_Q-11c_{\frac{7}{10}}
 &\frac{1023}{10}&\frac{561}{10}&\frac{297}{10}&\frac{297}{10}&\frac{363}{10}&\frac{1023}{10}
\end{array}
\]
which proves the inequalities for $h=0$.

The first expression is nondecreasing in $h$. To handle the second, observe that if $c_{\frac{7}{10}}<\frac{7}{10}$, then $z>0$, $h<z$, and $z+h<\frac75$. It follows that agents at $1$ and $2$ also strictly use the pathway. On each linear piece where $c_{\frac{7}{10}}$ is not capped, the derivative of $S_Q-c_{\frac{7}{10}}$ with respect to $h$ is $7-1=6$. Once this cost is capped, its derivative is zero and $S_Q$ remains nondecreasing. Continuity handles the breakpoints, proving the inequalities for all $h\ge0$.
\end{proof}

\begin{theorem}\label{thm:random-sc-lower}
For $n\ge7$ and $0\le k<1$, every randomized strategyproof mechanism has a social-cost approximation ratio at least
\[
 \max\left\{1,\frac{285}{263+385k}\right\}.
\]
At $k=0$, this gives $\frac{285}{263}\approx1.08365$.
\end{theorem}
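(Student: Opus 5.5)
The plan is to use two seven-agent profiles sharing a common group of four co-located agents, with Lemma~\ref{lem:random-sc-certificate} as the pointwise certificate for each realized output. The bound $1$ is trivial, since a randomized objective cannot beat the optimum. For $n>7$ we park the extra $n-7$ agents at a facility, where their cost is always zero; this induces a strategyproof mechanism on seven agents with the same guarantee, as in the proof of Theorem~\ref{thm:sc-lb}. So we assume $n=7$.

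\textbf{Construction.} Put all agents in $N_2$ with the obstacle at a tiny $o=\eta$. Measure locations by the distance $t=1-x$ to the facility at $1$, scaled by a factor $\lambda$ slightly below $\frac12$, so that the unscaled distances $0.7,1,2$ are all at most $1-\eta$. As in Lemmas~\ref{lem:sc-lb-1} and~\ref{lem:sc-positive-k}, write an output as $(a,z)$ with $z=1-b$. An agent's cost is $\min\{t,|t-z|+h\}$ with $h=k(1-z)+(1-k)a\ge0$, which after rescaling by $\lambda$ is exactly the form $c_t$ of the lemma. Profile $P$ has four agents at $0.7$ and three at $2$; profile $Q$ moves the four agents to $1$.

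\textbf{Mechanism side.} Apply the two certificate inequalities pointwise to every realized output and take expectations. This gives $7\,\mathbb E_P[S_P]+11\,\mathbb E_P[c_{0.7}]\ge 27.3$ and $11\,\mathbb E_Q[S_Q]-11\,\mathbb E_Q[c_{0.7}]\ge 29.7$, with all quantities in scaled units. By the expectation version of partial group strategyproofness (Lemma~\ref{lemma:partial-gsp}), the four agents truly at $0.7$ cannot gain by jointly reporting $1$. Hence $\mathbb E_P[c_{0.7}]\le\mathbb E_Q[c_{0.7}]$, and adding the two inequalities yields
\[
7\,\mathbb E_P[SC]+11\,\mathbb E_Q[SC]\ge 57 .
\]

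\textbf{Optimum side.} We exhibit explicit feasible pathways, in unscaled units.
\begin{itemize}
\item In $P$, take the pathway with center at $2$ and $b$ just above $o$. Its fee is $k(b-0)\to0$, the three agents at $2$ pay about $0$, and the four agents at $0.7$ go direct. So $\mathrm{OPT}_P\lesssim 2.8$.
\item In $Q$, take center $1$, which lies at $x\approx0$ because $\lambda\approx\frac12$, with $a=0$. The fee is $k\cdot(1/\lambda-1)\approx k$ per user. All seven agents use it, so $\mathrm{OPT}_Q\lesssim 3+7k$.
\end{itemize}
If $\rho$ is the ratio, then $57\le\rho\,(7\cdot2.8+11(3+7k))=\rho\,(52.6+77k)$. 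This gives $\rho\ge\frac{285}{263+385k}$ after letting $\eta\to0$ and $\lambda\to\frac12$. I would verify the exact fee bookkeeping, namely that the $Q$-pathway's traversal length is $1-z$ up to $\eta$ in units of $\lambda$, and that the error terms vanish.

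\textbf{Main obstacle.} The delicate step is matching the model exactly to the lemma's abstraction with the right normalization. We need: the fee $h$ is an arbitrary nonnegative number for the mechanism (fine, since the lemma allows any $h\ge0$); every relevant distance stays inside the right region; and the rescaling is chosen so that the $Q$-optimum pays fee $\approx k$ per unscaled unit, which is what produces the $385k$ term. If $\lambda=\frac12$ is not admissible exactly, we take a sequence $\lambda\uparrow\frac12$, $\eta\downarrow0$, and obtain the bound in the limit, as in Theorem~\ref{thm:random-mc-lower}.
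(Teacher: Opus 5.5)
Your proposal is correct and is the paper's proof reflected through $x\mapsto 1-x$. It uses the same profiles $P,Q$ scaled by $\lambda\to\frac12$, the certificate of Lemma~\ref{lem:random-sc-certificate} applied pointwise with arbitrary fee $h\ge0$, and partial group strategyproofness in expectation to discard the $11c_{7/10}$ terms; its optimum bounds come from pathways centered at scaled distances $2$ and $1$ (the paper puts the agents in the left region with $2\lambda<o<1$, which is your $o=\eta$ after reflection). One harmless slip: the $Q$-center at scaled distance $1$ sits at $x=1-\lambda\approx\frac12$, not $x\approx0$, but your fee $k(1/\lambda-1)$ is the correct one for that location, so the bound $\frac{285}{263+385k}$ is unaffected.
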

\begin{proof}
Fix $0<\lambda<\frac12$ and an obstacle with $2\lambda<o<1$. In profile $P$, place four agents at $\frac{7\lambda}{10}$ and three at $2\lambda$. In profile $Q$, move the first four agents to $\lambda$, keeping the other three fixed. Place all extra agents at $0$. For any output, left-side costs have the form
\[
 c_t=\min\{t,|t-a|+h\},\qquad h=k(b-a)+1-b\ge0.
\]
Apply Lemma~\ref{lem:random-sc-certificate} with center $\frac{a}{\lambda}$ and fee $\frac{h}{\lambda}$, and multiply its inequalities by $\lambda$. Taking expectations over the respective output distributions yields
\begin{align*}
 7\mathbb E[SC(P)]+11\mathbb E_P[c_{\frac{7\lambda}{10}}]
 &\ge\frac{273\lambda}{10},\\
 11\mathbb E[SC(Q)]-11\mathbb E_Q[c_{\frac{7\lambda}{10}}]
 &\ge\frac{297\lambda}{10}.
\end{align*}
The expectation version of partial group strategyproofness gives
$\mathbb E_P[c_{\frac{7\lambda}{10}}]\le\mathbb E_Q[c_{\frac{7\lambda}{10}}]$. The extra term in their sum is $11(\mathbb E_P[c_{\frac{7\lambda}{10}}]-\mathbb E_Q[c_{\frac{7\lambda}{10}}])\le0$. Removing this term therefore gives
\begin{equation}\label{eq:random-sc-weighted}
 7\mathbb E[SC(P)]+11\mathbb E[SC(Q)]\ge57\lambda.
\end{equation}
The feasible pathways $(2\lambda,1)$ and $(\lambda,1)$ give, respectively,
\begin{align*}
 OPT(P)&\le\frac{14\lambda}{5}+3k(1-2\lambda),\\
 OPT(Q)&\le3\lambda+7k(1-\lambda).
\end{align*}
Thus any approximation ratio $\rho$ satisfies
\[
 \rho\ge\frac{57\lambda}{\frac{263\lambda}{5}+k(98-119\lambda)}.
\]
Letting $\lambda\to\frac12$ proves $\rho\ge\frac{285}{263+385k}$. The trivial lower bound $1$ completes the statement. At $k=0$, the same ratio $\frac{285}{263}$ already follows for every fixed admissible $\lambda$.
\end{proof}

\section{Improved Bounds for the Real-Line Pathway Model} \label{sec:improve}

We consider the model of \cite{Chan023}, in which agents may be located anywhere on the real line and a facility is fixed at $0$. A mechanism selects a zero-cost pathway with endpoints $a,b\in\mathbb R$. The cost of an agent at $x$ is
\[
 c(x;a,b)=\min\{|x|,\,|x-a|+|b|,\,|x-b|+|a|\}.
\]
Reports are unrestricted locations on the real line.

There is a cost correspondence with our model when $k=0$: map a left-side location $x$ to $-x$, a right-side location $x$ to $1-x$, and a pathway $(a,b)$ to $(-a,1-b)$. Both original facilities map to $0$, and each agent's cost is preserved. This correspondence concerns a restricted set of locations and pathways; it does not identify the two mechanisms' feasible outputs or reporting domains. We therefore justify the results for the real-line model directly. For randomized mechanisms, SP and approximation are defined using expected costs as above, but deviations may now be arbitrary real locations.

For any pathway, label its endpoints so that $|a|\le |b|$ and write $h=|a|$. The inequality $|x-a|+|b|\ge |x|$ gives
\begin{equation}\label{eq:real-line-cost}
 c(x;a,b)=\min\{|x|,\,|x-b|+h\}.
\end{equation}
This is only a relabeling of the actual output: we retain the nonnegative term $h$ throughout the lower-bound proofs. In particular, if an agent at a positive location strictly benefits from the pathway, then $b>0$.

For optimization, replacing $(a,b)$ by $(0,b)$ weakly decreases every agent's cost. Thus the optimal value equals that in the model with one fixed facility at $0$ and one additional facility. A deterministic or randomized strategyproof mechanism for that model also remains strategyproof when its additional facility $b$ is implemented as the pathway $(0,b)$. Consequently, the deterministic maximum-cost upper bound of $2$ in \cite{cocoa2024mechanism} applies here. We match it by improving the deterministic lower bound of $\frac{3}{2}$ in \cite{Chan023} to $2$. This establishes a tight deterministic bound. This upper-bound transfer does not justify transferring lower bounds: replacing the outputs of a strategyproof mechanism by pathways incident to $0$ need not preserve strategyproofness. Our lower-bound proofs therefore allow arbitrary endpoints.

\subsection{Deterministic Mechanisms}

\begin{theorem}\label{thm:improve-1}
For every $n\ge2$, no deterministic strategyproof mechanism has a maximum-cost approximation ratio strictly smaller than $2$ in the real-line model.
\end{theorem}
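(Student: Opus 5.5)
The plan is to adapt the two-agent construction from the proof of the $\frac{2}{1+k}$ lower bound (at $k=0$) to the real-line cost \eqref{eq:real-line-cost}. In the obstacle model, the obstacle pinned the far endpoint; here we cannot pin it, so we control the fee $h=|a|$ instead and replace the single perturbation step by an iteration.

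\textbf{Setup.} Suppose a deterministic strategyproof mechanism has ratio $\rho<2$. Put any extra agents at $0$, where their cost is always zero. Fix a large $r$ (say $r=1$) and $d$ small relative to $r$. Agent $1$ is placed at $x_0=r-d$ and agent $2$ at $r$. Every direct-route cost is then at least $r-d$, which is much larger than any cost in play.

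\textbf{Consequences of the ratio.} For a profile with agents at $x<r$, the pathway $(0,\frac{x+r}{2})$ gives $OPT\le\frac{r-x}{2}$. The mechanism therefore has maximum cost at most $\frac{\rho}{2}(r-x)<r-x$. Hence both agents use the pathway, and, writing the output as $(b,h)$ via \eqref{eq:real-line-cost}, we get $x<b<r$ and
\[
(b-x)+h\le\tfrac{\rho}{2}(r-x),\qquad (r-b)+h\le\tfrac{\rho}{2}(r-x).
\]

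\textbf{Iteration.} Let $(b_0,h_0)$ be the output at the initial profile. Move agent $1$ to $x_1=b_0$ and call the new output $(b_1,h_1)$; repeat, setting $x_{i+1}=b_i$. Since $x_i<b_i$ at every step, the sequence $b_i$ is strictly increasing.

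\textbf{Fee monotonicity from strategyproofness.} An agent truly at $x_{i+1}=b_i$ could report $x_i$ and obtain cost $h_i$. Strategyproofness therefore gives $(b_{i+1}-b_i)+h_{i+1}\le h_i$, so
\[
\sum_i (b_{i+1}-b_i)\le h_0\le\tfrac{\rho}{2}d .
\]
Consequently every gap $d_i=r-b_{i-1}$ stays at least $d-\tfrac{\rho}{2}d-h_0$. I will check that this stays bounded below by a positive constant; if the crude bound is too weak, I will rescale $d$ and use the sharper first-step bound $r-b_0\ge d-\tfrac{\rho}{2}d+h_0$.

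\textbf{Contradiction.} The condition at the agent at $r$ gives $r-b_i\le\tfrac{\rho}{2}(r-b_{i-1})-h_i\le \tfrac{\rho}{2}d_i$, that is,
\[
b_i-b_{i-1}\ge\left(1-\tfrac{\rho}{2}\right)d_i.
\]
The right-hand side is bounded below by a fixed positive constant because $\rho<2$ and the gaps $d_i$ stay bounded away from $0$. Summing over $i$ therefore makes $\sum_i(b_i-b_{i-1})$ diverge, contradicting the finite bound $h_0$ from the previous step.

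\textbf{Main obstacle.} The delicate step is the lower bound on the gaps $d_i$, i.e.\ checking that the $b_i$ cannot approach $r$ quickly enough to make the increments summable. I expect the budget inequality from the fee-monotonicity step to handle this, since the $b_i$ can advance by at most $h_0$ in total. It also has to be verified that each iterate remains in the regime where both agents use the pathway; this again follows because all costs are at most $\frac{\rho}{2}d$, which is far below the direct-route costs.
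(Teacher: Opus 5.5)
Your proposal is correct and is essentially the paper's argument. It uses the same iteration that moves the first agent to the previous $b$, the same strategyproofness inequality $b_{i+1}+h_{i+1}\le b_i+h_i$, and the same contradiction between the contraction $r-b_i\le\frac{\rho}{2}(r-b_{i-1})$ and the cap $b_i\le b_0+h_0\le r-(1-\frac{\rho}{2})d$; the paper phrases this as geometric convergence of $b_j$ to $r$ rather than a divergent sum of increments. You should commit to the sharper first-step bound: the crude one can be negative when $\rho>1$, and rescaling $d$ cannot help since everything scales with $d$. The sharper bound gives $r-b_{i-1}\ge r-b_0-h_0\ge(1-\frac{\rho}{2})d$ directly.
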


\begin{proof}
Suppose that such a mechanism has approximation ratio $\rho=2-\delta$, where $0<\delta\le1$. Fix $L>2$, put $r=L+2$, and consider profiles with one agent at $s\in[L,r)$ and all other agents at $r$. The pathway $\left(0,\frac{s+r}{2}\right)$ gives
\[
 OPT\le\frac{r-s}{2},
 \qquad ALG\le\frac{\rho(r-s)}{2}<r-s\le2<L.
\]
Every agent therefore strictly prefers the pathway to the direct route. Write its cost as in \eqref{eq:real-line-cost}, with output parameters $b,h$. The approximation guarantee implies
\[
 \max\{|s-b|,|r-b|\}+h\le\frac{\rho(r-s)}{2},
\]
and hence
\begin{equation}\label{eq:real-mc-endpoints}
 s+\frac{\delta(r-s)}{2}+h
 \le b\le
 r-\frac{\delta(r-s)}{2}-h.
\end{equation}
In particular, $s<b<r$.

Set $s_1=L$. If the output at step $j$ has parameters $b_j,h_j$, define $s_{j+1}=b_j$, changing only the first agent's report. By \eqref{eq:real-mc-endpoints}, this defines another profile in the same interval and gives $s_j<b_j<b_{j+1}<r$. Strategyproofness in both directions yields
\begin{align*}
 b_j-s_j+h_j
 &\le c(s_j;a_{j+1},b_{j+1})
 \le b_{j+1}-s_j+h_{j+1},\\
 b_{j+1}-b_j+h_{j+1}
 &\le c(b_j;a_j,b_j)\le h_j.
\end{align*}
The truthful costs on the left use the pathway, while the costs under deviations are bounded above by the corresponding pathway routes. Thus these inequalities remain valid even if a deviating agent uses the direct route. Together they imply
\[
 b_{j+1}+h_{j+1}=b_j+h_j=b_1+h_1.
\]
At the initial profile, \eqref{eq:real-mc-endpoints} gives $b_1+h_1\le r-\delta$. On the other hand, the same inequality at step $j+1$ gives
\[
 r-b_{j+1}\le\left(1-\frac{\delta}{2}\right)(r-b_j).
\]
Choose a finite $J$ such that
\[
 2\left(1-\frac{\delta}{2}\right)^{J-1}<\delta.
\]
Then $b_J>r-\delta$, whereas the invariant and $h_J\ge0$ imply $b_J\le b_J+h_J=b_1+h_1\le r-\delta$, a contradiction.
\end{proof}

For social cost, \cite{Chan023} gives a lower bound of $\frac{3}{2}$ for deterministic strategyproof mechanisms and an upper bound of $n$. We first improve the strategyproof lower bound to $2$, and then establish an additional lower bound of $n-1$ under group strategyproofness.

\begin{theorem}\label{thm:real-sc-sp}
For every $n\ge2$, no deterministic strategyproof mechanism has a social-cost approximation ratio strictly smaller than $2$ in the real-line model.
\end{theorem}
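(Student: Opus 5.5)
I would adapt the argument of Lemma~\ref{lem:sc-lb-2} at $k=0$ to the real-line model. That lemma gave the bound $2$ for our model with $x=\frac14$. The two facilities there are replaced by the single facility at $0$, with two active agents placed symmetrically on opposite sides. Suppose a deterministic strategyproof mechanism has ratio $\rho<2$. Place one agent at $-x$ and one at $x$ with $x>0$, and put all remaining agents at $0$, where their cost is always zero.

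\emph{Initial profile.} The optimum serves one side with a pathway $(0,\pm x)$, so $OPT=x$. An output from which neither active agent strictly benefits has social cost $2x>\rho x$. Hence some active agent strictly benefits. By \eqref{eq:real-line-cost}, a strictly benefiting agent at a positive location forces $b>0$, and one at a negative location forces $b<0$. Write the cost as $\min\{|y|,|y-b|+h\}$ with $|a|\le|b|$ and $h=|a|$. If the agent at $x$ strictly benefits, then $|x-b|+h<x$ gives $b>0$. The agent at $-x$ would then need $x+b+h<x$, which is impossible. So exactly one active agent benefits. By the reflection symmetry $y\mapsto -y$ (relabel if necessary), assume the agent at $-x$ keeps cost $x$.

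\emph{Continuity sweep.} Move the agent currently at $-x$ to a true location $-t$ for $t\in[x,z]$, where $z$ is large. Lemma~\ref{lem:truthful-cost-continuity} applies verbatim in the real-line model, since each cost is $1$-Lipschitz in location and Lemma~\ref{lemma:partial-gsp} holds. Hence the improvement $u(t)=t-c(t)$ is continuous with $u(x)=0$. Whenever $u(t)>0$, the agent at $x$ cannot benefit, by the same sign argument as above. The mechanism's social cost is then $t+x-u(t)$, while the pathway $(0,-t)$ gives $OPT\le x$. The guarantee therefore forces $u(t)\ge t+x-\rho x\ge(2-\rho)x>0$, a fixed positive gap. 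By continuity, $u\equiv0$ on $[x,z]$.

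\emph{Conclusion.} At $t=z$, the mechanism's cost is $z$ plus a nonnegative term, whereas $OPT\le x$ via the pathway $(0,-z)$. The ratio is at least $z/x$, which exceeds $\rho$ for large $z$, a contradiction. This is simpler than in Lemma~\ref{lem:sc-lb-2} because the domain is unbounded, so no balancing of $x$ is needed.

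The main obstacle is the sign argument showing that both active agents cannot simultaneously strictly benefit for arbitrary endpoints $a,b$, including endpoints on opposite sides. This is handled via \eqref{eq:real-line-cost}. After relabeling, the benefiting side is determined by the sign of $b$. For the opposite agent, the pathway cost is $|y-b|+h\ge|y|$, since $y$ and $b$ have opposite signs. Reports are unrestricted, but this does not affect the argument: the only deviations used are colocated-group deviations inside Lemma~\ref{lem:truthful-cost-continuity}.
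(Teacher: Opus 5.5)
Your proposal is correct and is essentially the paper's proof: the paper normalizes to $x=1$, uses the same sign argument via \eqref{eq:real-line-cost} and the same single-agent continuity sweep with the gap $2-\rho$. The only difference is that it stops the sweep at $t=\rho+1$, where the guarantee forces $w(\rho+1)\ge 1$; this is equivalent to your step of concluding $u\equiv 0$ and then taking $z>\rho x$.
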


\begin{proof}
Suppose that such a mechanism has approximation ratio $1\le\rho<2$. Place one agent at $-1$, one at $1$, and all remaining agents at $0$. The pathway $(0,1)$ gives $OPT\le1$, so $ALG\le\rho<2$. Thus at least one of the two nonzero agents strictly benefits from the pathway. By \eqref{eq:real-line-cost}, agents on opposite sides of $0$ cannot both strictly benefit: if $x>0$ benefits, then $b>0$, and every negative location retains its direct-route cost; the other case is symmetric.

Let $\sigma\in\{-1,1\}$ be the location of the agent that benefits. Keep that agent fixed, and move the other agent from $-\sigma$ to $-\sigma t$, where $t\in[1,\rho+1]$. Write $c(t)$ for the moving agent's truthful cost and $w(t)=t-c(t)$ for its improvement over the direct route. The cost of every fixed pathway is $1$-Lipschitz in the agent's location. Therefore the proof of Lemma~\ref{lem:truthful-cost-continuity}, applied to this single agent, shows that $c$ and $w$ are continuous. Initially $w(1)=0$.

At every such profile, the pathway $(0,-\sigma t)$ gives $OPT\le1$. If $w(t)>0$, the fixed agent at $\sigma$ must incur its direct cost $1$. The approximation guarantee then implies
\[
 c(t)+1\le\rho,
 \qquad w(t)=t-c(t)\ge t-\rho+1\ge2-\rho>0.
\]
Thus $w$ cannot take a value strictly between $0$ and $2-\rho$. However, at $t=\rho+1$, the same approximation guarantee gives $c(t)\le\rho$, hence $w(\rho+1)\ge1$. This contradicts continuity and $w(1)=0$.
\end{proof}

\begin{theorem}\label{thm:improve-2}
For every $n\ge2$, no deterministic group-strategyproof mechanism has a social-cost approximation ratio strictly smaller than $\max\{2,n-1\}$ in the real-line model.
\end{theorem}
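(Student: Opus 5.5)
The bound $2$ needs no new work: GSP implies SP, so Theorem~\ref{thm:real-sc-sp} already applies. It remains to show that no GSP mechanism achieves ratio $\rho<n-1$ when $n\ge4$. I would adapt the construction of Lemma~\ref{lem:sc-lb-1} to the real-line model. Put all agents on the positive side, write outputs in the form \eqref{eq:real-line-cost} with parameters $b,h\ge0$, and use group strategyproofness to sidestep the delicate continuity and approximation bookkeeping.

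\textbf{Step 1: the base profiles.} Let $p=n-1$. Place $p$ agents at a common location $s$ and one agent at $r=1$. Write $g(s)$ for the common truthful cost of the group. Lemma~\ref{lem:truthful-cost-continuity} transfers to the real line verbatim, using partial GSP, which follows from SP. Hence $g$ is $1$-Lipschitz in $s$.

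At $s=1$ the optimum is $0$, so $g(1)=0$. At small $s=\epsilon$, the pathway $(0,1)$ gives $OPT\le p\epsilon$. Since $\rho p\epsilon$ is tiny, the lone agent must use a pathway with $b\approx1$ and $h\approx0$. The group then takes its direct route, so $g(\epsilon)=\epsilon$.

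By continuity, pick $s$ with $g(s)=\alpha$ for a small $\alpha<s$. The group then uses the pathway with $D=|s-b|$, and $D\ge\alpha-h$. Also $h$ is small, since the lone agent's cost is at most $\rho\cdot OPT$ and $OPT\le D$ via $(0,s)$.

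\textbf{Step 2: move the lone agent to $b$.} This is exactly the step of Lemma~\ref{lem:sc-lb-1}. Strategyproofness keeps the new output within $h$ of $(b,h)$. Each group member then pays about $D$, while $OPT\le D$ via the pathway $(0,s)$. This gives the ratio $p(D-O(h))/D$.

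\textbf{Step 3: the main obstacle.} In Lemma~\ref{lem:sc-lb-1}, $h$ was forced below the obstacle width $\eta$. The real-line model has no obstacle, so $h$ need not be small relative to $D$. This is where I would use GSP. I would first try the coalition consisting of all $n$ agents in the new profile. Jointly reporting the profile from Step 1 yields the output $(b,h)$. That output gives the group cost $D+h$ and the lone agent cost $h$.

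GSP says that not every member can strictly gain. I would argue that the new output cannot have both the lone agent's cost below $h$ and the group's cost below $D+h$. Combined with the approximation requirement $\le\rho D$, this should force the lone agent to be served at cost near $0$ at $b$. The group's cost is then at least $D$ up to a vanishing term. Letting the parameters shrink gives $pD\le\rho D$, a contradiction with $\rho<p$.

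If the full coalition is too weak, the fallback is the coalition of the $p$ colocated group agents alone. They jointly deviate to the location $b$. All $n$ agents are then at $b$, so a near-zero output is forced. The group members' true cost under that output is about $D$. This bounds $g(s)$ from above, and I would seek the contradiction by comparing that bound with $g(s)=\alpha$ through a suitable choice of $\alpha$ relative to $D$. Pinning down which coalition makes the estimate go through is the step I expect to be hardest.
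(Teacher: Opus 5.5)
Your Steps 1 and 2 match the paper's skeleton, and you correctly locate the crux: without an obstacle nothing forces $h$ to be small relative to $D=|s-b|$. Since $D+h=g(s)=\alpha$, the estimate $p(D-O(h))/D$ can be vacuous. Step 3 does not close this gap.

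In your first attempt, the coalition of all $n$ agents at the new profile misreports the Step-1 profile and obtains $(b,h)$. GSP only says that \emph{not every} member strictly prefers $(b,h)$ to the truthful output. So either the moved agent's new cost is at most $h$, or the group's new cost is at most $D+h$. That is an upper bound for some member, not the lower bound on $ALG$ you need. Its first alternative already follows from SP. Your stated conclusion, that the new output cannot make both costs lower, reverses the direction of GSP.

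Your fallback coalition consists of colocated agents, so it is covered by partial GSP, which Lemma~\ref{lemma:partial-gsp} derives from SP. It therefore cannot give more than SP does, and the paper notes that the $n-1$ bound under SP alone is open for $n\ge4$. It also yields an upper bound $D$ on the group's cost, again the wrong direction. A smaller slip: at the Step-1 profile the pathway $(0,s)$ gives $OPT\le r-s$, not $OPT\le D$.

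The missing idea is to apply GSP at the Step-1 profile itself, with a coalition spanning two locations, to show $h=0$ exactly. First note that every agent strictly uses the pathway there. The group does because $g(s)=\alpha<s$, which also gives $b>0$. The lone agent does because, by reporting $s$, it would make all reports coincide; that forces the zero-cost pathway $(0,s)$, so SP bounds its cost by $r-s<r$.

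Now suppose $h>0$, and let all $n$ agents jointly report $b$. The zero optimum forces the pathway $(0,b)$, under which each agent pays at most $|x_i-b|<|x_i-b|+h$. Every member strictly gains, contradicting GSP, so $h=0$ and $|s-b|=\alpha$.

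With $h=0$ your Step 2 becomes exact. After the lone agent moves to $b$, it can obtain cost $0$ by reporting $r$, so SP forces the new output to be $(0,b)$. Each group member then pays $\min\{s,|s-b|\}=\alpha$, so $ALG=p\alpha$. The pathway $(0,s)$ gives $OPT\le\alpha$, and $ALG=p\alpha>\rho\alpha\ge\rho\,OPT$.
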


\begin{proof}
The lower bound of $2$ follows from Theorem~\ref{thm:real-sc-sp}. It remains to prove the bound of $n-1$ for $n\ge4$. Let $p=n-1\ge3$ and suppose that a group-strategyproof mechanism has approximation ratio $1\le\rho<p$. Fix $r>0$ and put
\[
 \epsilon=\frac{r}{4\rho p},\qquad \alpha=\frac{\epsilon}{2}.
\]
Consider profiles with $p$ agents at $s\in[\epsilon,r]$ and one agent at $r$. Write $g(s)$ for the group's common truthful cost. Lemmas~\ref{lemma:partial-gsp} and~\ref{lem:truthful-cost-continuity} also apply on the real line: their proofs use only strategyproofness, identical costs at identical locations, and the fact that the cost of each fixed pathway is $1$-Lipschitz. In particular, $g$ is continuous.

At $s=r$, the pathway $(0,r)$ gives zero social cost, so $g(r)=0$. At $s=\epsilon$, the same pathway gives $OPT\le p\epsilon$, and hence $ALG\le\rho p\epsilon=\frac{r}{4}$. The agent at $r$ must use the pathway. In the notation of \eqref{eq:real-line-cost}, this implies
\[
 |r-b|+h\le\frac{r}{4},\qquad b\ge\frac{3r}{4}.
\]
Since $\epsilon\le\frac{r}{8}$, every agent at $\epsilon$ uses the direct route, giving $g(\epsilon)=\epsilon$. By continuity, there is an $s\in(\epsilon,r)$ with $g(s)=\alpha<s$.

Fix this profile and its output $(a,b)$, labeled as in \eqref{eq:real-line-cost}. The group strictly benefits from the pathway. The remaining agent can report $s$, in which case all reports coincide at $s>0$ and a finite approximation ratio forces the pathway $(0,s)$. Indeed, zero cost at $s>0$ requires $|s-b|+h=0$, hence $b=s$ and $h=|a|=0$. Strategyproofness therefore bounds its truthful cost by $r-s<r$. Thus all agents strictly prefer the pathway, and their costs are $|x_i-b|+h$.

We claim that $h=0$. If $h>0$, all agents could jointly report $b$. Here $b>0$, since agents at positive locations strictly benefit from the original pathway. At the common report $b$, the zero optimum forces the pathway $(0,b)$. Every agent's cost would then be at most $|x_i-b|$, strictly below its original cost $|x_i-b|+h$. This contradicts group strategyproofness, proving the claim. It follows that $|s-b|=\alpha$.

Now change only the remaining agent's location from $r$ to $b$. By reporting $r$ at the new profile, this agent could obtain the old pathway $(0,b)$ and incur zero cost. Strategyproofness forces its new truthful cost to be zero as well, so the new pathway must have endpoints $0$ and $b$. Each group member then incurs cost $\min\{s,|s-b|\}=\alpha$, and \(ALG=p\alpha\).
The feasible pathway $(0,s)$ gives $OPT\le|b-s|=\alpha$. Therefore $ALG\ge p\alpha>\rho\,OPT$, a contradiction.
\end{proof}

% \paragraph{The strategyproof case.}
For $n\ge4$, the $n-1$ lower bound for arbitrary deterministic strategyproof mechanisms remains unproved here; Theorem~\ref{thm:real-sc-sp} establishes a lower bound of $2$, which also proves the $n-1$ bound when $n=3$. The preceding proof uses group strategyproofness exactly once, to eliminate the positive term $h$ when all agents use the pathway. The deviating coalition in that step contains agents at two different locations, so partial group strategyproofness cannot justify it. Nor does the existence of an optimal pathway of the form $(0,b)$ imply that a strategyproof mechanism must select such a pathway. If a mechanism is required to choose a pathway incident to $0$ on every profile, then $h=0$ holds directly and the same proof establishes the $n-1$ lower bound under strategyproofness alone.

\subsection{Randomized Mechanisms}

For maximum cost, the randomized lower bound of $\frac32$ was already established in \cite{Chan023}; we retain it. The deterministic $2$-approximation discussed above also gives a randomized upper bound of $2$. For social cost, we sharpen the analysis of the same proportional mechanism used in \cite{Chan023}, reducing its guarantee from $6$ to $3$, and strengthen the randomized lower bound.

\begin{theorem}\label{thm:real-random-sc-upper}
In the real-line model, the mechanism that selects agent $j$ with probability $\frac{|x_j|}{\sum_i|x_i|}$ and outputs $(0,x_j)$ is strategyproof in expectation and has social-cost approximation ratio $3$. This factor is tight for this mechanism as $n$ grows. If all reports are zero, it outputs $(0,0)$.
\end{theorem}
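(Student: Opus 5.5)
The proof has three parts: strategyproofness in expectation, the upper bound of $3$ via Lemma~\ref{lem:proportional-pair}, and a tightness instance. For each realized output $(0,x_j)$, relation~\eqref{eq:real-line-cost} with $h=0$ shows that an agent at $x$ pays $\min\{|x|,|x-x_j|\}$.

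\emph{Strategyproofness.} I would repeat the argument of Theorem~\ref{thm:power-sp} with $k=0$, so $\theta=1$ and the weights are $w_i=|x_i|$. The difference is that a report may now be any real $r$, including one of the opposite sign. Fix agent $i$ at $x$ with $t=|x|>0$; if $t=0$ its cost is always $0$. Let $S=\sum_{j\ne i}|x_j|$ and $A=\sum_{j\ne i}|x_j|c_j$, where $c_j\in[0,t]$ is $i$'s cost when $j$ is selected. Here $m=0$, since reporting $x$ exactly yields cost $0$ when $i$ is selected. The truthful expected cost is $A/(S+t)$. A report $r$ has weight $w'=|r|$ and, when selected, cost $e=\min\{t,|x-r|\}$, so the deviating cost is $(A+|r|e)/(S+|r|)$. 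The needed inequality reduces, as in that proof, to $|r|e(S+t)\ge(|r|-t)A$. If $|r|\le t$ this is immediate. If $|r|>t$, we have $|x-r|\ge|r|-t$, so $e\ge\min\{t,|r|-t\}\ge t(|r|-t)/|r|$. Combined with $A\le St\le (S+t)t$, this closes the case. If all reports vanish after the deviation, the output $(0,0)$ gives cost $t$, which is never better than truthful. I would also note that the truthful expected cost is at most $t$.

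\emph{Upper bound.} Take an optimal pathway of the form $(0,z^*)$; this is justified by the observation before Section~\ref{sec:improve}'s subsections that replacing $(a,b)$ by $(0,b)$ weakly decreases every cost. The optimal costs are then $c_i^*=\min\{|x_i|,|x_i-z^*|\}$. Apply Lemma~\ref{lem:proportional-pair} with $z_i=x_i$, sum $t_jc_{ij}\le t_jc_i^*+2t_ic_j^*$ over all $i,j$, and divide by $T=\sum_i|x_i|$. This gives $\mathbb E[SC]\le 3\,OPT$. If $T=0$, everyone has cost $0$.

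\emph{Tightness.} I would mirror the example in Theorem~\ref{thm:proportional-three}: $m$ agents at $1$ and one agent at $2$. The pathway $(0,1)$ gives $OPT=1$. Selecting an agent at $1$ has probability $m/(m+2)$ and social cost $1$; selecting the agent at $2$ has probability $2/(m+2)$ and social cost $m$. The ratio is $3m/(m+2)\to3$.

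The main obstacle is the strategyproofness step, specifically handling unrestricted reports across $0$ and verifying that the cap $e\le t$ does not break the inequality. The lower bound $e\ge t(|r|-t)/|r|$ is what I would check first.
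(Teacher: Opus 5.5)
Your proposal is correct and follows the paper's proof essentially step for step. The shared steps are: the same decomposition into $S$ and $A$ with the numerator condition $|r|e(S+t)\ge(|r|-t)A$, settled for $|r|>t$ by the reverse triangle inequality (which also covers reports on the opposite side of $0$) together with $A\le St$; then summation of Lemma~\ref{lem:proportional-pair} over all ordered pairs for the factor $3$; and the same tightness instance of $m$ agents at one point and a single agent at twice that distance.
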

\begin{proof}
Fix a true location $x$ and put $t=|x|$. Let
\[
 S=\sum_{j\ne i}|x_j|,\qquad
 A=\sum_{j\ne i}|x_j|\min\{t,|x-x_j|\}.
\]
Then $0\le A\le St$, and the truthful expected cost is $\frac{A}{S+t}$ whenever $S+t>0$. An arbitrary real-line report $r$, including one of the opposite sign, gives expected cost
\[
 \frac{A+|r|\min\{t,|x-r|\}}{S+|r|}.
\]
Although the probabilities of selecting other agents change, their locations and unnormalized weights remain fixed. This is why the same numerator contribution $A$ appears in both expressions. When both denominators are positive, put $u=|r|$ and $d_r=\min\{t,|x-r|\}$. The deviating expected cost minus the truthful one equals
\[
 \frac{u(S+t)d_r-(u-t)A}{(S+u)(S+t)}.
\]
If $u\le t$, the numerator is nonnegative. If $u>t$, the reverse triangle inequality gives $|x-r|\ge u-t$, and therefore
\[
 ud_r\ge u\min\{t,u-t\}\ge t(u-t)
 \ge\frac{A(u-t)}{S+t}.
\]
For the middle inequality, if $u-t\le t$ use $u\ge t$; otherwise use $u\ge u-t$. This again makes the numerator of the cost difference nonnegative. The argument depends only on absolute values and thus also covers reports on the opposite side of $0$. If $S+u=0$, the fallback gives true cost $t$, which cannot improve on truthfulness; if $S+t=0$, the agent is truly at $0$ and always has zero cost.

For approximation, an optimal pathway may be chosen as $(0,z^*)$. Lemma~\ref{lem:proportional-pair}, with $z_i=x_i$, and summation as in Theorem~\ref{thm:proportional-three} give $\mathbb E[SC]\le3OPT$. The instance with $m$ agents at $a>0$ and one at $2a$ again has ratio $\frac{3m}{m+2}\to3$.
\end{proof}

\begin{theorem}\label{thm:real-random-sc-lower}
For $n\ge7$, every randomized strategyproof mechanism in the real-line model has social-cost approximation ratio at least $\frac{285}{263}$.
\end{theorem}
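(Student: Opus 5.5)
We transfer the construction of Theorem~\ref{thm:random-sc-lower} to the real line, using the certificate of Lemma~\ref{lem:random-sc-certificate} directly. Fix $\lambda>0$. In profile $P$, four agents sit at $\frac{7\lambda}{10}$ and three at $2\lambda$. In profile $Q$, the four agents move to $\lambda$ and the other three stay at $2\lambda$. Any extra agents are placed at $0$, where every pathway gives them zero cost. For an arbitrary output $(a,b)$, we relabel its endpoints as in \eqref{eq:real-line-cost} so that $|a|\le|b|$, and set $h=|a|\ge0$. Every agent at a positive location $t$ then has cost $\min\{t,|t-b|+h\}$. This is exactly the form $c_t$ in Lemma~\ref{lem:random-sc-certificate}, with center $z=b/\lambda\in\mathbb R$ and fee $h/\lambda$, after scaling by $\lambda$. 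The lemma allows an arbitrary real center and an arbitrary nonnegative fee, so no restriction on the mechanism's outputs is needed.

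Scaling the two certificate inequalities by $\lambda$ and taking expectations under the output distributions at $P$ and $Q$ gives
\[
7\mathbb E[SC(P)]+11\mathbb E_P[c_{7\lambda/10}]\ge\tfrac{273\lambda}{10},
\qquad
11\mathbb E[SC(Q)]-11\mathbb E_Q[c_{7\lambda/10}]\ge\tfrac{297\lambda}{10}.
\]
Next we need $\mathbb E_P[c_{7\lambda/10}]\le\mathbb E_Q[c_{7\lambda/10}]$. This is the expectation version of partial group strategyproofness (Lemma~\ref{lemma:partial-gsp}) for the four colocated agents at $\frac{7\lambda}{10}$ in $P$, who could jointly report $\lambda$ and thereby produce $Q$. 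The one-at-a-time argument carries over to the real line: each intermediate deviation is a single agent truly at $\frac{7\lambda}{10}$ misreporting, and the cost at a fixed location is evaluated identically for all four agents. Adding the two inequalities and discarding the nonpositive term $11(\mathbb E_P-\mathbb E_Q)$ gives $7\mathbb E[SC(P)]+11\mathbb E[SC(Q)]\ge57\lambda$.

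For the optimal costs, the pathways $(0,2\lambda)$ and $(0,\lambda)$ give, since $k=0$ here, $OPT(P)\le4\cdot\frac{7\lambda}{10}=\frac{14\lambda}{5}$ and $OPT(Q)\le3\lambda$. A ratio $\rho$ therefore requires $57\lambda\le\rho\bigl(\frac{98\lambda}{5}+33\lambda\bigr)=\rho\frac{263\lambda}{5}$, which gives $\rho\ge\frac{285}{263}$. The same computation appears in the proof of Theorem~\ref{thm:random-sc-lower} with $k=0$.

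The main step to verify carefully is the cost reduction: after relabeling, the agents' costs must have exactly the form used in the certificate even though the mechanism may output arbitrary endpoints, including negative $b$ or a pathway whose near endpoint is not $0$. The inequality $|x-a|+|b|\ge|x|$ justifies dropping the route through $a$. For $b\le0$, we have $|t-b|\ge t$ when $t>0$, so the formula simply returns the direct cost. This is consistent with the lemma, which is stated for all $z\in\mathbb R$. Deviations to arbitrary real reports are irrelevant here, because we only use the single joint deviation from $P$ to $Q$.
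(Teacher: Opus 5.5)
Your proposal is correct and follows the paper's proof essentially step for step. You relabel via \eqref{eq:real-line-cost} so that every positive-location cost is $\min\{t,|t-b|+h\}$ with $h\ge0$, apply Lemma~\ref{lem:random-sc-certificate} to each realized pathway, combine the two inequalities using partial group strategyproofness in expectation for the four colocated agents, and compare with the pathways $(0,2\lambda)$ and $(0,\lambda)$. The only difference is your scaling parameter $\lambda$, which is harmless but unnecessary: the real-line model has no obstacle, so the paper works directly with the unscaled profiles $P=(\frac{7}{10}^{\times4},2^{\times3})$ and $Q=(1^{\times4},2^{\times3})$, which is your construction with $\lambda=1$.
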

\begin{proof}
Use profiles $P=(\frac{7}{10},\frac{7}{10},\frac{7}{10},\frac{7}{10},2,2,2)$ and $Q=(1,1,1,1,2,2,2)$, padding with agents at $0$ if necessary. For every realized pathway, \eqref{eq:real-line-cost} expresses each positive-location cost as $\min\{t,|t-b|+h\}$ with $h\ge0$. Lemma~\ref{lem:random-sc-certificate} therefore applies to arbitrary endpoints, including pathways not incident to $0$. Partial group strategyproofness in expectation yields, exactly as in \eqref{eq:random-sc-weighted},
\[
 7\mathbb E[SC(P)]+11\mathbb E[SC(Q)]\ge57.
\]
Since $OPT(P)=\frac{14}{5}$ and $OPT(Q)=3$, it follows that
\[
 \rho\ge\frac{57}{7\cdot\frac{14}{5}+11\cdot3}
 =\frac{285}{263}.
\]
\end{proof}

Theorem~\ref{thm:real-random-sc-lower} raises the randomized social-cost lower bound of Chan and Wang~\cite{Chan023} from $1.02$ to $\frac{285}{263}$. The proof applies to arbitrary pathway endpoints through the nonnegative fee term $h$, rather than transferring a lower bound from the fixed-facility model. The upper bound $3$ is tight for the proportional mechanism, whereas the optimal ratio over all randomized SP mechanisms remains between $\frac{285}{263}$ and $3$ for $n\ge7$.

\section{Conclusion}

We studied deterministic and randomized strategyproof mechanisms for constructing a pathway between two regions separated by an obstacle. For deterministic mechanisms, the maximum-cost approximation ratio is exactly $\frac{2}{1+k}$, and the social-cost bounds are linear in the population size when $k=0$. Under strategyproofness in expectation, the power-proportional mechanism gives a social-cost guarantee independent of both $n$ and $k$, with a sharper factor $3$ when the pathway has zero traversal cost. We also obtained randomized lower bounds for both objectives.

For the real-line pathway model of \cite{Chan023}, our deterministic maximum-cost lower bound of $2$ matches the upper bound obtained from \cite{cocoa2024mechanism}. We also strengthen the deterministic social-cost lower bounds under SP and GSP. For randomized social cost, we sharpen the guarantee of Chan and Wang's proportional mechanism from $6$ to $3$ and raise their lower bound from $1.02$ to $\frac{285}{263}$ for $n\ge7$.

Several gaps remain. For randomized maximum cost in the obstacle model, improving the upper bound $\frac{2}{1+k}$ or strengthening the lower bound $\frac{3+2k}{2+3k}$ is open. For randomized social cost, a sharper analysis of power-proportional sampling may improve the positive-$k$ guarantee; stronger lower bounds are also needed, especially away from $k=0$. Finally, the $n-1$ social-cost lower bound for arbitrary deterministic SP mechanisms in the real-line model remains unresolved for $n\ge4$.

\bibliographystyle{plain}
\bibliography{myreferences}

\end{document}